\Crefname{figure}{Fig.}{Figs.}
\newtheorem{theorem}{Theorem}
\newcommand{\RomanNumeralCaps}[1]
\renewcommand{\l}{\left}
\renewcommand{\r}{\right}
\definecolor{Purpleee}{RGB}{91,40,224}
\newcommand{\bfv}[1]{{\bf #1}}
\newcommand{\bfvK}{\bfv{K}}
\newcommand{\bfvM}{\bfv{M}}
\newcommand{\bfvF}{\bfv{F}}
\newcommand{\bfvg}{\bfv{g}}
\newcommand{\Hil}{\mathcal{H}}
\newcommand{\Hilg}{\mathcal{H}_\bfvg}
\newcommand{\Hilgbar}{\mathcal{H}_{\bar{\bfvg}}}
\newcommand{\MM}{{\mathcal{M}}}
\newcommand{\MMbar}{{\bar{\mathcal{M}}}}
\title{Mori--Zwanzig mode decomposition: Comparison with time-delay embeddings}
\author{Michael Woodward$^{1}$\footnote{Email: mwoodward@lanl.gov}, Yen Ting Lin$^{1}$, Yifeng Tian$^{1}$, Christoph Hader$^{2}$,\\ Hermann Fasel$^{2}$, Daniel Livescu$^{1}$}
\affil{1. Computer, Computational and Statistical Sciences Division, Los Alamos National Laboratory, Los Alamos, NM 87544}
\affil{2. Department of Aerospace and Mechanical Engineering, University of Arizona, Tucson, AZ 85721, USA}
\begin{document}
\maketitle

\begin{abstract}

We introduce the Mori-Zwanzig Mode Decomposition (MZMD), a novel data-driven technique for efficient modal analysis of and reduced-order modeling of large-scale spatio-temporal dynamical systems. MZMD represents an extension of Dynamic Mode Decomposition (DMD) by providing an approximate closure term with MZ memory kernels accounting for how the unresolved modes of DMD interact with the resolved modes, thus addressing limitations when the state-space observables do not form a Koopman-invariant subspace. Leveraging the Mori-Zwanzig (MZ) formalism, MZMD identifies the modes and spectrum of the discrete-time Generalized Langevin Equation (GLE); an integro-differential equation that governs the dynamics of selected observables and their memory-dependent coupling with the unresolved degrees of freedom. This feature fundamentally distinguishes MZMD from time-delay embedding methods, such as Higher-Order DMD (HODMD). In this work, we derive and analyze MZMD and compare it with DMD and HODMD, using two exemplary Direct Numerical Simulation (DNS) datasets: a 2D flow over a cylinder (as validation) and laminar-turbulent boundary-layer transition over a flared cone at Mach 6.  We demonstrate that MZMD, via the addition of MZ memory terms, improves the resolution of spatio-temporal structures within the transitional/turbulent regime by the introduction of transient and periodic modes (not captured by DMD), which contain features that arise due to nonlinear mechanisms, such as the generation of the so-called hot streaks on the surface of the flared cone. Our results demonstrate that MZMD serves as an efficient generalization of DMD (reducing to DMD in the absence of memory), improves stability, and exhibits greater robustness and resistance to overfitting compared to HODMD.

\end{abstract}

%=================================== INTRO =============================================

% \input{./sections/intro.tex}

\section{Introduction and motivation}

%P1/P2 Broader Intro
The analysis and understanding of fluid systems is crucial for many scientific fields and is ubiquitous in engineering design and control. Fluid systems exhibit complex dynamics which can be observed in a wide range of natural phenomena, such as transitional and turbulent flows \citep{pope_2011}. Nonlinearities in these systems are commonplace, posing a core challenge for analytic treatment due to strong coupling across a broad range of spatial and temporal scales. Analytical solutions, such as those to the Navier--Stokes equations, exist only in a handful of simplified cases \citep{batchelor_2000}. Therefore, in an attempt to understand fluid systems, one often relies on data collected from computational fluid dynamics and experimental measurements. As computing power continues to grow, so too does the volume of high-dimensional nonlinear dynamical systems data, increasing the demand for robust and scalable data-driven analysis techniques. 

Various techniques have been developed to analyze this vast amount of high-dimensional nonlinear dynamical systems data for uncovering and interpreting the underlying dynamics \citep{bonnet_review_2001}. Among these, modal decomposition methods have emerged as powerful tools for identifying and analyzing coherent flow features or structures present in the flow \citep{taira17_modal_analysis}. Extracting these dominant coherent structures enables a deeper understanding of the underlying physical mechanisms driving a phenomena of interest, such as laminar-turbulent transition \citep{pope_2011, lumley_book_2012}. These methods take advantage of the fact that common flow features emerge over a range of fluid flows \citep{taira_p2}. Furthermore, modal decomposition techniques can also serve as the basis for reduced-order models for prediction and control \citep{lumley_book_2012, dmd_book}.

Dynamic mode decomposition (DMD) is one such data-driven technique that has achieved wide ranging success \citep{schmid_2010}. DMD extracts large-scale spatio-temporal structures from data by computing an approximate eigendecomposition of the best fit linear operator that maps the discrete time evolution of state variables forward in time. It was later shown that DMD has connections to the Koopman operator \citep{rowley_2009}. The Koopman operator \citep{koopman_hamiltonian_1931, koopman_dynamical_1932} trades the nonlinearities of a finite dimensional dynamical system with an infinite-dimensional linear operator acting on a Hilbert space of observables. DMD attempts to approximate potentially infinitely many Koopman modes and eigenvalues by those of an approximate Koopman operator in a finite subspace spanned by a set of state variables \citep{rowley_2009}. Therefore, DMD has limitations in extracting or capturing nonlinear dynamics systems due to this observable selection \citep{taira_p2}, and may yield completely spurious results \citep{wu_dmd_issues_2021}. Consequently, DMD may miss critical nonlinear interactions, as well as transient and intermittent dynamics, yielding incomplete or misleading results in highly nonlinear regimes.

In addition to the challenges of using DMD for strongly nonlinear systems, underlying the approximate Koopman methods, is the assumption that a finite dimensional Koopman invariant subspace exists. In practice, however, finding a finite Koopman invariant subspace can be challenging, or even intractable. Therefore, a key issue with approximate Koopman learning techniques, such as DMD, is that of finding a finite set of observables that approximately closes the dynamics reasonably well. Thus, a key challenge of DMD is to resolve this closure problem, accounting for how the unresolved modes of the dynamics interact with the resolved modes. 

Variants of DMD have emerged to address the issues mentioned above by increasing the dimension of the functional basis that these linear models can represent. For example, Extended DMD (EDMD) \citep{williams_2015_edmd} includes more general observables, such as polynomial functions of the states alongside the states themselves. However, this approach becomes more computationally expensive for high-dimensional dynamical systems, can require \textit{a priori} knowledge of the system, and may lead to over-fitting. Attempts have been made to avoid this computational cost by using Kernel methods \citep{matt_williams_kernel_edmd}, that scale independently of the number higher order observable selections, however can be more prone to over-fitting.  Other methods have attempted to combine nonlinear auto-encoders with linear least squares to simultaneously identify invariant Koopman modes and operator \cite{lusch_deep_2018}. Nevertheless, in high-dimensional nonlinear dynamical systems, discovering a finite-dimensional invariant subspace can be exceedingly challenging \citep{Brunton_2016}. 

Other extensions have emerged that incorporate past history into the set of observables selection. Approaches in this direction are often inspired by Takens' time-delay embeddings \citep{takens81}, which when coupled with DMD results in methods such as the Hankel DMD \citep{hankel_dmd, kamb2020timedelay} and Higher-Order DMD (HODMD) \citep{hodmd_2017}, each shows promise in addressing the aforementioned weaknesses of DMD \citep{taira_p2}. HODMD, the extension of DMD most similar to MZMD, incorporates time delay embedding within the space of observables that are selected by using SVD (as a linear auto-encoder). In this work, HODMD is used for a comparison of MZ memory with time-delay embeddings. In the context of data-driven methods, using time-delay embeddings can result in very large concatenation of the snapshot matrices involved, such as with the Hankel DMD formulation when the delay dimension $d$ is large. While these methods do not necessitate \emph{a priori} selection of more general observables as in EDMD, they encounter similar issues for modal analysis techniques, such as significantly increased computational costs for high-dimensional dynamical systems and the potential for over-fitting \citep{cyoung_23, dhadphale_model_2024}.

The Mori--Zwanzig (MZ) formalism, originally developed in statistical mechanics \citep{mori1965transport, zwanzig1973nonlinear}, offers an alternative and complementary framework to the Koopman theory for constructing closures for reduced-order models \citep{lin2021datadriven_full}. The MZ framework provides a general procedure that includes interpretable memory kernels which can provide closures to the approximate Koopman techniques without relying on time-delay embeddings \citep{lin2021datadriven_full}. The MZ formalism provides a mathematical framework for constructing closed, non-Markovian reduced-order models of resolved variables from high-dimensional dynamical systems. This is achieved by utilizing a well defined projection operator to decompose any function of the full state space into a function of the resolved variables (observables) and its orthogonal complement (which depends on both the resolved and unresolved variables). The MZ procedure then derives the Generalized Langevin Equation (GLE) from the associated Liouville equation acting on the predefined observables. The GLE consists of a Markovian term, a memory term, and an orthogonal dynamics term. The GLE is a closed and exact system of evolutionary equations for the observables where the effects due to the unresolved dynamics are captured in the memory kernel and orthogonal dynamics. Therefore, the MZ framework offers a unique perspective on the interpretation of memory effects, which can provide closures to the approximate Koopman techniques \citep{lin2021datadriven_full}.

Several recent studies \citep{lin2021datadriven_full, maeyama2020extracting, meyer2021numerical, meyer2020non, mori2007dynamic}, established that with Mori's linear projection operator, it is possible to adopt a data-driven approach to learn the MZ operators. In addition, it was shown \citep{lin2021datadriven_full} that these methods provide higher-order and memory-dependent corrections to existing data-driven learning of the approximate Koopman operators. Notably, the recent developments by Lin et al. \citep{lin22_nn_mz, lin2021datadriven_full} have led to general data-driven techniques to approximate the MZ operators from data, with promising applications to turbulent flows \citep{tian_2021}, and transitional flows \citep{woodward_scitech, woodward_aviation}. In another line of related works, NARMAX (Nonlinear Auto-Regressive Moving Average with eXogenous input) is a data-driven method from system identification theory that includes memory, however, in general NARMAX is not equivalent to MZ, unless the Wiener projection operator is used \cite{Lin2021} (which can be used to derive a common variant of NARMAX). In particular, NARMAX methods typically assume that the noise term is Gaussian, which is not the case in general framework for MZ \citep{mori1965transport, Kubo_1966}. Other studies have considered data-driven learning of MZ operators as well, such as in the work by \citet{Curtis_2021}, which develops a Memory-Dependent Dynamic Mode Decomposition. This method derives a data-driven learning method that requires solving a nonlinear optimization problem, which is distinct from the convex optimization used in this manuscript. Additionally, data-driven MZ methods have been explored for Fluid dynamics applications, such as for discovering dynamic subgrid scale Large Eddy Simulation (LES) models using ‘dynamic-MZ-$\tau$’ model \citep{Parish2017}.

The method presented here, MZMD, represents a data-driven technique to perform modal analysis of the Generalized Langevin Equation (GLE) derived from the MZ formalism with Mori's linear projector. MZMD achieves this modal decomposition by efficiently approximating the modes and spectrum of the full state space GLE, where MZ memory is accounting for the fact that the state variables are not a Koopman invariant subspace. With the Mori's linear projection, the MZ framework offers an attractive formulation of reduced order models that can address the aforementioned issues of approximate Koopman techniques such as EDMD, Hankle DMD and HODMD, as shown later in this work. For example, since the MZ framework does not rely on time-delay embedding, the MZMD operators can be learned more efficiently. We demonstrate that MZMD efficiently generalizes DMD, where the memory terms can be seen as corrections to the DMD modes that can improve the ability to predict nonlinearities from the standard linear observable selections of DMD. Additionally, MZMD is less prone to the overfitting observed from HODMD.

The rest of the manuscript is organized as follows. For completeness and to fix notations, Sections \ref{sec:koopman_background} and \ref{sec:hodmd} provide background and general introductions to the Koopman operator, DMD, and HODMD. In Section \ref{sec:mz_background_theory} we summarize the Mori-Zwanzig formalism and data-driven MZ method developed in \citet{lin2021datadriven_full}. In Section \ref{sec:mzmd} present the MZMD method for performing modal analysis of the GLE, which can be interpreted as approximating a closure for DMD. Section \ref{sec:2d_cylinder} then provides numerical validation experiments comparing DMD, HODMD, and MZMD to the 2D flow over a cylinder at Reynolds number $Re=600$. Then in section \ref{sec:results} we compare the DMD, HODMD and MZMD approaches for wall pressure disturbance data obtained from a ``natural'' transition simulation for a flared cone at Mach 6 \citep{hader_2018}. Finally, in section \ref{sec:conclusion} we provide concluding remarks and a general discussion of comparing the differences between DMD, HODMD and MZMD.

\section{Background}
\label{sec:background}

For completeness and to fix notations, in this section we review the Koopman operator, DMD, Higher-Order DMD, and the MZ formalism. The Mori--Zwanzig mode decomposition method, the main contribution of this work, is developed in the next section.

\subsection{Koopman Operator and DMD}
\label{sec:koopman_background}
Two equivalent ways to describe autonomous dynamical systems are the state-(phase-)space description, which prescribes evolutionary equations of state-space variables $\bm x \in \mathbb{R}^N$ (or ``physical variables'')  and the Koopman description (see \citet{rowley_2009}, and \citet{schmid_2010} for a more detailed analysis). The state- or phase-space description can be stated as
\begin{equation}\label{eq:state_dynamics}
    \cfrac{{\rm d} \bm x(t)}{{\rm d} t} = \bm f (\bm x), \quad \bm x(0) = \bm x_0,
\end{equation}
where $\bm f$  is assumed to be a Lipschitz continuous function so that unique solutions $\bm x(t)$ of Equation (\ref{eq:state_dynamics}) exist. Autonomous dynamical systems of the form (\ref{eq:state_dynamics}) can describe a wide range of systems, from nonlinear systems of ODEs to spatial discretizations of PDEs, and typically result from physical conservation laws. The Koopman operator, introduced by \citet{koopman_hamiltonian_1931}, provides an alternative description of dynamical systems \citep{koopman_dynamical_1932}. The Koopman description constructs a linear evolution operator acting on the infinite dimensional Hilbert space  $\mathcal{H}$ of observables $g$. Observables $g \in \mathcal{H}$ (usually square-integrable functions $L^2$), are scalar-valued functions of the states; $g: \mathbb{R}^N \rightarrow \mathbb{R}$. The Koopman operator, $\mathcal{K}_t: \mathcal{H} \rightarrow \mathcal{H}$, is then defined using the composition operator $\circ$ as follows:
\begin{equation}\label{eq:koopman_dynamics}
    (\mathcal{K}_t g) (\bm x_0) \equiv g \circ \bm x(t, \bm x_0) = g(\bm x(t, \bm x_0)), \quad \forall g \in \mathcal{H}, \forall \bm x_0 \in \mathbb{R}^N.
\end{equation}

The linearity of the Koopman operator is inherited from the composition operator, however it acts on the infinite dimensional Hilbert space $\mathcal{H}$ of observables. Therefore, the Koopman approach trades the non-linearity of $\bm f$ acting on a finite dimensional space $\mathbb{R}^N$ with linearity of $\mathcal{K}_t$ acting on a infinite dimensional space $\mathcal{H}$. %Equation (\ref{eq:koopman_dynamics}) illustrates the dual representations of the dynamics. On the left-hand side, we have the Koopman picture, analogous to the Heisenberg picture in quantum mechanics, where the observable evolves forward in time by a transformation with $\mathcal{K}_t$, which is then evaluated at the fixed state $\bm x_0$. Conversely, the right-hand side represents the Perron-Frobenius picture, analogous to Schr\"odinger's picture in quantum mechanics, where the state evolves forward in time ($\bm x (t; \bm x_0)$) then are evaluated by a fixed observable g. 
The linear Koopman operator $\mathcal{K}_t$ can be characterized by its eigenvalues and eigenfunction. A function $\psi:\mathbb{R}^N\rightarrow \mathbb{R}$ (or complex valued) is defined as a Koopman eigenfunction if it satisfies $\l(\mathcal{K}_t  \bm \psi\r)= e^{\lambda t} \bm \psi$. The infinitesimal generator of $\mathcal{K}_t$, $\lim_{t\downarrow 0}\l(\mathcal{K}_t-I\r)/t$ is the Liouville operator $\mathcal{L}:=\sum_{i=1}^N f_i \l(\bfv{\xi}\r) \partial_{\xi_i}$, with a dummy variable $\bfv{\xi} \in \mathbb{R}^N$, which is the Lie derivative with respect to the flow field $\bm{f}:\mathbb{R}^N \rightarrow \mathbb{R}^N$. Since $\mathcal{L} \bm \psi = \lambda \bm \psi$, the Koopman eigenfunctions are the eigenfunctions of the Liouville operator and are special ``modes'' of initial data following a coherent evolution $\psi(x,t)=\psi\l(x,0\r) \exp\l(\lambda t\r)$ \citep{rowley2009SpectralAnalysisNonlinear}.

In practice, the approximate Koopman techniques \citep{schmid_2010, williams_2015_edmd} are often performed in the discrete representation Eq. (\ref{eq:disc_dynamics}); where in this work we consider a fixed time step $\Delta t$. This can be connected to the continuous time dynamics through the continuously differentiable flow map $\mathbf{F}_t : \mathbb{R}^N \rightarrow \mathbb{R}^N$, which at a specified time $t$ maps the state $\bm x(t_0)$ to $\bm x(t + t_0)$, and can be found according to:
\begin{equation}\label{eq:flow_map}
    \mathbf{F}_t (\bm x(t_0)) \triangleq \bm x(t_0+t) = \mathbf{x}\left(t_0\right) + \int_{t_0}^{t_0 + t} \bm f(\bm x(\tau)) {\rm d} \tau.
\end{equation}
The flow map therefore induces the discrete-time dynamical system
\begin{equation}\label{eq:disc_dynamics}
    \bm x_{n+1} = \bm F_{\Delta t} (\bm x_n), \quad \bm x(0) = \bm x_0,
\end{equation}
where $\bm x_{n} = \bm x(n \Delta t)$ and $\bm F_{\Delta}$ is the discrete time flow map. This discrete-time setting is a natural framework when considering experimental or numerical data-sets. For example, solving the 3D compressible Navier Stokes Equations with Direct Numerical Simulation (DNS) will result in the collection of discrete state variables (velocity, pressure, density and internal energy) at at $N$ grid points over $K$ time steps. For this study, we only consider DNS sampled at a fixed time step $\Delta t$.  It is worth noting at this point that for practical turbulent flows involving turbulence, $N$ can become enormous $N \sim \mathcal{O}(10^{13})$ in order to resolve all necessary scales \citep{yeung_gpu-enabled_2024} (up to the integral scale and down to the Kolmogorov micro-scale).

The Koopman framework provides an attractive alternative to equation (\ref{eq:disc_dynamics}) as it offers a linear representation. Furthermore, it is valid far from fixed points and periodic orbits, as opposed to linearization techniques. However, although the Koopman operator is linear, it acts on an infinite-dimensional Hilbert space of observables $\mathcal{H}$. Therefore, in practice, and especially for high-dimensional nonlinear dynamical systems, this creates a significant challenge, namely identifying a finite and \textit{tractable} subspace of observables so that the finite rank Koopman operator is invariant to this subspace, i.e. closed.

A Koopman invariant subspace $\mathcal{S} \subseteq \mathcal{H}$, if it exits, is a subspace of $\mathcal{H}$ spanned by observables $\{g_1, g_2, ... g_q\}$ so that $\mathcal{K} g \in \mathcal{S}$ for any observable $g \in \mathcal{S}$. If this subspace exits, then there exists a finite rank Koopman operator $\bm K \in \mathbb{R}^{q \times q}$ restricted to $\mathcal{S} = {\rm span}(\{g_1, g_2, ... g_q\})$.  If such a matrix representation exists, it is possible to define a closed (and exact) linear system $\bm g_{n+1} = \bm K \bm g_n$, where $\bm g \in \mathcal{S}$. However, in practice, often one can only have an approximation with some nontrivial residual $\bm R$ so that $\bm g_{n+1} = \bm K \bm g_n + \bm R$. The Mori-Zwanzig offers an attractive framework for understanding this residual, as it provides a formal procedure using projection operators for closing these systems by accounting for non-local in time ``hysteresis'' effects as well as orthogonal dynamics. This will be described in more detail in the Section \ref{sec:mz_background}.

The goal of the approximate Koopman learning techniques, such as EDMD \citep{williams_2015_edmd}, is to identify a finite rank  matrix approximation $\bm K$ of the Koopman operator given a uniform sequence of $T$ snapshot measurements $\{\bm g_{n \Delta t}\left(x_0\right) = \bm g\left(\bm x\left(n\Delta t; x_0 \right)\right) \in \mathbb{R}^M:  n \in \{0, 1, ..., T-1\}\}$. This is done by obtaining the best fit linear map that evolves the snapshots forward in time to the following consecutive snapshots with a minimal squared residual error. Operationally, this is carried out by first constructing the snapshot matrices:
\begin{equation} \label{eq:snapshots}
\bm G_0 = \begin{bmatrix}
    \bm g(0) & \bm g(\Delta t) & ... & \bm g((T-2) \Delta t)
\end{bmatrix}, \hspace{1mm}
    \bm G_1 = \begin{bmatrix}
    \bm g(\Delta t) & \bm g(2\Delta t) & ... &  \bm g((T-1) \Delta t) \end{bmatrix},
\end{equation}
then, by minimizing the Frobenius norm squared residual error $\epsilon^2 = ||\bm G_1 - \bm K \bm G_0||_F^2$. The optimal matrix $\bm K$ that approximately evolves the snapshot matrices $\bm G_1 \approx \bm K \bm G_0$ is found to be $\bm K = \bm G_1 \bm G_0^{\dagger}$, where $\bm G_0^{\dagger}$ is the pseudo-inverse that is computed via Singular Value Decomposition (SVD).  The matrix $\bm K$ is a best-fit linear operator in the sense that it minimizes the Frobenius norm error. In practice, with EDMD, one tries to select observables spanning nonlinear functions of the state variables in an attempt to provide a finite Koopman invariant subspace. However, for high dimensional dynamical systems, such as DNS of 3D flow fields with high Reynolds numbers, this can drastically increase the computationally cost and requires a priori knowledge of the system.

Dynamic Mode Decomposition (DMD) is based on the linear observable function that selects the state space variables $g_i(\bm x) = x_i$. For high dimensional dynamical systems with $M\gg T$, this results in tall skinny matrices, and in order to avoid the memory intensive operation $\bm G_1 \bm G_0^{\dagger}$, the standard DMD algorithm first performs a low rank approximation of the snapshot matrix $\bm G_0$ by projecting onto the dominant POD modes (see algorithm \ref{alg:dmd}). Once the eigendecomposition of $\bm K$ is approximated with the DMD algorithm, which provides the eigenpair $(\lambda, \bm \psi)$, then the future state evolution can be described via:
\begin{equation}\label{eq:dmd_expansion}
    \bm x(t_n) \approx \sum_{m=1}^N a_m \bm \psi_m e^{(\delta_m + i \omega_m)n\Delta t}, 
\end{equation}
where $\delta_m + i \omega_m = ({\Delta t}^{-1}) \log(\lambda_m)$. However, since DMD is based on linear measurements ($g_i = x_i$) of the system, and models these observables with a linear map, it struggles to approximate the Koopman operator for strongly nonlinear systems \citep{williams_2015_edmd, wu_dmd_issues_2021}.

\subsection{Higher-Order DMD}
\label{sec:hodmd}

We now review the HODMD method, first introduced in the work of \citet{hodmd_2017}. HODMD is an extension of DMD seeking to learn a model of the state-space variables $\bm x$ of the form:
\begin{equation}\label{eq:hodmd_assumption}
    \bm x_{n+d} = \bm R_0 \bm x_{n} + \bm R_2 \bm x_{n+1} + ... + \bm R_d \bm x_{n+d}, 
\end{equation}
assuming ``higher order Koopman'' term $ \bm R_{i>0}$. 
This assumption is made in order to extend the DMD method to capture more general expansions of the form 
\begin{equation}\label{eq:hodmd_expansion}
    \bm x(t_n) = \sum_{m=1}^M a_m \bm \psi_m e^{(\delta_m + i \omega_m)n\Delta t}, 
\end{equation}
to dynamical systems in which the spatial complexity $N$ may be less than the spectral complexity $M$. The spatial complexity $N$ is defined as the dimension of the subspace generated by the $M$ DMD-modes, i.e. $N = dim(span\{\bm \psi_1, ..., \bm \psi_M\})$. The HODMD assumption allows for $M>N$, which occurs when the spatial dimension is less than the number of dynamic modes present ($N$ may also be inferred from the truncated singular value decomposition (SVD)). As established in \citep{hodmd_2017}, the standard DMD approach can only capture expansions with $M = N$, and for $M > N$ the general expansion of the form (\ref{eq:hodmd_expansion}) is equivalent to the more general assumption (\ref{eq:hodmd_assumption}) \citep{hodmd_2017}. For example, $N=3$ in the chaotic Lorenz '63 system, but due to its chaotic nature, the system involves infinite spectral complexity $M$. It is worth noting here that both EDMD and Hankel-DMD also increase the spatial complexity by increasing the number of observables and, therefore, by the above reasoning will also increase the spectral complexity over the standard DMD algorithm. 

HODMD increases the spectral complexity by relying on time-delay embeddings in the observable space (as seen in the Hankel matrix \ref{eq:hankle_matrix}). HODMD then uses the DMD algorithm with an additional SVD truncation of the Hankel matrix of time delayed observables, which increases the spectral complexity but also increases the computational cost. To reduce this cost, a tensor based decomposition \citep{li_hodmd_tensor2023}) method has been developed for applications to high-dimensional dynamical systems. 

Using the time delay coordinates $\bm z_k = [\bm x_k, \bm x_{k+1}, ..., \bm x_{k+d}]^T$, equation (\ref{eq:hodmd_assumption}) can be written as $\bm z_{k+1} = \tilde{\bm R} \bm z_k $, where
\begin{equation}\label{eq:hodmd_comp}
    \tilde{\bm R} = \begin{bmatrix}
    \bm 0 & \bm I & \bm 0 & \bm 0 & ... & \bm 0 \\
    \bm 0 & \bm 0 & \bm I & \bm 0 & ... & \bm 0 \\
      & & & \ddots & & \\
    \bm 0 & \bm 0 & \bm 0 & \bm 0 & ... & \bm I \\
    \bm R_0 & \bm R_1 & \bm R_2 & ... &  \bm R_{d-1} & \bm R_d \\
    \end{bmatrix}
\end{equation}
is the block companion matrix (which is also referred to as the modified Koopman matrix). Overall, the HODMD algorithm described in \citep{hodmd_2017} roughly applies the standard DMD algorithm to approximate the eigendecomposition of  $\tilde{\bm R}$ on a further SVD truncated Hankel matrix (see equation \ref{eq:hankle_matrix}). This Hankel matrix is defined by the time delay embedded observables to form an approximation $\hat{\bm R}$ of $\tilde{\bm R}$, then HODMD performs an eigendecomposition on $\hat{\bm R}$.  When applied to the time-delay data, this approach does not guarantee the preservation of the companion structure \eqref{eq:hodmd_comp} and does not attempt to identify individual terms $\bm R_i$. The additional SVD truncation of the Hankel matrix can break the original structure Eq.~\eqref{eq:hodmd_assumption}, as detailed below.

In what follows, we provide a general outline and description of the HODMD algorithm, as described in \citep{hodmd_2017}, in order to fix notations. Given the snapshot matrix $\bm X = [\bm x_0, ..., \bm x_{T}]$ with $T+1$ equally spaced state measurements, the first truncation is performed to compute the rank-$r_1$ approximation $\bm X \approx \bm U_{r_1} \bm \Sigma_{r_1} \bm V_{r_1}^*$ via the first SVD. From this, the low rank observable matrix $\bm G$ is formed by projecting $\bm X$ onto the left singular vectors (POD modes), which is  $\bm G = \bm U_{r_1}^* \bm X \approx \bm \Sigma_{r_1} \bm V_{r_1}^* = [\bm g_0, ..., \bm g_{T}]$. Using time delay embeddings, this is then stacked into the Hankel matrix 
\begin{equation}\label{eq:hankle_matrix}
    \tilde{\bm Z} = \begin{bmatrix}
    \bm g_0 & \bm g_1 & \bm g_2 &  ... & \bm g_{T-d} \\
    \bm g_1 & \bm g_2 & \bm g_3 &  ... & \bm g_{T-d+1} \\
      & & & \ddots & & \\
    \bm g_{d-1} & \bm g_d & \bm g_{d+1} &  ... & \bm g_{T-1} \\
    \bm g_d & \bm g_{d+1} & \bm g_{d+2} & ...  & \bm g_T \\
    \end{bmatrix}. 
\end{equation}
The Hankel matrix is then further truncated via the SVD by a rank-$r_2$ approximation $\tilde{\bm Z} \approx \bm U_{r_2} \bm \Sigma_{r_2} \bm V_{r_2}^*$. For each truncation the rank $r_j$  for  $j \in \{1, 2\}$, i.e. the number of retained modes, can be chosen so that the relative root mean squared error $E(r_j)$ is less than the selected threshold $\epsilon_j$
$$ E(r_j) = \cfrac{\sum_{i=r_j+1}^{R} \sigma_i}{\sum_{i=1}^R \sigma_i} \leq \epsilon_j,$$
where $\sigma_i$ is the $i^{\text{th}}$ singular value, $R=\min \{\dim(\tilde{\bm Z}) \}$. Now, the low rank approximation of the time delayed Hankel matrix $\tilde{\bm Z}$ is written as $\hat{\bm G} = \bm U_{r_2}^* \tilde{\bm Z} \approx \bm \Sigma_{r_2} \bm V_{r_2}^*$. From this, the standard DMD algorithm is used to solve the least square problem for $\hat{\bm R} := \hat{\bm G}_1 \hat{\bm G}_0^{\dagger}$, where the pseudo inverse $\dagger$ is computed via SVD and the subscripts coincide with those used in equation \eqref{eq:snapshots}. At this point, an eigenvector decomposition is performed in this twice reduced space $\hat{\bm R} = \bm W \bm \Lambda \bm W^{-1}$. To select the appropriately sized modes $\bm \psi_i \in \mathbb{R}^N$ used in Eq.~\ref{eq:hodmd_expansion}, one needs to consider the dynamics expressed via the block companion matrix. Since it is the last row of the block companion matrix that contributes to the dynamics of $\bm x$, the HODMD modes are selected as $\bm \Psi = \bm U_r \bm P_d \bm W$, where $\bm P_d = [\bm 0, ..., \bm I]$, where $\bm I$, $\bm 0$ are appropriately sized identity and zero matrices respectively.

\subsection{Mori--Zwanzig Formalism}
\label{sec:mz_background_theory}

Now we provide an overview of the key aspects of the Mori–Zwanzig formalism, highlighting the Generalized Langevin Equation (GLE), the primary result of this projector-based framework. We review the fundamental components of the MZ formalism, including the memory kernel, the projection operator, orthogonal dynamics, and the generalized fluctuation-dissipation (GFD) relation. We also outline the general procedure for learning the MZ operators in the discrete-time setting. For a more comprehensive discussion, including detailed operator algebra derivations, we refer the reader to \citep{zwanzig1973nonlinear, zwanzig_nonequilibrium_2001, mori1965transport, chorin_optimal_2002, li2017computing, Lin2021}. Additionally, \citet{lin2021datadriven_full} offers an alternative derivation of the GLE based on Koopman eigenfunctions, providing a geometric interpretation.

Conventionally, the goal of the Mori--Zwanzig procedure is to construct evolutionary equations for a subset of components of the phase-space variables $\hat{\bm x}:=\left\{x_i\right\}_{i=1}^M$, $M<N$, referred to as the resolved variables (observables). For example, these can be the observables which we can measure as the dynamics move forward in time. Despite this standard choice of using the components of the state ($g(\bm x) = x_i$) to determine the resolved and under-resolved observables, $g$ can be any $L^2-$integrable function of the state. Mori--Zwanzig formalism then proceeds with a specified projection operator $\mathcal{P}$, which maps a function of the full-space configuration, $g:\mathbb{R}^N \rightarrow \mathbb{R}$, to a function of only the resolved observables $\mathcal{P} g:\mathbb{R}^M \rightarrow \mathbb{R}$. An operator algebraic derivation results in the generalized Langevin equation (GLE) \citep{zwanzig1973nonlinear, chorin_optimal_2002}:
\begin{equation}
    \frac{d}{d t} \hat{x}_i(t, \bm x_0)
    =M_i(\hat{\bm x}(t,\bm x_0)) - \int_0^t K_i(\hat{\bm x}(s, \bm x_0), t-s) d s + F_i(t, \bm x_0),
\end{equation}
describing the evolution of the resolved components given an initial condition $\bm x_0$. The GLE is a closed and exact system for the resolved variables, although it is now nonlocal in time and contains an orthogonal dynamics term $F_i$ (often refereed to as noise term) that depends on time and the full state $\bm x_0$ at $t = 0$.

As developed in \citep{lin2021datadriven_full}, the Mori--Zwanzig framework can be derived using the Koopman representation of the dynamics. We provide an overview here as it gives a convenient interpretation of the MZ formalism, and later MZMD, in the context of approximate Koopman methods.  Given an initial condition $\bm x_0$, we would like to describe how the resolved variables evolve in time $g_i(t; \bm x_0)$. As opposed to decomposing a function into the Koopman eigenfunctions \citep{rowley_2009}, the Mori--Zwanzig formalism utilizes the inner product to decompose the Hilbert space $\mathcal{H}$ of solutions to Eq.~\eqref{eq:state_dynamics} into a subspace linearly spanned by the set of observables $\Hilg:=\text{Span}(\MM)$,  where $\mathcal{M}:=\{g_i\}_{i=1}^M$, and an orthogonal subspace $\Hilgbar=\l\{\bar{g} \in L^2: \l\langle \bar{g} , g_i\r\rangle=0, g_i \in \MM \r\}$. One can then construct a complete set of basis functions in $\Hil$, with a natural choice of using $\MM$ as the set of basis functions in $\Hilg$. An orthogonal space can be constructed from the Koopman eigenfunctions and is denoted by $\MMbar:=\l\{\bar{g}_i\r\}_{i=1}^\infty$. From this, an evolutionary equation can be derived for the observables $\bfvg\l(t\r)$ \citep{mori1965transport, lin2021datadriven_full}
\begin{equation}
\frac{d}{d t} \bfvg\l(t\r) = \bfvM \cdot \bfvg\l(t\r) - \int_{0}^t \bfvK\l(t-s\r)\cdot \bfvg\l(s\r)d s + \bfvF\l(t\r).
\label{eq:GLE}
\end{equation}

This Koopman representation of the MZ formalism highlights the core intuition underlying the approach. Since we are only resolving a subset of observables, $\bfvg(t) \equiv \bfvg_\MM(t)$, from the full system dynamics, the influence of the remaining observables, $\bfvg_\MMbar(t)$, cannot be directly observed. Instead, their effect is accounted for through the non-Markovian term with memory kernel $\bfvK$ and orthogonal dynamics $\bm \bfvF\l(t\r)$. Therefore, the memory and the orthogonal dynamics exist only because we have an incomplete observable set in $\Hil$. The Markovian term $\bfvM$ is the instantaneous configuration of the set of observables applied to the physical-space configuration at time $t$ and the non-Markovian temporal convolution is a delayed impact of the set of observables applied to the physical-space variables at an earlier time $s<t$. Both these terms depend only on the resolved observables $\bfvg_\MM$ at time $t$. However, the orthogonal dynamics is induced by the initial setting of the under-resolved observables, $\bfvg_\MMbar(0)$, which may not be known. Eq.~\eqref{eq:GLE} is exact if one knows both $\bfvg(0)$ and $\bfvg_\MMbar(0)$, in which case, the system is fully resolved. Unfortunately, we do not have direct access to $\bfvg_\MMbar(0)$ as they are under-resolved observables, and one has to postulate their configurations in practice.

Ultimately, Eq.~\eqref{eq:GLE} reveals that the evolution of the resolved observables $\bfvg(t)$ depends on three factors: (1) their instantaneous configuration, (2) their past history, and (3) an orthogonal contribution (external driving force) arising from the initial conditions of the orthogonal observables. Notably, both the memory and orthogonal contributions vanish if the dynamics are closed within the space $\MM$, which corresponds to having a complete set of observables to describe the full system (or a Koopman invariant subspace). Thus, memory effects and orthogonal dynamics arise only due to the incompleteness of the observable set in $\Hil$, and can therefore serve as a natural framework for developing closure terms for approximate Koopman methods. 

An essential aspect of the Mori-Zwanzig formalism, often overlooked in modeling papers, is that the memory kernel and orthogonal dynamics are not independent. With an appropriately chosen inner product (one that results in an anti-self-adjoint Liouville operator with respect to the chosen inner product), there is a relationship between the memory kernel $\bfvK$ and the two-time corelation of the orthogonal dynamics $\bfvF$, commonly known as the generalized fluctuation-dissipation (GFD) relationship \citep{Kubo_1966, lin2021datadriven_full}:

\begin{equation}
\label{eq:gfdc}
\bfvK(s) = \left< \bfvF(s), \bfvF^T(0) \right> \bm C^{-1}(0),
\end{equation}
where $\bm C (0) = \left< \bm g(0), \bm g^T(0) \right>$ is the expected auto-correlation of the observables with respect to an initial condition. 

The GFD provides a foundational tool in statistical mechanics for understanding irreversible processes and non-equilibrium phenomena. The GFD describes how a linear response of a given system to an external perturbation is expressed in terms of fluctuation properties of the system in thermal equilibrium. For example, in Brownian motion, random impacts of surrounding molecules generally cause two kinds of effects: (1) they act as a random driving force on the Brownian particle causing irregular motion, and (2) they give rise to the frictional force from a forced motion. In this context, GFD provides a relationship between the random force and the frictional force and shows that the frictional force is frequency-dependent (relating to noise with a delay in time), so that the random force cannot be white noise \citep{Kubo_1966}.

\subsection{Data-Driven Mori--Zwanzig}
\label{sec:mz_background}

We now summarize the recent data-driven Mori-Zwanzig method for learning the MZ operators from data. Following the work by \citet{lin2021datadriven_full},  we consider the discrete-time autonomous deterministic dynamical system where the states $\bm x(t) \in \mathbb{R}^D$ evolve according to \eqref{eq:disc_dynamics}. Next, we seek evolutionary equations for a set of observables $\bm g \in \mathbb{R}^r, r < D$, with $g_i : \mathbb{R}^D \rightarrow \mathbb{R}$, $i = 1,...,r$. The discrete-time GLE (see \cite{lin2021datadriven_full, Lin2021, darve_09, Gilani2021, She_2023} for detailed derivations), prescribes the exact and closed set of non-Markovian evolutionary equations for the observables given any initial condition of states $\bm x_0$ as:
\begin{equation}\label{eq:disc_gle}
    \bm g_{n+1}(\bm x_0) = \bm{\mathit{\Omega}}_0 (\bm g_{n}(\bm x_0)) + \sum_{l=1}^n \bm{\mathit{\Omega}}_l(\bm g_{n-l}(\bm x_0)) + \bm W_n(\bm x_0),
\end{equation}
where $\bm g_n:\mathbb{R}^D \rightarrow \mathbb{R}^r$ is the $r \times 1$ vector of functions of the initial state $\bm x_0$ so that $\bm g_n(\bm x_0)\equiv \bm g(\bm x(n \Delta t; \bm x_0)) \equiv \bm g(\bm F^n(\bm x_0))$. The discrete time GLE (Eq. \ref{eq:disc_gle}) states that the vector of observables at time $n+1$
evolves (and is decomposed) according to three parts: (1) a \textit{Markovian} operator: $\bm{\mathit{\Omega}}_0: \mathbb{R}^r \rightarrow \mathbb{R}^r$ which only depends on the observables at the previous time step ($n$), (2) the \textit{memory kernel}: the series of operators $\bm{\mathit{\Omega}}_l: \mathbb{R}^r \rightarrow \mathbb{R}^r$ depending on observables with a time lag $l$, and (3) the \textit{orthogonal dynamics}: $\bm W_n : \mathbb{R}^D \rightarrow \mathbb{R}^r$ depending on the full initial state $\bm x_0$. The above GLE is general for any projection operator (see \cite{lin22_nn_mz} for a detailed discussion). % on the choice of projection operator in MZ). 

Using Mori's linear projection \citep{mori1965transport}, which employs the equipped inner product in the $L^2$ Hilbert space to define the functional projection, results in linear transformations for the Markovian term $\bm{\mathit{\Omega}}_0 (\bm g_n(\bm x_0)) = {\bm \Omega}_0 \bm g_n(\bm x_0)$, and memory kernel $\bm{\mathit{\Omega}}_l(\bm g_{n-l}(\bm x_0)) = {\bm \Omega}_l \bm g_{n-l}(\bm x_0) $, where ${\boldsymbol{\Omega}}_l$'s are $r\times r$ matrices \citep{lin2021datadriven_full}. In this manuscript, we assume that, in the process of learning the MZ operators with the convex optimization scheme derived in \cite{lin2021datadriven_full}, $\bm W_n$ is a small and negligible residual term. This is equivalent to considering the projected dynamics of Eq.~ \eqref{eq:disc_gle}, since the projection operator is orthogonal to $\bm W_n$. The algorithm for extracting the memory kernel, as well as the MZMD algorithm, is described in the Appendix for completeness (see Algorithm \ref{alg:mzmd}).

Given an inner product, the Mori projection operator $\mathcal{P}$ projects any function (of the initial condition $\bm x_0$) $f \in \Hil$, onto the subspace $\Hilg:=\text{span}\l(\MM\r)=\text{span}\l(\l\{g_i\r\}_{i=1}^M\r)$. Mori's projection operator is defined by 

\begin{equation}
\label{projection}
 \mathcal{P} f := \left< f, \bm g^T \right> \cdot \left<\bm g, \bm g^T \right> ^{-1} \cdot \bm g.
 \end{equation}

The GFD relation Eq.~\eqref{eq:gfdc} is a result of the MZ formalism, and is thus a necessary condition that must be satisfied for any data-driven MZ method. With the projection operator defined by Eq. (\ref{projection}), the discrete form of the GFD can be stated as
\begin{equation}\label{eq:gfd}
    \bm \Omega_n \bm g = \mathcal{P} \mathcal{K} \bm W_{n-1}, \quad \forall n\in \mathbb{N},
\end{equation}
where $\mathcal{K}$ is the discrete time Koopman operator. The operators $\bm \Omega_l$ 
and $\bm W_n$ depend on the choice of the projection operator $\mathcal{P}$, the choice
of the vectorized observable $\bm g$ , and the finite-time $(\Delta)$ Koopman operator $\mathcal{K}$. In the data-driven MZ methods \citep{lin2021datadriven_full, lin22_nn_mz} (summarized in Algorithm \ref{alg:mzmd}), the GFD relation is enforced by construction. In summary, there are three key elements that define the data-driven MZ procedure:
\begin{enumerate}[noitemsep,topsep=0pt]
    \item[] (i) learns the operators that satisfy the GLE,
    \item[] (ii)  uses a well defined projection operator,
    \item[] (iii)  satisfies the GFD.
\end{enumerate}
In this work, we use Mori's linear projector, which represents a direct generalization of the DMD approaches; however, these concepts can be extended to non-linear projection to learn non-linear MZ operators \citep{lin22_nn_mz}.

% \input{./sections/mz_modal_analysis.tex}
%making updates to distinguish difference between full and reduced svd 

\section{Derivation and interpretation of the proposed MZMD method}
\label{sec:mzmd}

In this section, we present the MZMD method for performing modal analysis of the discrete-time GLE described above. Overall, MZMD can be interpreted as approximating a closure for DMD, where MZ memory accounts for the fact that the state variables selected as observables with DMD (most likely) do not form a Koopman invariant subspace. MZMD achieves this modal analysis by efficiently approximating the modes and spectrum of the full state-space GLE. MZMD is built upon the data-driven MZ approach described above, in which the Mori projector is used, and the linear operators of the GLE are learned so that the GFD is satisfied. Similarly to DMD, we first apply an SVD-based compression to the snapshot matrix $\bm X = [\bm x_0, ..., \bm x_T]$, a necessary step for high-dimensional dynamical systems. This provides a tractable method for extracting the modes and spectrum of the state space GLE from a reduced set of observables defined by projecting the high-dimensional state variables onto the POD modes. This SVD compression enables a low rank approximation of the snapshot matrices for tractable computations of the two-time covariances found in Algorithm (\ref{alg:mzmd}).

Rather than starting with the standard DMD assumption that the evolution of state variables is linear (i.e., $\bm x_{n+1} = \bm A \bm x_n$), with MZMD, we start from the derived discrete-time GLE (\ref{eq:disc_gle}) and assume that the orthogonal dynamics term can be made negligible by minimization \citep{lin2021datadriven_full}. This is formalized with the Mori's projector and $N$ state observables selected as $\pi_i \triangleq \pi_i\left(\bm x \right) = x_i$, similar to DMD. This truncated GLE, described in more detail below, provides a natural approach for approximating a closure for DMD using MZ memory terms, which appear when the observables do not form a Koopman invariant subspace (as described in more detail in the previous section). Otherwise, the matrix $\bm \Omega_0$ would be sufficient to completely describe the dynamics, as it would represent the finite-rank Koopman operator. 

Within the MZMD framework, memory terms $\mathbf{\Omega}_{i>0}$ are introduced to capture the influence of unresolved variables of DMD on the resolved ones. Assuming that only $k$ memory terms are sufficient so that the orthogonal dynamics are negligible (or minimized), equation \eqref{eq:disc_gle} becomes
\begin{equation} \label{eq_mz}
    \bm x_{n+1} = \mathbf{\Omega}_0^{(x)}  \bm x_n + ... + \mathbf{\Omega}_k^{(x)} \bm x_{n-k},
\end{equation}
where the linear MZ operators $\mathbf{\Omega}_i^{(x)}$ are acting in state space. Eq.~ \eqref{eq_mz} can also be considered as the projected dynamics of the GLE with finite memory, since $\mathcal{P} \bm W_n = 0$. The goal of this work is to approximate the modes and eigenvalues of the discrete-time GLE \eqref{eq_mz} to perform the analysis of high-dimensional nonlinear dynamical systems, and to improve the existing DMD method. 

Similarly to DMD, in order to tackle high-dimensional systems, we first perform a low-rank approximation of the snapshot matrix. This may also be interpreted as observable selection using an SVD-based compression of the snapshot matrix $\bm X \approx \bm U_r \bm \Sigma_r \bm V_r^*$, to obtain the observable matrices $\bm G = \bm U_r^* \bm X \approx \Sigma_r \bm V_r^*$, where $\bm U_r \in \mathbb{R}^{N \times r}$ is the matrix formed by the left singular vectors, $\Sigma_r \in \mathbb{R}^{r \times r}$ is the diagonal matrix of singular values arranged in descending order, and $\bm V_r \in \mathbb{R}^{(T+1) \times r}$ is the matrix formed by the right singular vectors (where $r_{max} = \min \{T+1, N\}$ is the maximal $r$ that can be used and $T+1 < N$ is common in the case of high-dimensional systems). The reduced observables are then $ \bm g(\bm x_n) \triangleq \bm U_r^{*} \bm x_n$, and represent the state variables projected onto the POD modes, or the left singular vectors. The method of snapshots can be used for an efficient computation of the POD modes \citep{method_of_snapshots}. Multiplying both sides of equation \eqref{eq_mz} by $\bm U_r^{*}$ and expressing $\bm x_i = \bm U_r \bm g_i$, we obtain
$$
\bm U_r^{*} \bm x_{n+1} = \bm U_r^{*} \mathbf{\Omega}_0^{(x)} \bm U_r \bm g_n + ... + \bm U_r^{*} \mathbf{\Omega}_k^{(x)} \bm U_r \bm g_{n-k}.
$$
This becomes
\begin{equation} \label{eq_mz_obs}
    \bm g_{n+1} = \bm \Omega_0^{(g)} \bm g_n + ... + \bm \Omega_k^{(g)}  \bm g_{n-k},
\end{equation}
where $\bm \Omega_i^{(g)} = \bm U_r^{*} \mathbf{\Omega}_i^{(x)} \bm U_r$ is the memory kernel $\mathbf{\Omega}_i^{(x)}$ projected onto the POD modes. In this work, we establish the relationship between the modes of \eqref{eq_mz} and \eqref{eq_mz_obs} in two parts; first a full SVD ($r=N$) is used to establish the equivalence, then a truncated SVD is used ($r<N$) resulting in approximations of the modes of the full state GLE, as shown below. 

To establish a method for performing modal analysis of equation \eqref{eq_mz}, we observe that the dynamics described by equations \eqref{eq_mz} and \eqref{eq_mz_obs} can be understood in terms of the associated block companion matrices
\begin{equation} \label{Cx_Cg}
    \bm C_x = \begin{bmatrix}
\bm \Omega_0^{(x)} & \bm \Omega_1^{(x)} & ... & \bm \Omega_k^{(x)} \\
\bm I & \bm 0 & ... & \bm 0 \\
  & \ddots & & \\
\bm 0 & ... &  \bm I & \bm 0 \\
\end{bmatrix} \quad \text{and}\quad 
\bm C_g = \begin{bmatrix}
\bm \Omega_0^{(g)} & \bm \Omega_1^{(g)} & ... & \bm \Omega_k^{(g)} \\
\bm I & \bm 0 & ... & \bm 0 \\
  & \ddots & & \\
\bm 0 & ... &  \bm I & \bm 0 \\
\end{bmatrix},
\end{equation}
where $\bm I$ and $\bm 0$ are the appropriately sized identity and zero matrices respectively.

The solutions to equation \eqref{eq_mz} are therefore given by 
\begin{equation}\label{eq:companion_sol}
    \bm x_{n} = \bm P_0 \bm C_x^{n} \bm z_0,
\end{equation}
where $\bm z_0 = [\bm x_0, ..., \bm x_{-k}]^T$ and $\bm P_0 = [\bm I \hspace{1mm} \bm 0 \hspace{1mm} ... \hspace{1mm} \bm 0]$, and the solution for equation \eqref{eq_mz_obs}, is $\bm g_{n} = \bm P_0 \bm C_g^{n} \tilde{\bm z}_0$ where $\tilde{\bm z}_0 = [\bm g_0, ..., \bm g_{-k}]^T$. It is worth noting that adding memory kernels introduces a multiplicative interaction between the $\bm \Omega_i$'s, as seen in the long time dynamics described by $\bm C_x^{n}$. Assuming $\bm C_x$ is diagonalizable with an eigendecomposition given by $\bm C_x = \bm \Psi \bm \Lambda \bm \Psi^{-1} $, then equation \eqref{eq:companion_sol} becomes 
\begin{equation}
    \bm x_n = \bm P_0 \bm \Psi \bm \Lambda ^n \bm \Psi^{-1} \bm z_0.
\end{equation}

Defining the amplitude vector as $\bm a := \bm \Psi^{-1} \bm z_0$ (i.e. $\bm z_0$ expressed in basis of the eigenvectors of the companion matrix), the temporal evolution of $\bm x$ can be described in the basis expansion about the eigenvectors (modes) $\bm \psi_i$; 
\begin{equation}\label{eq:sol_companion_modes}
    \bm x_n = \bm P_0 \sum_{i=1}^{rk} a_i \lambda_i^n \bm \psi_i = \sum_{i=1}^{rk} a_i \lambda_i^n \bm \psi^0_i, 
\end{equation} 
where $\bm \psi^0 \equiv \bm P_0 \bm \psi \in \mathbb{R}^N$. Thus, the solutions are fully characterized by the eigenpairs $(\lambda_i, \bm \psi_i^0)$ of the associated companion matrices. Furthermore, like DMD, the eigenvectors identify the large-scale coherent structures and their associated eigenvalues determine the temporal evolution. However, now the modes and eigenvalues contain the effects of the MZ memory terms, which serve to approximate a closure model for DMD in the case when the state-space observables do not form a Koopman invariant subspace.

Next, we make use of the reduced space description and derive a relationship between the eigenpair of the full system \eqref{eq_mz} with that of \eqref{eq_mz_obs}. A similar result is shown for the SVD-based DMD \citep{schmid_2010, dmd_book}, which is recovered in what follows when the number of memory terms $k$ is set to zero. 
\begin{theorem}\label{thm:1}
Let $(\lambda, \bm w)$ be an eigenpair of $\bm C_g$. Then $(\lambda, \bm \psi)$ is and eigenpair of $\bm C_x$ where $\bm \psi = [\bm \psi^0, (1/\lambda) \bm \psi^0, ..., (1/\lambda^k) \bm \psi^0]^T$, and $\bm \psi^0 = \bm U \bm w^0$.
\end{theorem}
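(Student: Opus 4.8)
\emph{Proof idea.} The plan is to exhibit an explicit similarity between the two block companion matrices and then read off both the invariance of the eigenvalue and the claimed shape of the eigenvector. The structural fact that makes everything exact here is the hypothesis $r=D$: in that case the SVD $\bm X \approx \bm U_r \bm \Sigma_r \bm V_r^*$ is full, so $\bm U \equiv \bm U_D$ is a square unitary matrix with $\bm U^{-1} = \bm U^*$. The defining relation $\bm \Omega_i^{(g)} = \bm U^* \bm \Omega_i^{(x)} \bm U$ therefore inverts to $\bm \Omega_i^{(x)} = \bm U \bm \Omega_i^{(g)} \bm U^*$ for every $i = 0, \dots, k$, and this conjugation identity is the engine of the argument.

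First I would introduce the block-diagonal matrix $\bm T = \mathrm{diag}(\bm U, \dots, \bm U)$ assembled from $k+1$ copies of $\bm U$, which is itself unitary with $\bm T^{-1} = \mathrm{diag}(\bm U^*, \dots, \bm U^*)$. I would then evaluate $\bm T \bm C_g \bm T^{-1}$ block by block: its $(1,j)$ block is $\bm U \bm \Omega_{j-1}^{(g)} \bm U^* = \bm \Omega_{j-1}^{(x)}$ by the inverted relation, every subdiagonal identity block becomes $\bm U \bm I \bm U^* = \bm I$, and every remaining zero block stays $\bm 0$. Assembling these blocks gives exactly $\bm T \bm C_g \bm T^{-1} = \bm C_x$, so $\bm C_x$ and $\bm C_g$ are similar; in particular they share the same spectrum, and any eigenvector $\bm w$ of $\bm C_g$ maps to the eigenvector $\bm \phi = \bm T \bm w$ of $\bm C_x$ for the same $\lambda$.

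It then remains to recognize the companion structure of $\bm w$ and push it through $\bm T$. Writing $\bm w = [\bm w^0, \dots, \bm w^k]^T$ in blocks of length $r=D$, the subdiagonal rows of $\bm C_g \bm w = \lambda \bm w$ read $\bm w^{i-1} = \lambda \bm w^i$, which force $\bm w^i = (1/\lambda^i)\bm w^0$ for each $i$. Applying $\bm T$ block-wise then yields $\bm \phi = [\bm U \bm w^0, (1/\lambda)\bm U \bm w^0, \dots, (1/\lambda^k)\bm U \bm w^0]^T$, i.e. precisely $[\bm \phi^0, (1/\lambda)\bm \phi^0, \dots, (1/\lambda^k)\bm \phi^0]^T$ with top block $\bm \phi^0 = \bm U \bm w^0 = \bm P_0 \bm \phi$, as stated.

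The only genuinely delicate point is the standing assumption $\lambda \neq 0$, which is implicit in the very form of $\bm \phi$ (the factors $1/\lambda^i$ must be defined); this holds whenever the deepest memory operator $\bm \Omega_k^{(g)}$ is nonsingular, since $\lambda=0$ is an eigenvalue of the companion matrix precisely when $\det \bm \Omega_k^{(g)} = 0$, and I would flag this as a hypothesis. As a self-contained alternative to the similarity route, one may instead verify $\bm C_x \bm \phi = \lambda \bm \phi$ directly: the subdiagonal rows hold by the construction of $\bm \phi$, and the top row collapses, after factoring $\bm U$ out of $\sum_{l} \bm \Omega_l^{(x)} \lambda^{-l} \bm \phi^0 = \bm U \sum_{l} \bm \Omega_l^{(g)} \lambda^{-l} \bm w^0$, to the top-row eigenrelation already satisfied by $\bm w$. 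I nonetheless prefer the similarity argument, since it simultaneously delivers the spectral equivalence underlying the exact ($r=D$) case while keeping the block bookkeeping minimal.
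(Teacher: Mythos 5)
Your proof is correct, and it takes a genuinely different route from the paper's. The paper expands $\bm C_g \bm w = \lambda \bm w$ into the associated matrix-polynomial (nonlinear) eigenvalue problem $\sum_{l=0}^{k}\lambda^{k-l}\bm \Omega_l^{(g)}\bm w^0 = \lambda^{k+1}\bm w^0$, substitutes $\bm \Omega_l^{(g)} = \bm U^{*}\bm \Omega_l^{(x)}\bm U$, left-multiplies by $\bm U$, and recognizes the result as the polynomial eigenvalue problem linearized by $\bm C_x$ --- essentially the ``direct verification'' you mention as your alternative. Your primary argument instead exhibits the explicit block-diagonal unitary similarity $\bm C_x = \bm T \bm C_g \bm T^{-1}$ with $\bm T = \mathrm{diag}(\bm U,\dots,\bm U)$, and then recovers the Vandermonde-like block structure of the eigenvector from the subdiagonal rows. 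This buys more than the paper proves: similarity gives at once that $\bm C_x$ and $\bm C_g$ have identical spectra with multiplicities, that the eigenpair correspondence is a bijection (so the converse of the theorem holds too), and that generalized eigenvectors also transport across, all with minimal block bookkeeping; the paper's route is lighter on linear-algebra scaffolding and directly produces the displayed polynomial identity, but only establishes the one-directional claim as stated. You are also right to flag $\lambda \neq 0$ as an implicit hypothesis: the stated form of $\bm \phi$ with factors $1/\lambda^i$ is undefined at $\lambda = 0$, and both your reduction $\bm w^i = (1/\lambda^i)\bm w^0$ and the paper's passage to the polynomial eigenvalue problem silently assume it; the paper does not mention this. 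The only minor caveat in your write-up is the claim that $\lambda = 0$ occurs precisely when $\det\bm\Omega_k^{(g)} = 0$ --- that is the right criterion, but it is a side remark and nothing in the main argument depends on it.
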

\begin{proof}
Let $\bm X = \bm U \bm \Sigma \bm V^*$ be the full SVD of the snapshot matrix ($r=N$). Then, expanding $\bm C_g \bm w = \lambda \bm w$, we see that 
$$ 
\begin{bmatrix}
\bm \Omega_0^{(g)} & \bm \Omega_1^{(g)} & ... & \bm \Omega_k^{(g)} \\
\bm I & \bm 0 & ... & \bm 0 \\
  & \ddots & & \\
\bm 0 & ... &  \bm I & \bm 0 \\
\end{bmatrix} \begin{bmatrix}
\bm \omega^0 \\
\bm \omega^1 \\
  \vdots \\
\bm \omega^k \\
\end{bmatrix} = \lambda \begin{bmatrix}
\bm \omega^0 \\
\bm \omega^1 \\
  \vdots \\
\bm \omega^k \\
\end{bmatrix},
$$
so that after substitution the associated nonlinear eigenvalue problem (\cite{ch9_nonlinear_eig}) of $\bm C_g$ satisfies
$$ \bm \Omega_k^{(g)} \bm w^0 + \lambda \bm \Omega_{k-1}^{(g)} \bm w^0 +...+ \lambda^{k} \bm \Omega_0^{(g)} \bm w^0 = \lambda^{k+1} \bm w^0.  $$
Substituting $\bm \Omega_i^{(g)} = \bm U^{*} \mathbf{\Omega}_i^{(x)} \bm U$ then left multiplication by $\bm U$ results in
% $$ \bm U^* \bm \Omega_k^{(x)} (\bm U \bm w) + \lambda \bm U^* \bm \Omega_{k-1}^{(x)} (\bm U \bm w) +...+ \lambda^{k} \bm U^* \Omega_0^{(x)} (\bm U \bm w) = \lambda^{k+1} \bm w.  $$
% Multiplying both sides by $\bm U$,
$$ \bm \Omega_k^{(x)} (\bm U \bm w^0) + \lambda \bm \Omega_{k-1}^{(x)} (\bm U \bm w^0) +...+ \lambda^{k} \bm \Omega_0^{(x)} (\bm U \bm w^0) = \lambda^{k+1} (\bm U \bm w^0).  $$
This is the associated nonlinear eigenvalue problem of $\bm C_x$, and since $\bm \omega^{i>0} = \bm \omega^{i-1} / {\lambda^i}$,  $(\lambda, \bm \psi)$ is an eigenpair of $\bm C_x$. 
\end{proof}

This simple argument therefore directly extends what is done in DMD, providing a model analysis that contains an approximate closure model for DMD with MZ memory. In practice, like DMD, a truncated POD basis $\bm U_r$ (the left singular vectors) is computed ($r < N$) to obtain a suitable low rank approximation. In this case, the above relation is only approximate, in which $ \bm U_r \bm U_r^*$ forms an orthogonal projection operator projecting onto the first $r$ POD modes of the snapshot matrix. This fact together with Theorem \ref{thm:1} allows us to approximate the modes and spectrum of the state space GLE \eqref{eq_mz} from the eigenpair of the reduced observable GLE \eqref{eq_mz_obs}, which is otherwise intractable for large systems. Figure \ref{fig:diagram} illustrates the relationships between observables, eigenpairs, and operators in these two spaces. The MZMD modes are consequently $\tilde{\bm \Psi} \equiv [\bm \psi_1^0,..., \bm \psi_{rk}^0]$ and coefficients are then $\bm a = \bm \Psi^{\dagger} \bm z_0 \approx (\bm W)^{-1} \tilde{\bm z}_0$, where $\bm \Psi \equiv [\bm \psi_1,..., \bm \psi_{rk}]$, and $\bm W $ the is eigenvector matrix of $\bm C_g$.

This provides a scalable modal analysis technique that improves upon DMD by approaching a closure term using hysteresis effects with the Mori-Zwanzig formalism. It is useful to point out that when there is no memory present in MZMD, the formulation is equivalent to DMD. Indeed, it has been established that MZ without memory terms is equivalent to EDMD \citep{lin2021datadriven_full}. Functionally, MZMD minimizes the residual from DMD by adding MZ memory terms, and as an efficient generalization of DMD, improves the ability to capture nonlinearities.

\begin{figure} %[ht!]
	\centering
	\begin{tikzpicture}
    % \draw[black, thick] (-5.5,-2.1) rectangle (5.5,2.1);
	\matrix (m) [matrix of math nodes,row sep=1.15em,column sep=10em,minimum width=3em]
	{
		\begin{array}{c}
		\text{\underline{States (GLE)}} \\  \bm x_{n+1} = \sum_{i=1}^{k+1} \bm \Omega_{i-1}^{(x)} \bm x_{n-i+1}
		\end{array} & \begin{array}{c}
		\text{\underline{Observables (GLE)}} \\  \bm g_{n+1} = \sum_{i=1}^{k+1} \bm \Omega_{i-1}^{(g)} \bm g_{n-i+1}
		\end{array}  \\
		\begin{array}{c}
		\text{Eigenpair}\\ ( \lambda_i, \bm \psi_i)
		\end{array} & \begin{array}{c}
		\text{Eigenpair}\\ ( \lambda_i, \bm w_i)
		\end{array} \\
        \begin{array}{c}
		\text{MZ Operators}\\ \bm \Omega_i^{(x)}
		\end{array} & \begin{array}{c}
		\text{MZ Operators}\\ \bm \Omega_i^{(g)}
		\end{array} \\};
	\path[-stealth]
    (m-1-1) edge node [above] {$\bm x_n \approx \bm U_r \bm g_n$} (m-1-2)
    (m-1-2) edge node [below] {$\bm g_n = \bm U_r^* \bm x_n$} (m-1-1)
    (m-2-2) edge node [above] {$\bm \psi_i^0 \approx \bm U_r \bm w_i^0$} (m-2-1)
    (m-3-1) edge node [below] {$ \bm \Omega_i^{(g)} =  \bm U_r^* \bm \Omega_i^{(x)} \bm U_r$} (m-3-2)
    (m-3-2) edge node [above] {$\bm \Omega_i^{(x)} \approx \bm U_r \bm \Omega_i^{(g)} \bm U_r^* $} (m-3-1);
	\end{tikzpicture}
	\caption{Diagram of the MZMD concept} 
    \label{fig:diagram}
\end{figure}

%=================================== Results ===========================

\section{Numerical results: hypersonic laminar–turbulent boundary layer transition}
\label{sec:results}

In this section we present our main results concerning MZMD as applied to the hypersonic laminar–turbulent boundary-layer transition. Validation tests and comparisons against HODMD and DMD are presented in Appendix \ref{sec:2d_cylinder} using a two-dimensional flow over a circular cylinder. All three approaches successfully capture the dominant shedding frequency identified in DNS \citep{colonius2008} and the dominant modes associated with this frequency, as well as the higher harmonics, which is consistent with previous studies \citep{tu_dmd, rowley_2009}. Although the modal decompositions of  DMD, HODMD, and MZMD are broadly consistent, subtle differences are observed. For the simple flow over a circular cylinder, one key finding was that the introduction of memory (either via time-delay embedding or MZ memory) enabled the recovery of modes that were missing in a highly truncated DMD, specifically those associated with the first higher harmonic of the dominant shedding frequency. A comparison of generalization errors in this case shows that MZMD and HODMD perform similarly (both improving over DMD), but MZMD achieves these results with substantially lower computational cost (avoiding HODMD’s extra SVD) and faster convergence, requiring fewer snapshots to reach equivalent accuracy.

We now turn to the primary application of the new method: the transition from laminar to turbulent flow in a high-speed boundary layer. Transitional and turbulent flows arise in many applications and exhibit a hierarchy of coherent structures essential for understanding the underlying physics \citep{lumley_book_2012, pope_2011}. Such insights are particularly valuable for the advancement of engineering design, especially in hypersonic applications involving complex geometries \citep{hader_fasel_2019, meersman_2021}. Moreover, the development of flow control strategies, whether to delay or accelerate the transition, requires reduced-order modeling and a deeper understanding of the primary mechanisms at play. 

The high-speed laminar-turbulent boundary layer transition is a complex dynamical phenomenon and remains an active research area. The transition to turbulence results in dramatic increases in skin friction (drag) and heat transfer, often far exceeding laminar values. For example, studies on the laminar-turbulent transition of a flared cone at Mach 6, including wind tunnel experiments \citep{chynoweth_2019} and DNS \citep{hader_fasel_2019}, have revealed the formation of localized hot streaks in the nonlinear breakdown, where heat fluxes can significantly exceed both the laminar and turbulent heat transfer values. Accurately predicting these hot spots is therefore critical for thermal protection systems and vehicle integrity. 

Given the hypersonic boundary-layer transition’s sensitivity to past disturbances and its highly nonlinear nature, we augment linear reduced-order models with explicit memory terms and assess whether this enhances their predictive and diagnostic capabilities. In the remainder of this section, we present a detailed analysis comparing the performance of DMD, HODMD, and MZMD applied to a hypersonic laminar-turbulent boundary layer transition. This flow is significantly more complex than the 2D cylinder case, posing distinct challenges and highlighting key differences in each method’s capability to capture relevant flow features. We assess model accuracy and generalization by evaluating each on an ensemble of initial conditions that are independently and identically distributed (\textit{i.i.d.}) test samples. Evaluating the generalization error provides insights into which spectral representation best captures the underlying flow physics. We then investigate the spatial regions within the flow where the MZMD memory terms most significantly enhance predictive capability. Finally, we analyze the dominant flow modes and quantify how MZ memory effects influence the overall dynamics.

\subsection{Description of the data-set}

We base our analysis on a high-fidelity DNS dataset of hypersonic boundary-layer transition over a Mach 6 flared cone. Transition was triggered by introducing random perturbations at the computational inflow boundary, commonly referred to as natural transition (see schematic in Figure~\ref{fig:geometry}(a) and details in \citet{hader_2018}).  The natural transition data set considered here includes the relevant transition stages defined by \citet{morkovin_1994} from the primary instability to breakdown to turbulence (see Figure \ref{fig:geometry} (b)). 

Our reduced-order modeling focuses on the statistically stationary, normalized 2D pressure fluctuations at the cone surface, derived from the full 3D DNS performed by \citet{hader_fasel_2019}. The fluctuating normalized pressure field $p'/(\rho_{\infty} U_{\infty})$, where $p' = p - \langle p \rangle$, is considered at the wall with a spatial resolution of $129 \times 4600$. This reduced dataset remains representative of the critical dynamics responsible for the nonlinear generation of hot streaks shown in Figure~\ref{fig:geometry}(c). For data-driven modeling, a subset of DNS data, comprising 3,000 uniformly sampled time snapshots, is used for fitting DMD, HODMD, and MZMD. This is the number of snapshots required to achieve convergence of the operators (see Figure \ref{fig:convergence_ops} (a)). The DNS data are uniformly sampled at a frequency of $30 \Delta t$, where the DNS time step $\Delta t \approx 3.33 \times 10^{-9} s$ satisfies the CFL condition (see \citep{hader_fasel_2019} for more details). This sampling rate can resolve frequencies up to approximately $2600KHz$, which is more than adequate to capture the first several higher harmonics of the fundamental instability wave (as seen in Figures \ref{fig:convergence_ops} (b) and \ref{fig:evals_spectrum}). Additionally, this temporal window corresponds to a timescale in which the flow advects approximately $\mathcal{O}(5L)$ where $L = 0.36 m$ is the length of the computational domain (where $0.51 m$ is the length of the cone). A held-out test set of 1200 snapshots (where flow advects $\mathcal{O}(2L)$) that is identically distributed with the training set are used for performing analysis of each  model. The test set is used in measuring short time generalization errors over an ensemble of initial conditions sampled \textit{i.i.d}.

\begin{figure}[!htb]
\centering
\begin{subfigure}[]{1\textwidth}
\centering
\includegraphics[width=1\textwidth]{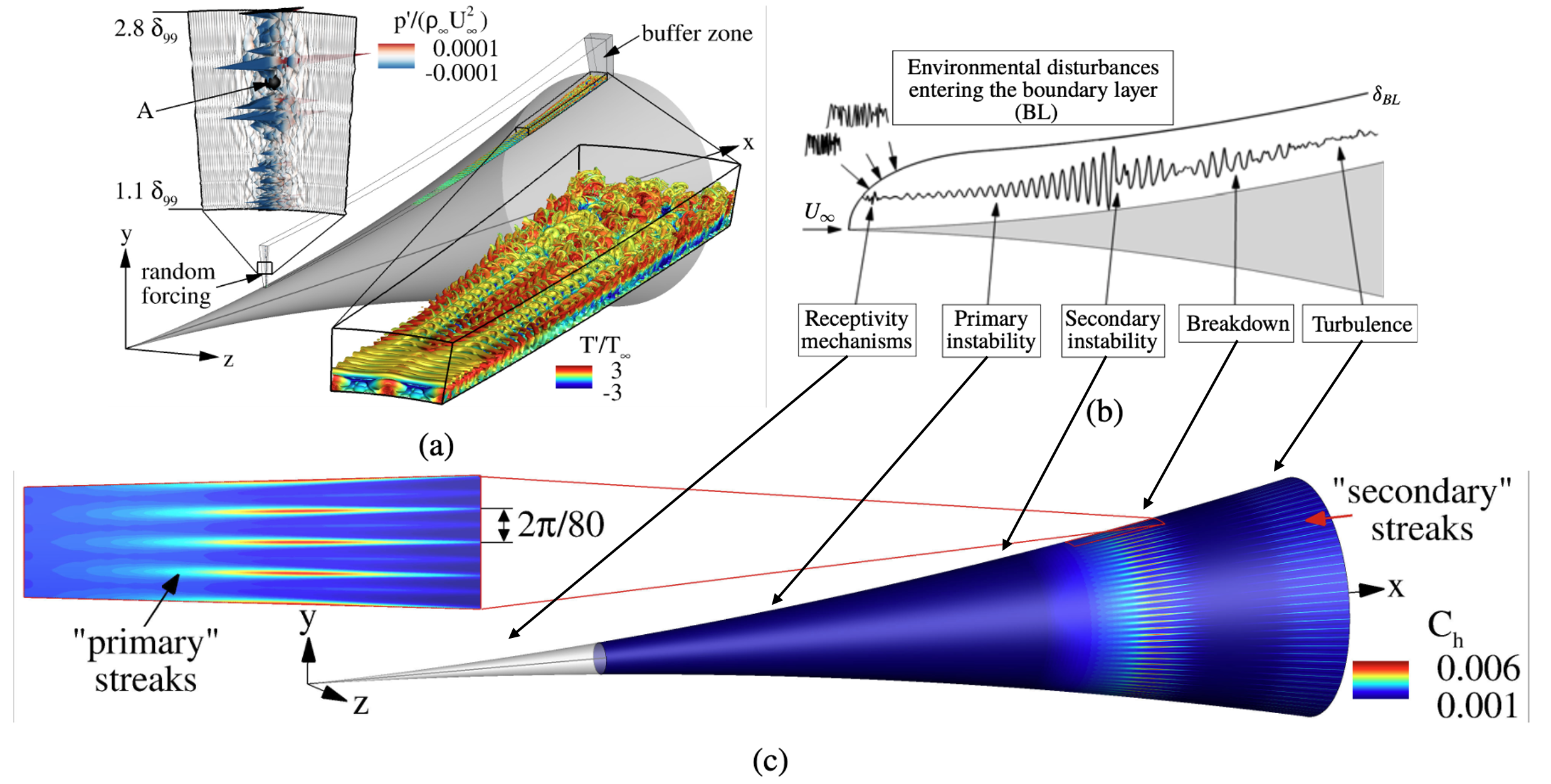}
\end{subfigure}
\caption{(a) Computational setup for the natural transition DNS dataset\citep{hader_2018},  (b) schematic of the transition stages \citet{morkovin_1994}, and (c) time-averaged Stanton number contours on the surface of the cone obtained from DNS.}% the "natural" transition DNS.} 
\label{fig:geometry}
\end{figure}

\begin{figure}[!htb]
\centering
\begin{subfigure}[]{0.5\textwidth}
\centering
\includegraphics[width=1\textwidth]{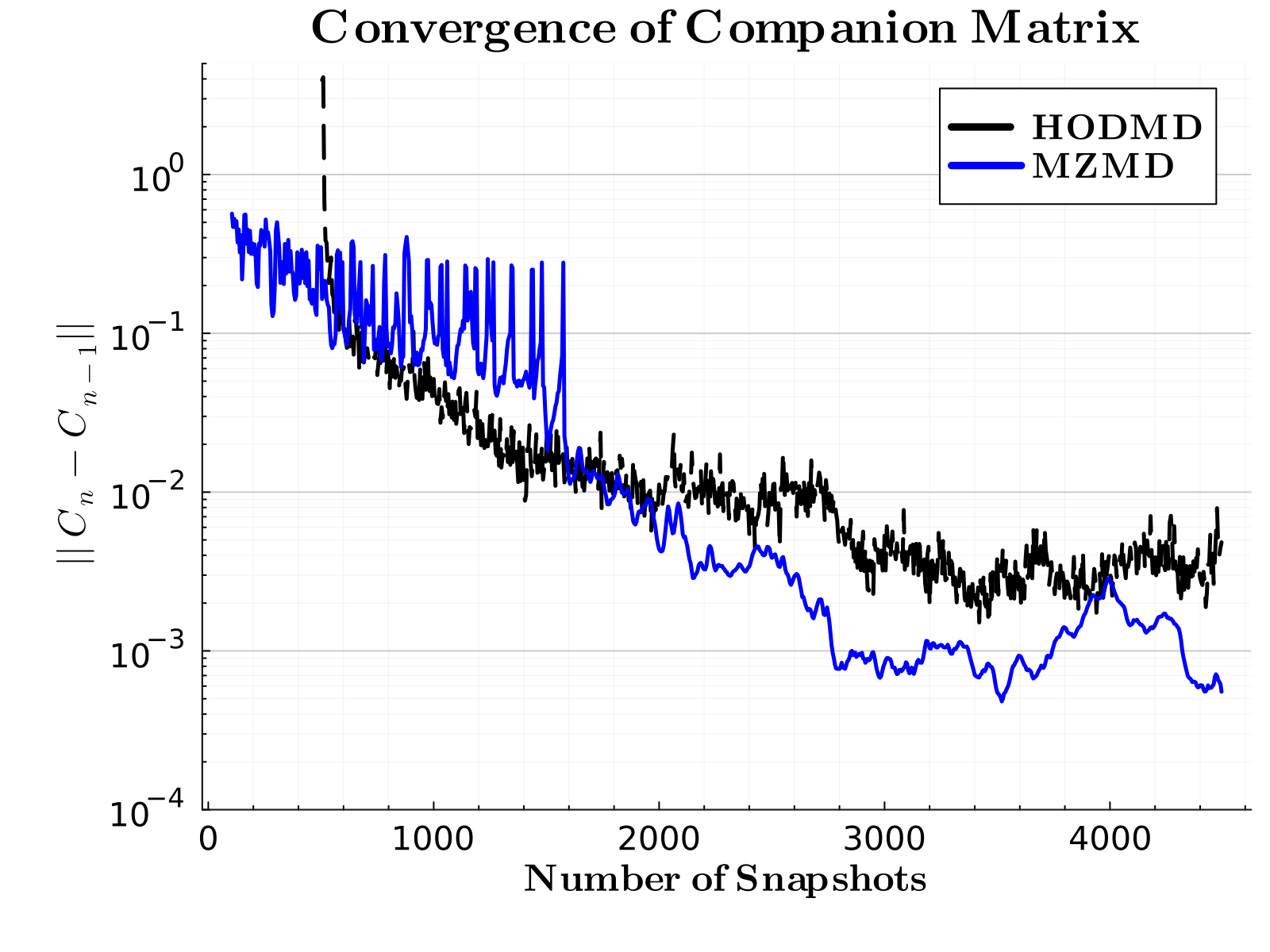}
\caption{}
\label{fig:conv_operators}
\end{subfigure}
\centering
\begin{subfigure}[]{0.44\textwidth}
\centering
\includegraphics[width=1\textwidth]{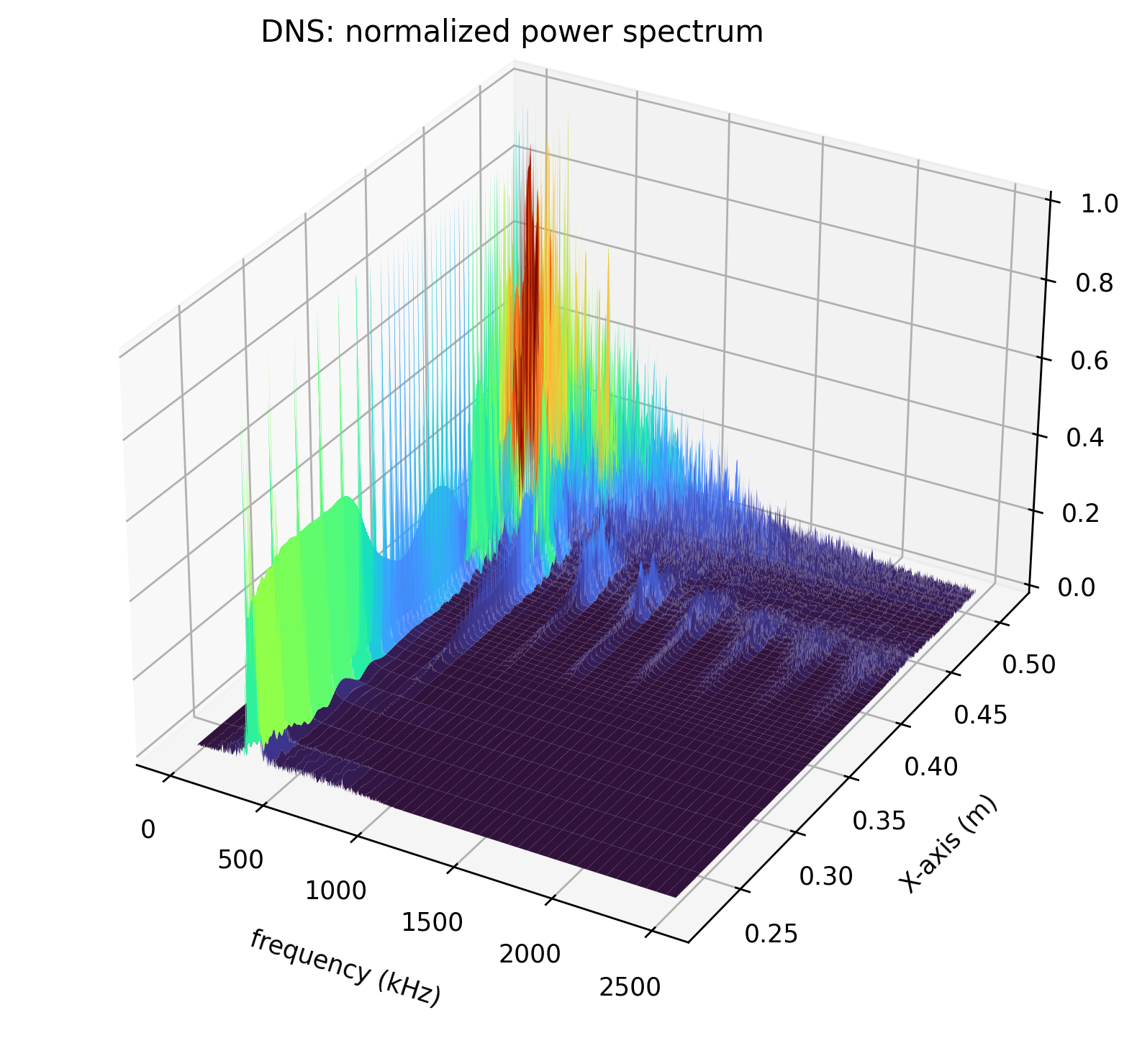}
\caption{}
\label{fig:pow_spec}
\end{subfigure}
\caption{(\subref{fig:conv_operators}) convergence of HODMD and MZMD companion matrices with respect to the amount of training data used for the hypersonic dataset, and (\subref{fig:pow_spec}) the power spectrum of DNS as a function of the downstream direction $x$.} 
\label{fig:convergence_ops}
\end{figure}

\subsection{Analysis of future state predictions}
\label{subsec:analysis_preds}

In order to determine which modal analysis technique obtains the most accurate description of the dynamics, we perform a detailed comparison of the predictive performance of each model. With MZMD, like HODMD and DMD, there is a choice to be made with the selected level of low-rank approximation $r$ performed. Figure \ref{fig:r_k_selection} (a) illustrates that the fluctuating pressure data ($p'(t,x)$) exhibits significant low-rank structure. Specifically, low-rank approximations that reconstruct 95\% and 98\%  of the original data correspond to r values of $r=100$ and $r=270$, respectively. More precisely, $r$ is chosen so that $||X - \bm U_r \bm \Sigma_r \bm V_r||_F^2 / ||X||_F^2 <\epsilon_0$, where $\epsilon_0 = \{0.05, 0.02\}$ and $|| \cdot ||_F$ is the Frobenius norm. In the remainder of the analysis, we select these two (somewhat arbitrary) thresholds for $r$. 

Another key parameter for both HODMD and MZMD is the choice of the number of time-delay embeddings ($d$) and Mori–Zwanzig (MZ) memory terms ($k$). Optimal values for these parameters can be selected based on the minimal ensemble-averaged $L_2$ prediction error. This ensemble consists of trajectories evolved from 20 independently and identically distributed (\textit{i.i.d.}) initial conditions sampled from the test dataset. Figures~\ref{fig:errors_mzmd_hodmd_l2_linf}(a,b) indicate that the optimal memory length satisfying this criterion is $k = 14$ for MZMD and $d = 4$ for HODMD. The predictions are evaluated over a duration of 300 time steps, which corresponds approximately to the advection of fluid structures by half the domain length ($\mathcal{O}(1/2L)$), equivalent to roughly 100 times the Kolmogorov timescale. For MZMD, as shown in Figure \ref{fig:r_k_selection} (b), the number of memory terms could alternatively be determined using a relative convergence criterion based on the decay of memory terms with respect to the Frobenius norm. This decay of memory contributions is expected for complex physical systems like the one under consideration, as the two-time correlation decays over time. In contrast, HODMD lacks this type of memory decay. 

Figure \ref{fig:errors_mzmd_hodmd_l2_linf} shows that HODMD performs better at reconstruction than MZMD; however, for future state prediction, MZMD outperforms HODMD. Thus, HODMD suffers from overfitting, whereas adding MZ memory in MZMD improves the generalization errors for short time predictions. Results presented are for $r=100$; at $r=270$, in which we see that HODMD becomes unstable for several choices of embedding dimension $d$ (see Figure \ref{fig:errors_oper_t}(b)). Predictive performance, rather than reconstruction, better represents model accuracy, reinforcing that HODMD's improved reconstruction is due to overfitting (a known limitation of time-delay embeddings \citep{cyoung_23, dhadphale_model_2024}). Conversely, MZMD shows consistent improvement in both reconstruction and predictive accuracy over standard DMD as the number of memory terms increases. Interestingly, this overfitting behavior observed in HODMD does not appear in the simpler 2D cylinder flow case (see Appendix, Figure \ref{sec:2d_cylinder}). For the prediction horizon considered, MZMD with $k=14$ reduces relative prediction error by approximately $3\%$ compared to standard DMD. Finally, Figure \ref{fig:errors_oper_t} shows error growth over time, highlighting MZMD's slowest error growth rate and HODMD’s instability at $r=270$.

\begin{figure}[!htb]
\centering
\begin{subfigure}[]{0.48\textwidth}
\centering
\includegraphics[width=1\textwidth]{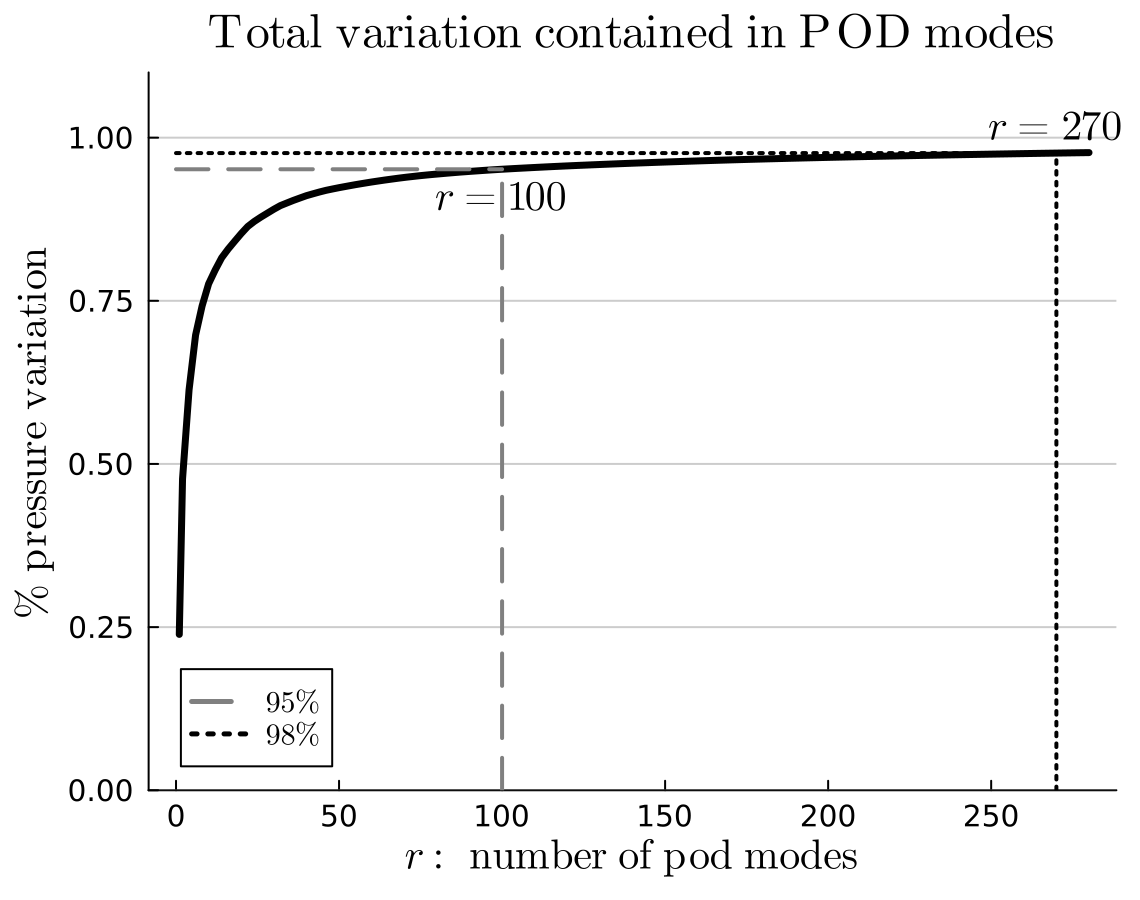}
\caption{}
\label{fig:pod_ener}
\end{subfigure}
\centering
\begin{subfigure}[]{0.48\textwidth}
\centering
\includegraphics[width=1\textwidth]{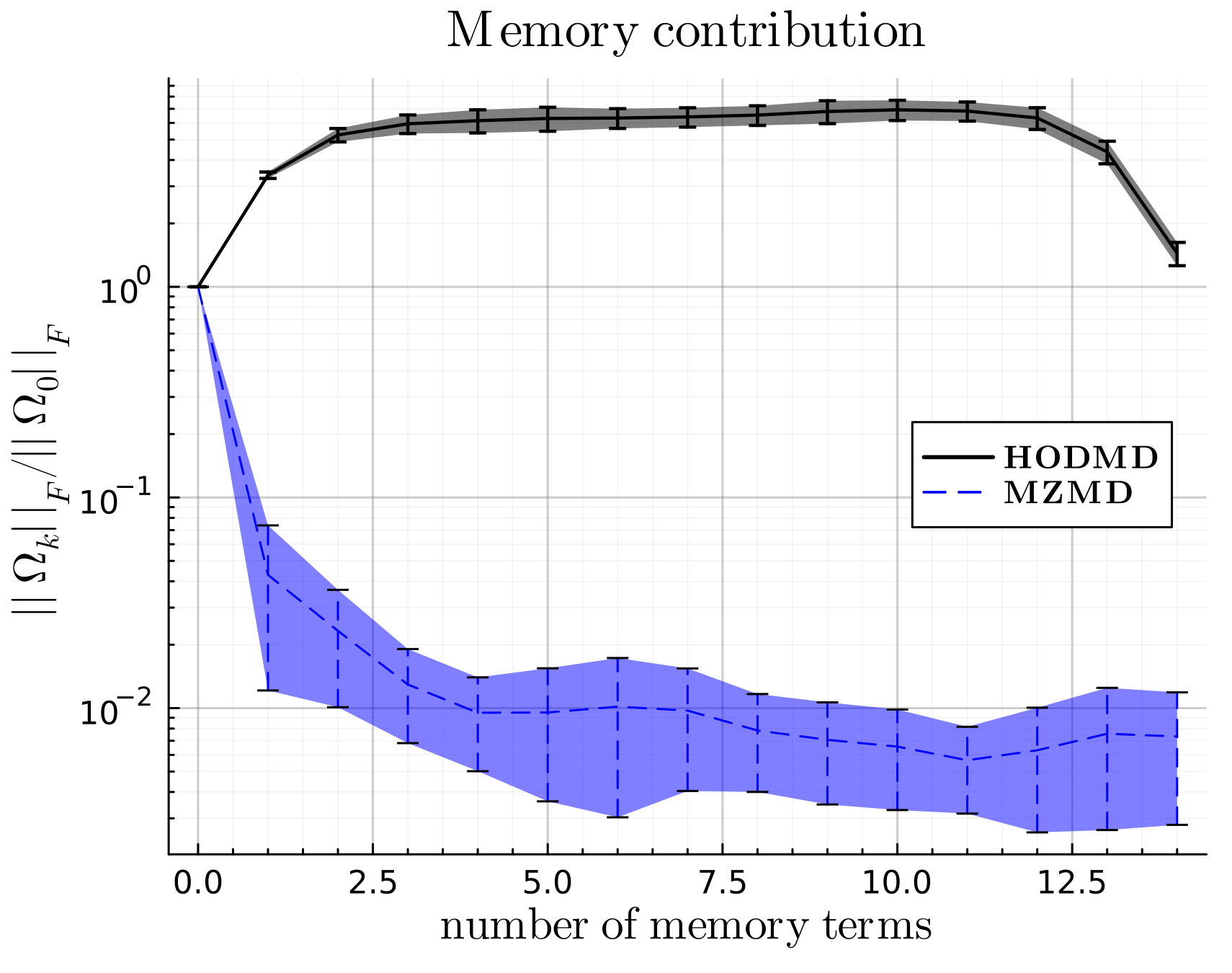}
\caption{}
\label{fig:mem_cont}
\end{subfigure}
\caption{(\subref{fig:pod_ener}) Low rank structure present in the pressure field. This figure presents the total variance contained in the POD modes of the pressure field and thresholds used for obtaining low rank approximations; $\%95$ and $\%98$ variation of the pressure fields are contained within the first $r=100$ and $r=270$ modes, respectively. (\subref{fig:mem_cont}) Relative memory contributions normalized with respect to the Markovian term. This figure demonstrates decaying memory contribution with MZMD and relatively similar  contributions of all memory terms for HODMD.}
\label{fig:r_k_selection}
\end{figure}

\begin{figure}[!htb]
\centering
\begin{subfigure}[]{0.48\textwidth}
\includegraphics[width=1\textwidth]{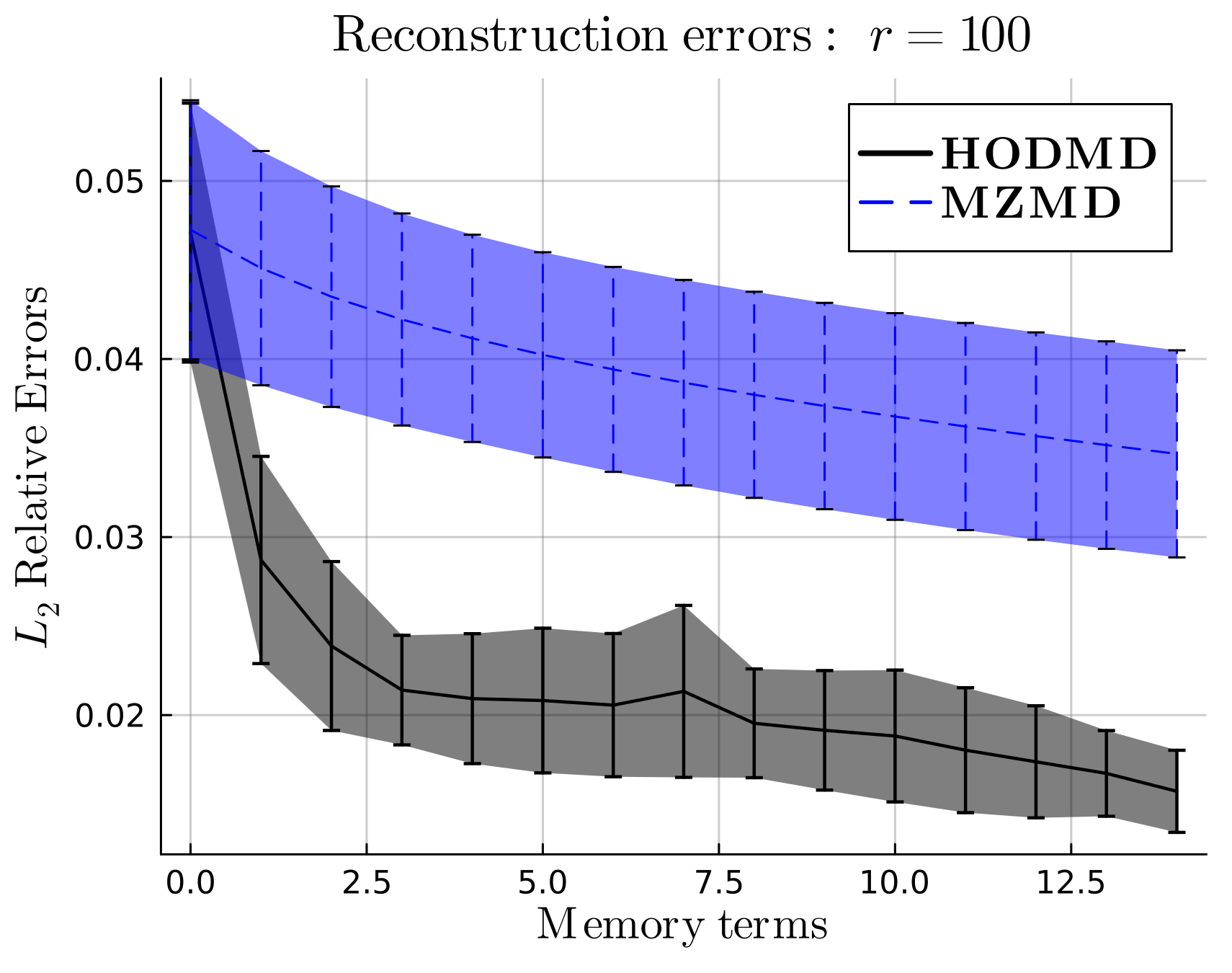}
\caption{}
\label{fig:rec_err}
\end{subfigure}
\centering
\begin{subfigure}[]{0.48\textwidth}
\includegraphics[width=1\textwidth]{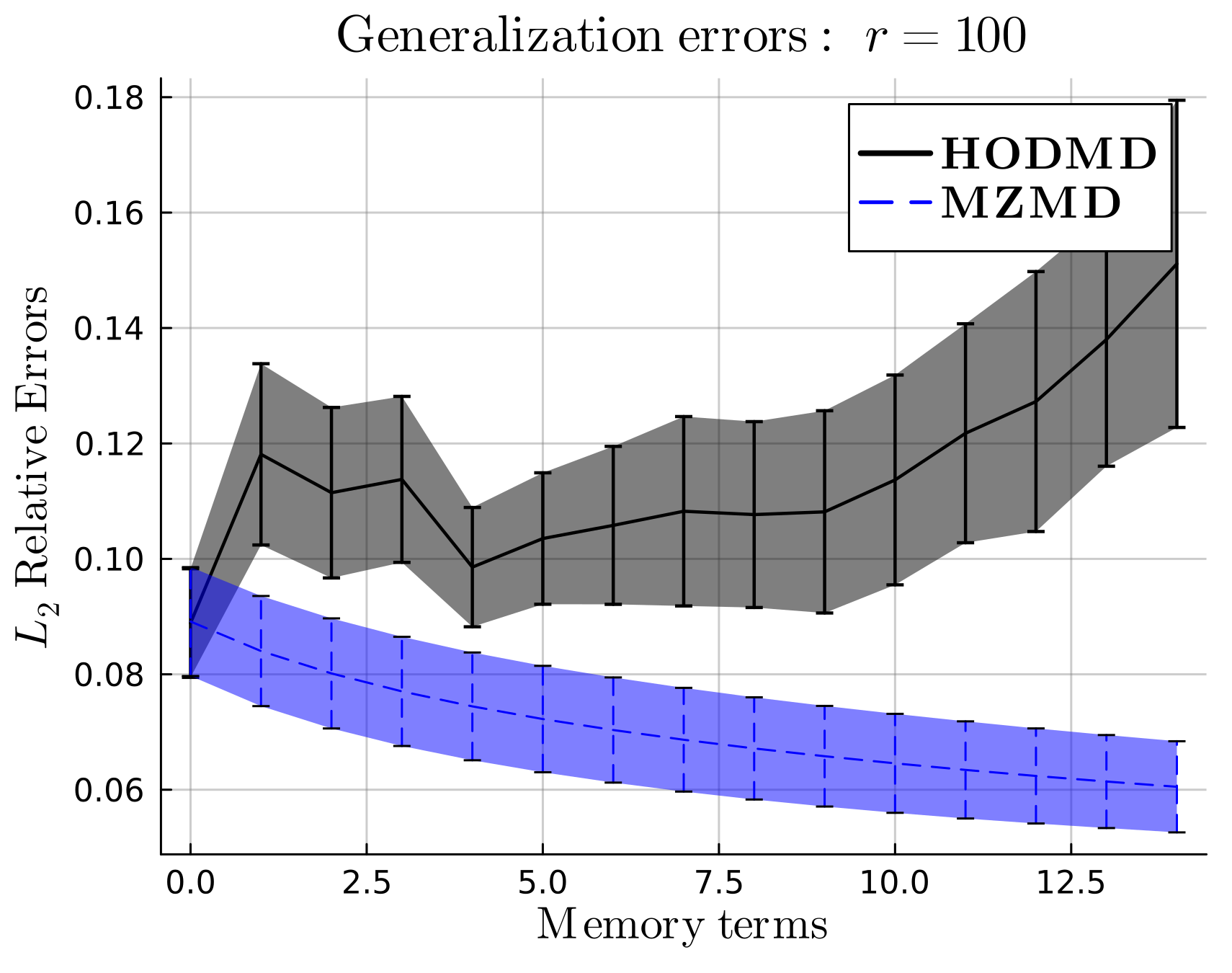}
\caption{}
\label{fig:gen_err}
\end{subfigure}
\centering
\caption{(\subref{fig:rec_err}) Relative $L_{2}$ reconstruction errors (over an ensemble of initial conditions uniformly sampled on the training data) and (\subref{fig:gen_err}) future state prediction (or generalization) over $300$ time steps, displaying error as a function of the number of memory terms (over an ensemble of initial conditions uniformly sampled on the test set).} 
\label{fig:errors_mzmd_hodmd_l2_linf}
\end{figure}

\begin{figure}[!htb]
\centering
\begin{subfigure}[]{0.48\textwidth}
\includegraphics[width=1\textwidth]{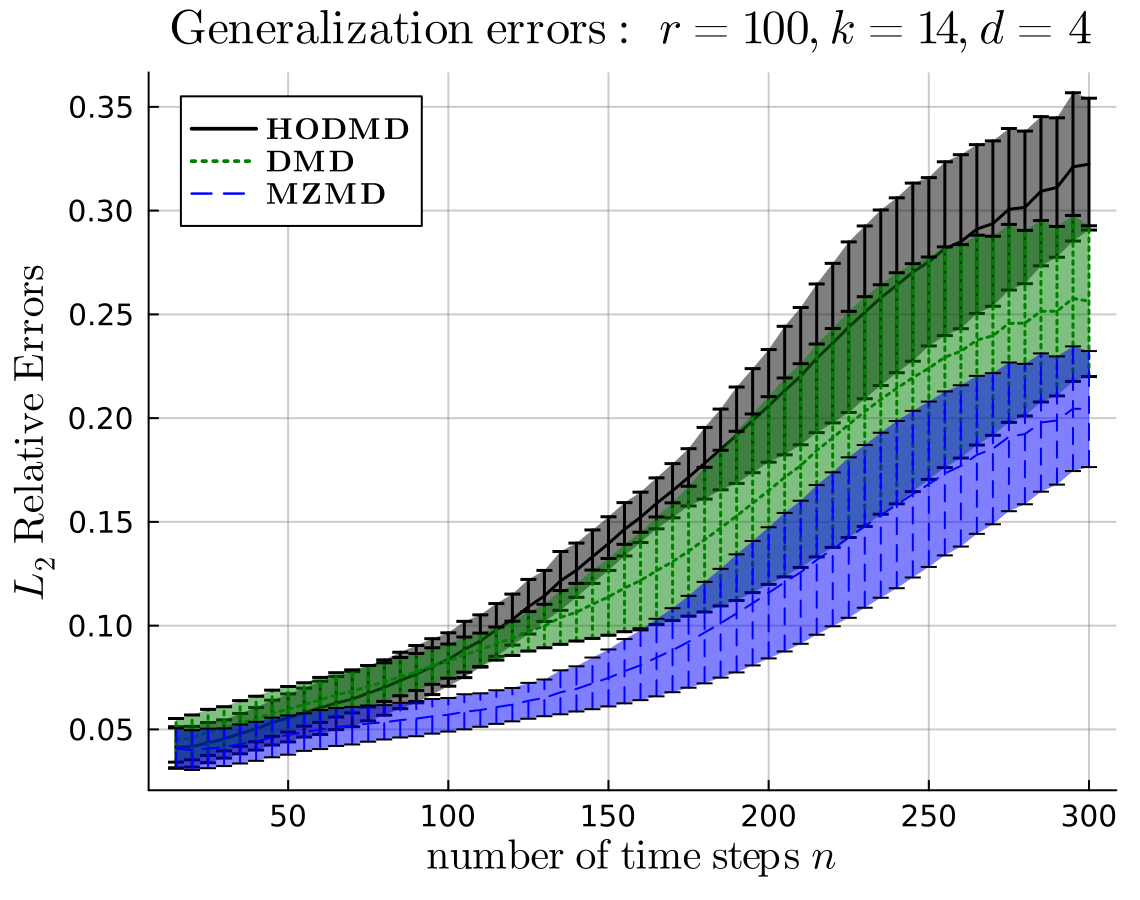}
\caption{}
\label{fig:gen_err_t_r100}
\end{subfigure}
\centering
\begin{subfigure}[]{0.48\textwidth}
\includegraphics[width=1\textwidth]{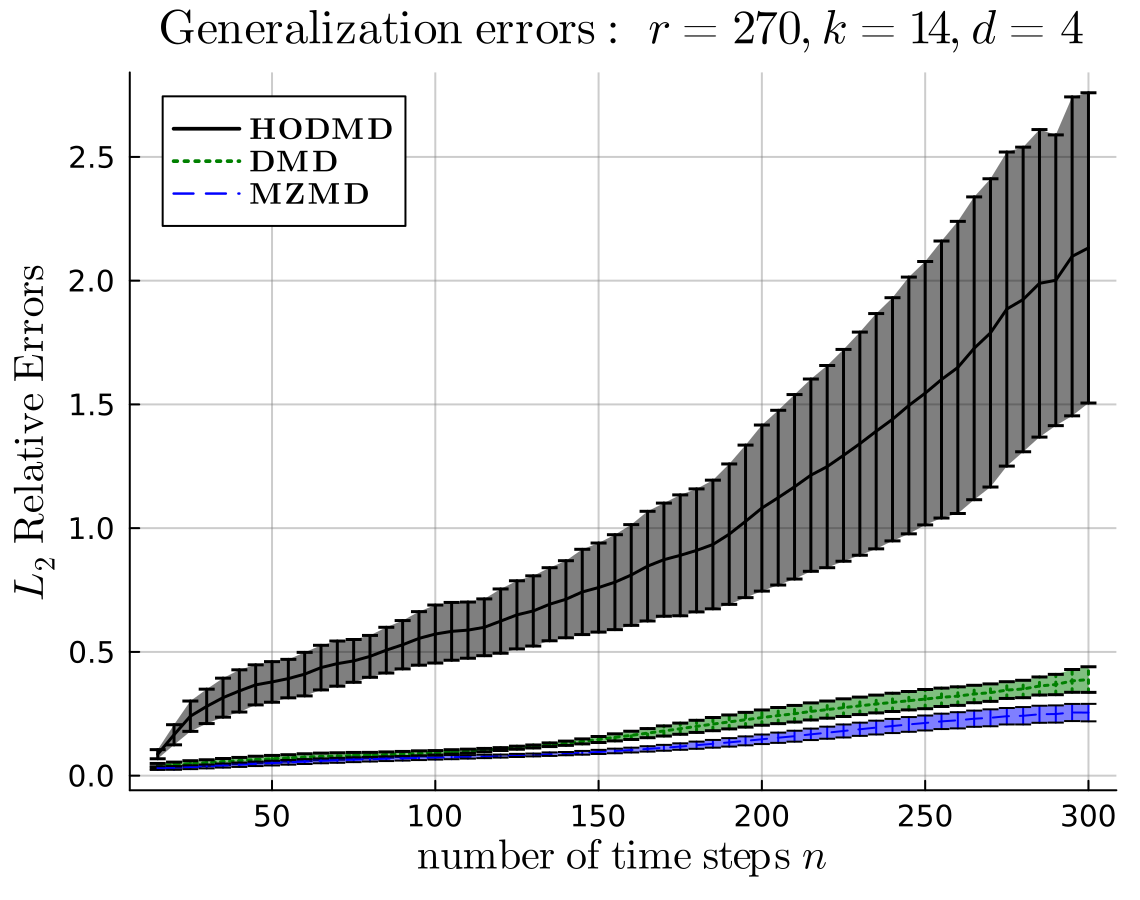}
\caption{}
\label{fig:gen_err_t_r270}
\end{subfigure}
\centering
\caption{Comparing future state prediction errors over time of DMD, HODMD, and MZMD. (\subref{fig:gen_err_t_r100}) $r = 100$ (which contains $\%95$ of total pressure variation), (\subref{fig:gen_err_t_r270}) $r=270$. For both $r$ values, MZMD uses $k=14$, while HODMD uses $d=4$ memory terms. We include the case of $r=270$ to show that in some cases HODMD is found to blow up (it introduces unstable modes, i.e. eigenvalues that lie outside the unit circle), but in the remainder of the analysis we perform all comparisons with $r=100$.} 
\label{fig:errors_oper_t}
\end{figure}

So far we have investigated the aggregated statistics and integrated metrics to compare the prediction accuracies for each method. Next, we examine where in the flow each model is more or less accurate by measuring the point-wise Mean Squared Error (MSE) time-averaged over the future state predictions. This is reported in Figure \ref{fig:pointwise_mse_models}, which shows that the largest errors occur immediately after the onset of the strongly nonlinear portion of the transition region. This is the region of the flow where the hot streaks are generated. Compared to HODMD and DMD, MZMD provides a significant improvement to the prediction of the hot streaks. %This is also the region where the largest improvements to the long time statistics occurred for MZMD. 

\begin{figure}[!htb]
\centering
\begin{subfigure}[]{1\textwidth}
\includegraphics[width=1\textwidth]{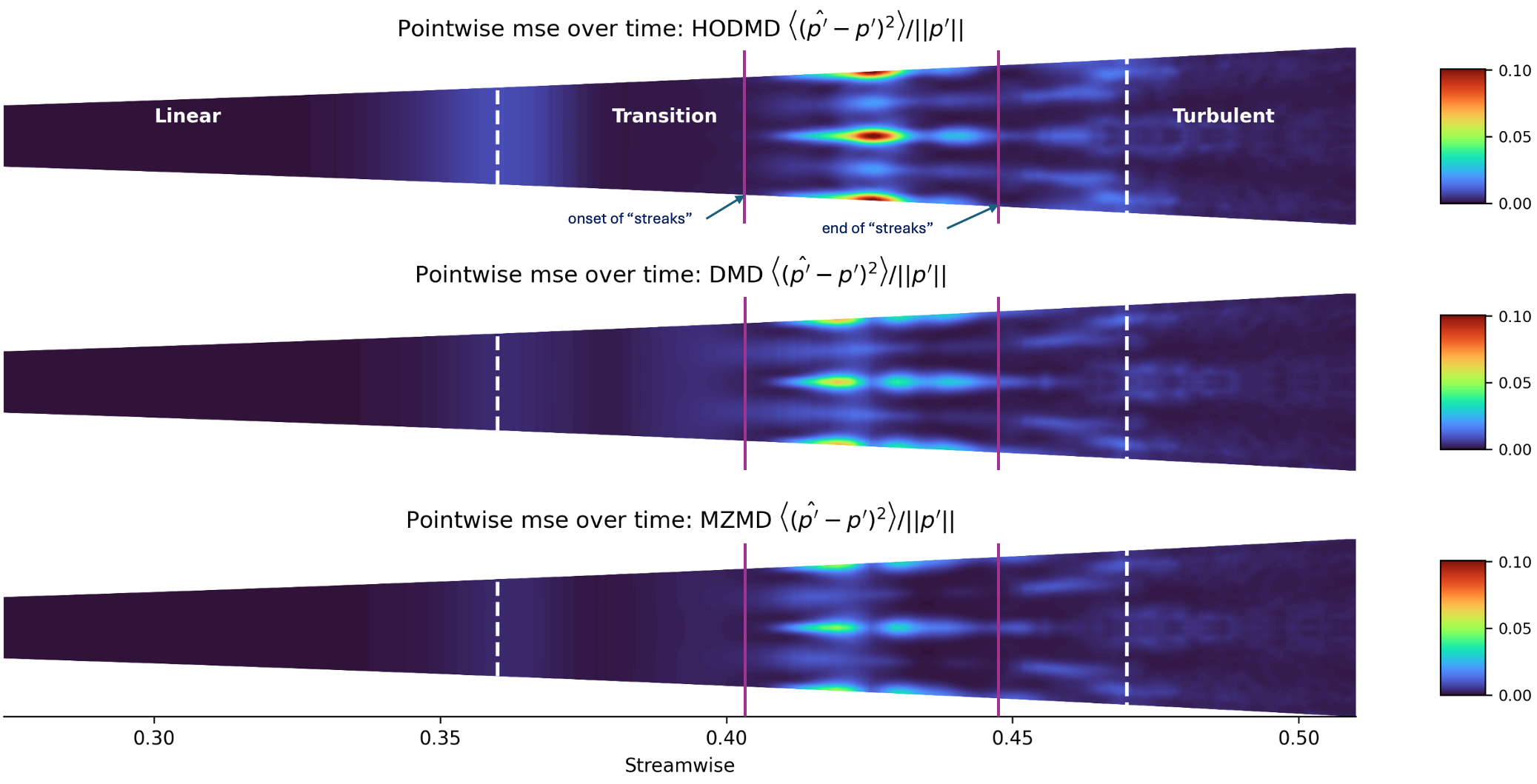}
\end{subfigure}
\caption{Pointwise, time-averaged mean squared error (MSE) for predictions on the test dataset, visualized by mapping the conical geometry onto a two-dimensional plane for clarity. Notably, MZMD demonstrates a significant improvement in forecasting future flow states, achieving substantially lower prediction errors compared to alternative methods, particularly within the highly nonlinear flow regime associated with hot-streak formation. The primary instability exhibits exponential growth within the linear region, subsequently saturating and triggering secondary instabilities in the transitional region. Ultimately, this cascade culminates in a breakdown into turbulence within the turbulent region.} 
\label{fig:pointwise_mse_models}
\end{figure}

%The question remains, how does including memory improve the reconstruction of the nonlinear region? 
Next, we seek to understand what information the MZMD memory terms contain and specifically, what regions of the flow memory terms contribute the most. For the prediction horizon considered here, Figure \ref{fig:mzmd_mem_improvement} shows that each memory term contributes primarily to the transition region, specifically concentrated on the hot streak structures. In the breakdown region (after streamwise position $\sim 0.42$) each successive memory term provides a monotonically decreasing contribution (as expected from memory decay), with the corresponding flow structures becoming increasingly diffuse around the hot streaks. Right before the full breakdown over the first part of the hot streaks (streamwise position $\sim 0.42$), the memory contribution increases first until the fourth memory term, after which the contributions of higher-order terms monotonically decrease. This indicates longer memory in this region, where the full breakdown has not started yet, so that higher-order MZ memory terms can account for the longer time the dominant and secondary instabilities take to develop. Afterwards, the memory effects shorten. MZ memory is therefore driving the improvement of the prediction accuracy by concentrating its impact immediately after the onset of the strongly nonlinear transition regions of the flow.

\begin{figure}[!htb]
\centering
\begin{subfigure}[]{1\textwidth}
\centering
\includegraphics[width=1\textwidth]{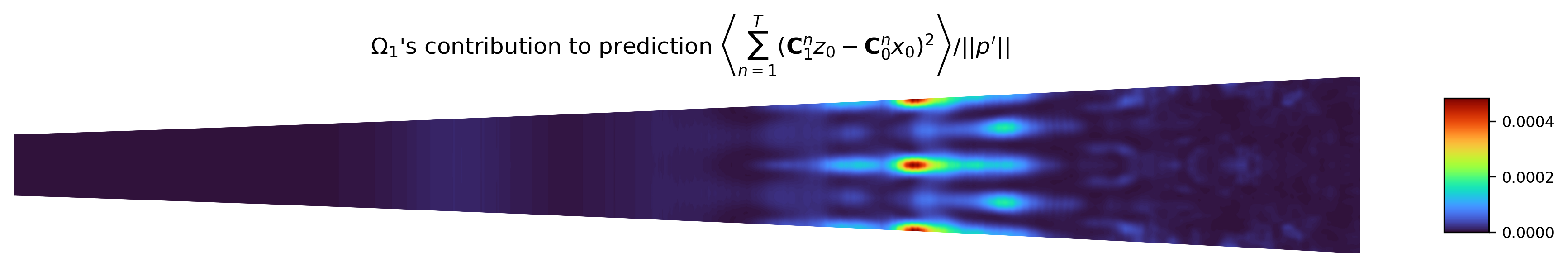}
\end{subfigure}
\begin{subfigure}[]{1\textwidth}
\centering
\includegraphics[width=1\textwidth]{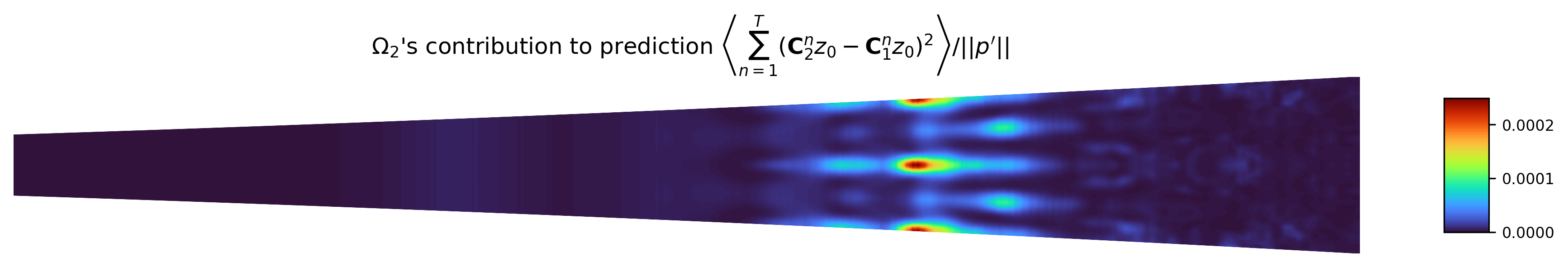}
\end{subfigure}
\begin{subfigure}[]{1\textwidth}
\centering
\includegraphics[width=1\textwidth]{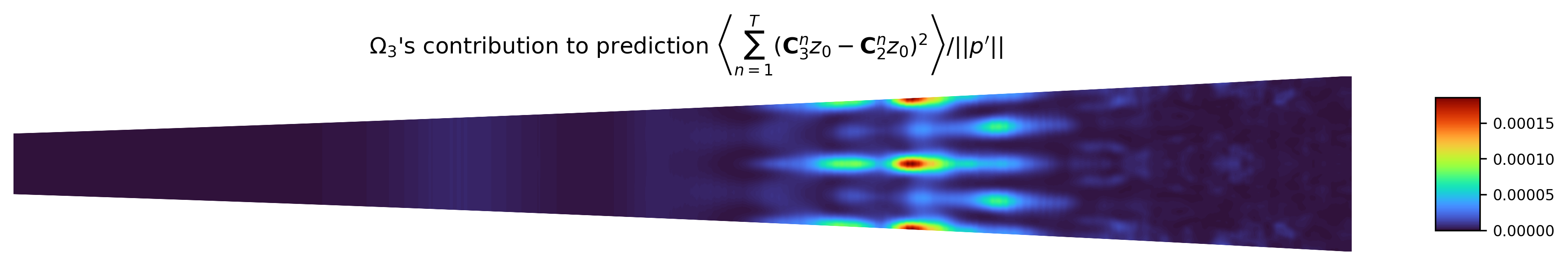}
\end{subfigure}
\begin{subfigure}[]{1\textwidth}
\centering
\includegraphics[width=1\textwidth]{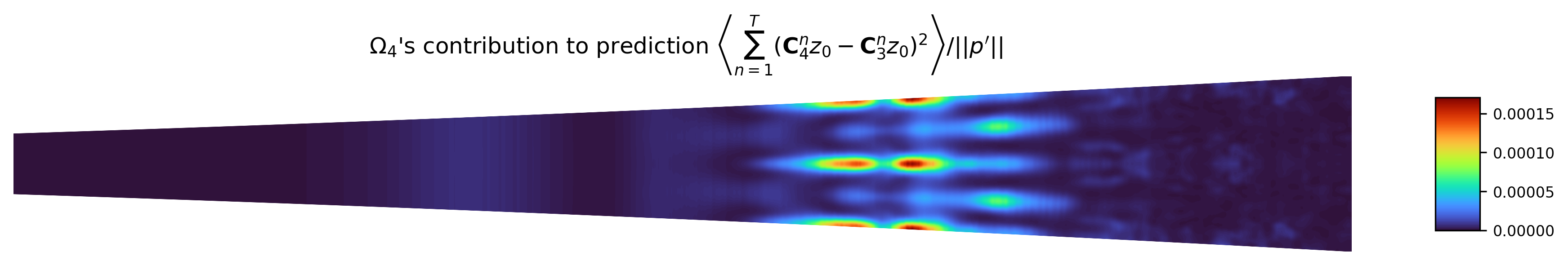}
\end{subfigure}
\begin{subfigure}[]{1\textwidth}
\centering
\includegraphics[width=1\textwidth]{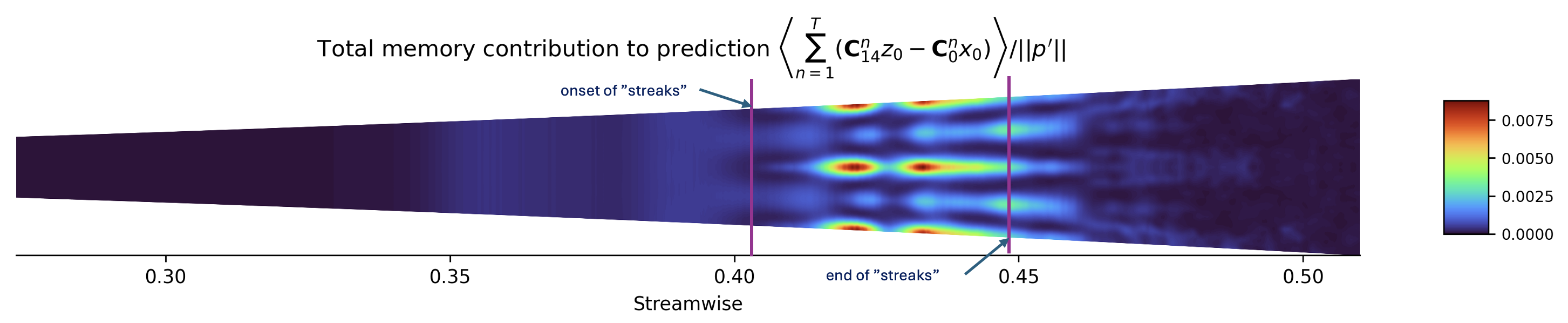}
\end{subfigure}
\caption{The contribution of each MZMD memory term to the future state prediction (top to bottom). $\bm C_k$ is defined as the companion matrix containing $k$ memory terms. Each memory term primarily contributes to the transition region, mainly improving the resolution of the hot streaks. }
\label{fig:mzmd_mem_improvement}
\end{figure}

\subsection{Modal analysis}
\label{sec:modal_analysis_results}

The above analysis is important when considering which technique produces the most accurate modal decomposition (or spectral representation), and, therefore, captures the large-scale coherent structures most accurately. As demonstrated above, MZMD provides the most accurate model for future state predictions of the Mach-6 boundary layer. In the following, we perform the modal and spectral analysis of each method in order to draw further numerical distinctions. Then, we present an analysis of the MZMD modes to investigate the physical mechanisms that generate the hot streaks.

The eigenvalues and amplitudes of the DMD, HODMD, and MZMD modes are compared in Figure \ref{fig:evals_spectrum}. The eigenvalues determine the temporal evolution of the corresponding modes (as shown in equation \ref{eq:sol_companion_modes}); eigenvalues located inside, on, and outside the unit circle correspond to decaying, periodic, and unstable modes, respectively. The largest amplitude mode is associated with the fundamental (or primary) instability wave predicted by Linear Stability Theory (LST), where the dominant mode is amplified and grows exponentially in the downstream direction \citep{hader_fasel_2019, chynoweth_2019}. However, each method considered here produces a slightly different result for this frequency. This dominant frequency is estimated from DNS to be $f_{DNS} \approx 294.38 (kHz)$ on the training set by computing the frequency associated with the largest amplitude of the power spectrum in the linear region (see also Figure \ref{fig:convergence_ops}). As seen in Figure \ref{fig:evals_spectrum}, taking the frequency associated with the largest amplitude, we find that DMD estimates $f_{DMD} = 293.18 (kHz)$,  HODMD $f_{HODMD} = 286.09 (kHz)$, and MZMD $f_{MZMD} = 294.80 (kHz)$. Therefore, of the three methods, MZMD produces the most accurate temporal behavior of the fundamental mode. This also helps to explain why MZMD improves future state predictions. Accurate capture of the dominant mode is also important for predicting the higher harmonics that are integer multiples of $f_{primary}$. These higher harmonics play a crucial role in the laminar-turbulent transition mechanisms and the development of the hot streaks. We see that both HODMD and MZMD increase the spectral complexity over DMD; however, MZMD introduces a broader spectrum as opposed to HODMD, which contains a large spectral gap.

Figure \ref{fig:evals_spectrum} illustrates the influence of memory terms on the overall spectral structure. Both time-delays and MZ memory act to increase the spectral complexity (compare Figures \ref{fig:evals_spectrum} (b) and (c) with \ref{fig:evals_spectrum} (a)), as discussed in Section \autoref{sec:mz_background}. For HODMD and MZMD, the spectrum corresponding to the the leading $r$ modes ranked by amplitude, closely resembles that obtained via standard DMD, with only slight differences: MZMD slightly amplifies the first higher harmonic and and aligns its dominant frequencies more closely with DNS. While MZMD produces a quasi-uniform spectral filling, the new modes generated by HODMD are primarily clustered near the fundamental frequencies and their higher harmonics, presenting a larger spectral gap with respect to the highest frequency components in the data. 

Both MZMD and HODMD produce an identical number of modes with eigenvalues on the unit circle as DMD (Figure \ref{fig:evals_spectrum} (d), (e), (f)); yet HODMD and MZMD supply additional modes with eigenvalues strictly inside the unit circle, representing transient, decaying dynamics. Nevertheless, the modes added by HODMD tend to decay more slowly compared to the transient memory modes produced by MZMD, as their modulus are generally larger than those for the modes produced by MZMD.

MZ memory also improves stability. In Figure \ref{fig:evals_spectrum_stability}, incorporating MZ memory effectively reduces the modulus of the eigenvalues that extend beyond the unit circle relative to those computed by DMD, pulling them closer to the unit circle. Although none of the three data-driven methods formally guarantees stability, MZ memory acts to perturb the eigenvalues associated with the periodic DMD modes in a stabilizing direction. HODMD exerts a similar, but weaker, stabilizing influence.

The spatial structures of the dominant MZMD modes are shown in Figure \ref{fig:phi_mz}.  The fundamental (primary) instability mode $\phi_1$ appears in the early and mid-transition region; its first and second harmonics ($\phi_2$, $\phi_3$ resp.) are activated further downstream and are localized progressively deeper into the core transition zone, with structures resembling the hot streaks. We also identify a dominant memory mode $\phi_{mem}$, which is nearly periodic and thus contributes most to long-time dynamics compared to the rest of the MZ memory modes. The leading memory mode, $\phi_{mem}$, peaks where the streaks originate but also maintains significant amplitude deep into the late-transition and turbulent sectors. Such a distribution can be interpreted in the context of the MZ formalism: the memory term encodes the influence of the unresolved variables on the reduced variables, which are most energetic in the nonlinear transition and turbulence layers. It is challenging to go much further beyond this general interpretation of the memory mode, but another observation can be made: $\phi_{mem}$ is an echo of the fundamental mode that encapsulates some approximate feedback of unresolved turbulent motions, modulating the streak envelope and correcting the phase and amplitude information missed by DMD. 

Next, we perform a quantitative comparison of the dominant modes obtained from MZMD (Figure \ref{fig:phi_mz}), DMD (Figure \ref{fig:phi_dmd}), and HODMD (Figure \ref{fig:phi_hodmd}) by using an inner product-based similarity measure defined as:
\begin{equation}
 s_{ij} :=  \cfrac{|\phi_{i} \cdot \phi_{j}|^2}{||\phi_{i}||_2^2 ||\phi_{j}||_2^2}.  
 \end{equation}
For the dominant mode and the first two harmonics, these similarities are found to be $s_{11} = 0.98$, $s_{22} = 0.92$, $s_{33} = 0.17$ between MZMD and DMD. Thus, MZMD has the largest impact on the higher harmonics. The similarities between MZMD and HODMD modes are $s_{11} = 0.98$, $s_{22} = 0.83$, $s_{33} = 0.64$, while those between HODMD and DMD are $s_{11} = 0.76$, $s_{22} = 0.20$, $s_{33} = 0.55$. This demonstrates that each form of memory not only increases the spectral complexity over DMD, but also acts to perturb the spatial structures of the DMD modes. In combination with the more accurate frequency representation of MZMD, these differences further elucidate the improvement seen in future state predictions.

Having established that MZMD provides the most faithful spectral representation (Section \ref{subsec:analysis_preds}), we now turn to the physical insights that can be gained by its modal decomposition. Returning to Figure \ref{fig:phi_mz}, the location where the hot streaks (see the discussion in \citep{hader_2018} for more details), appear and disappear in the time-averaged Stanton number contours on the surface of the cone are marked with solid black lines. The leading (fundamental) MZMD mode grows axisymmetrically through the early and mid-transition zones with a frequency of $f\approx 300$ kHz, matching linear-stability predictions for the dominant axisymmetric second-mode acoustic wave (or fundamental mode) observed on a flared cone at Mach 6 \citep{hader_2018}. The fundamental mode of MZMD evolves from an initial axisymmetric growth in the early transition stage to a noticeable azimuthal modulation as primary streaks emerge, capturing the initial nonlinear saturation processes.  The azimuthal modulation starts near the onset of the hot-streaks, matching the spacing observed in Stanton number contours (Figure \ref{fig:geometry} (c)) \citep{hader_2018}. This azimuthal modulation indicates that MZMD captures an imprint of the initial stages of the nonlinear mechanism responsible for generating the hot steady streaks via an oblique mode breakdown \citep{hader_2018}.

Further downstream, the first higher harmonic at $f \approx 600$ kHz emerges once the fundamental frequency reaches a sufficiently large amplitude. Initially, this harmonic also exhibits a predominantly axisymmetric signature but then experiences similar azimuthal modulations near the region where the hot streaks begin to form. The activation of this higher harmonic farther downstream compared to the fundamental frequency is consistent with the understanding that this higher harmonic is nonlinearly generated by a self-interaction of the primary frequency once sufficiently large amplitudes are reached \citep{kimmel_1991}. The azimuthal wavelength of this modulation corresponds to the secondary instability wave known to resonate most strongly in hypersonic flows \citep{hader_2018}. The quadratic nonlinearity of the Navier-Stokes equations gives rise to the generation of higher harmonics. Because these quadratic nonlinearities lead to an effective doubling in the frequency of the fundamental mode, the spatial manifestation of the first higher harmonic naturally exhibits a spatial wavelength that is approximately half that of the fundamental mode. This frequency-doubling mechanism leads to the emergence of finer-scale structures downstream, which is confirmed in the MZMD modes seen in Figure \ref{fig:phi_mz}, where the wavelength of the first higher harmonic is approximately half that of the fundamental mode.

The second higher harmonic at the frequency of $f \approx 900$ kHz, which emerges from the interaction between the fundamental mode and the first higher harmonic, is observed even further downstream as a result of continued nonlinear energy transfers and modal interactions in the evolving boundary layer. Additionally, this mode is now almost entirely concentrated in the region corresponding to the structure of the hot streak, and is significantly different then the corresponding DMD and HODMD second higher harmonic. Additionally, this mode’s delayed onset reflects the fact that it's developed later in the nonlinear stage of breakdown; caused by the interaction of the primary and first harmonic modes interacting further into the energy cascade process. These observations substantiate the capability of MZMD to capture not only the primary instability but also the subsequent higher harmonics and their interactions; key factors in understanding the nonlinear stages of transition to turbulence for this Mach-6 boundary layer flow.

%============== Figs

%==============  Spectrum  
\begin{figure}%[!H]
\centering
\begin{subfigure}[]{0.3\textwidth}
\centering
\includegraphics[width=1\textwidth]{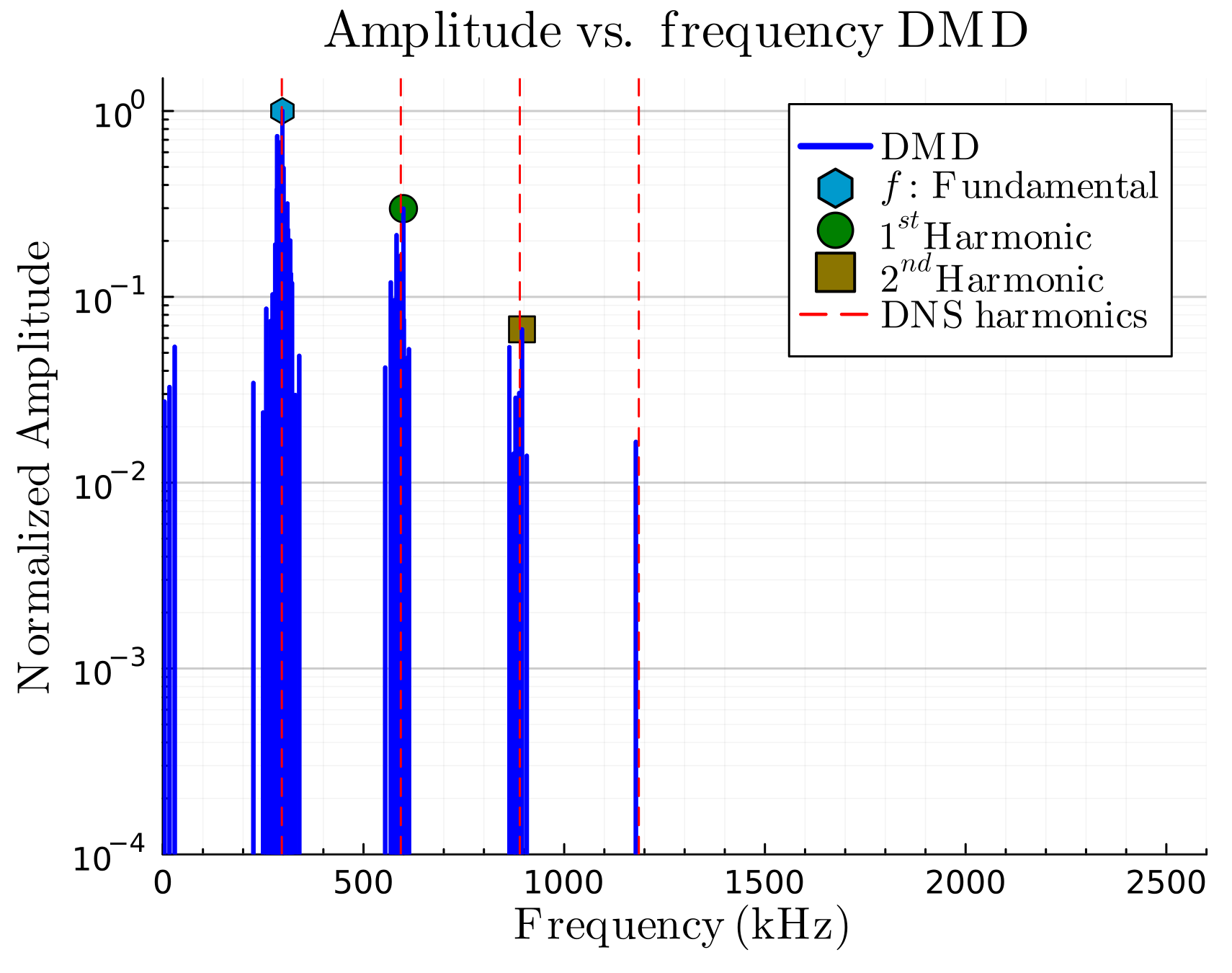}
\caption{}
\label{fig:dmd_amp}
\end{subfigure}
\centering
\begin{subfigure}[]{0.3\textwidth}
\centering
\includegraphics[width=1\textwidth]{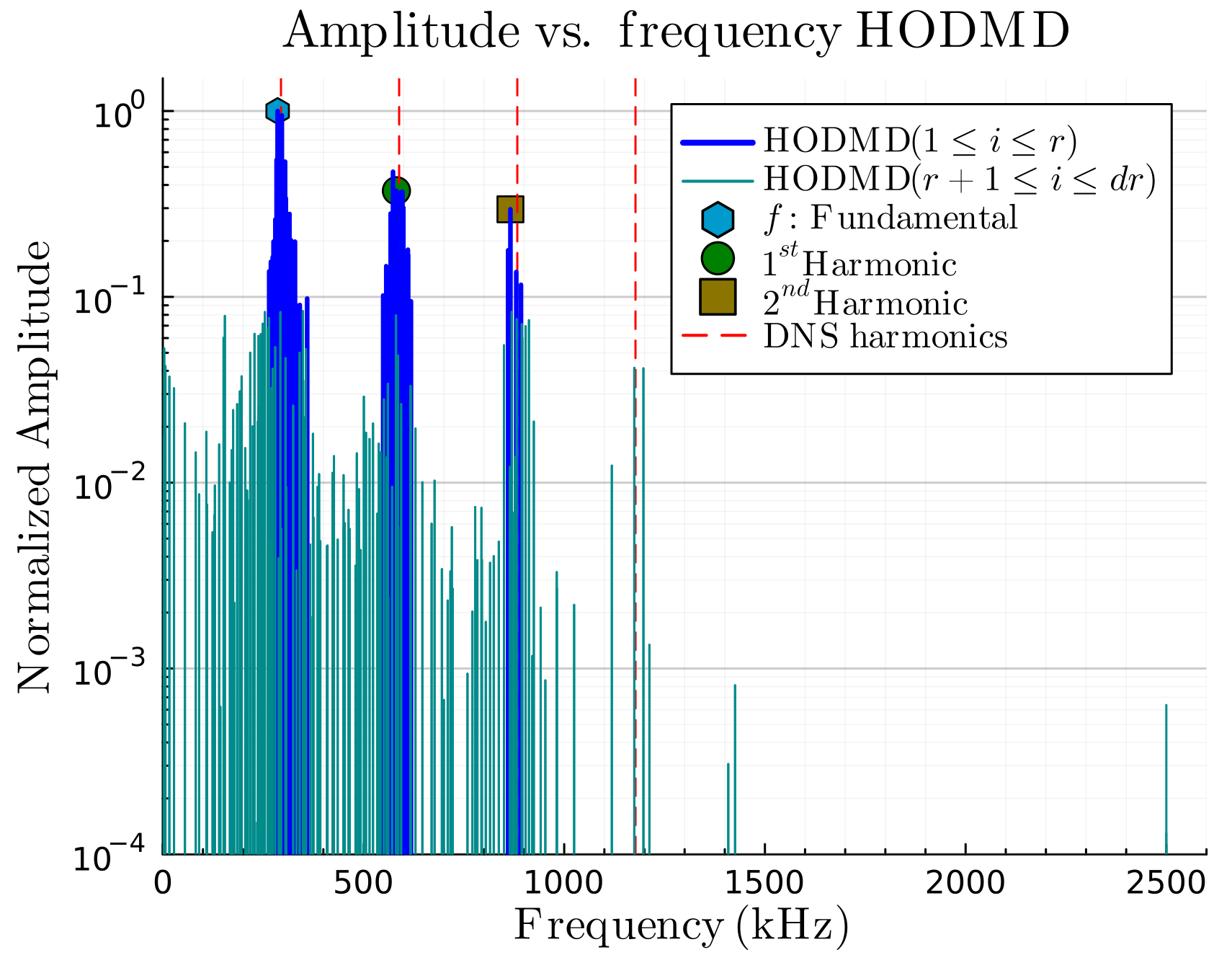}
\caption{}
\label{fig:hodmd_amp}
\end{subfigure}
\begin{subfigure}[]{0.3\textwidth}
\centering
\includegraphics[width=1\textwidth]{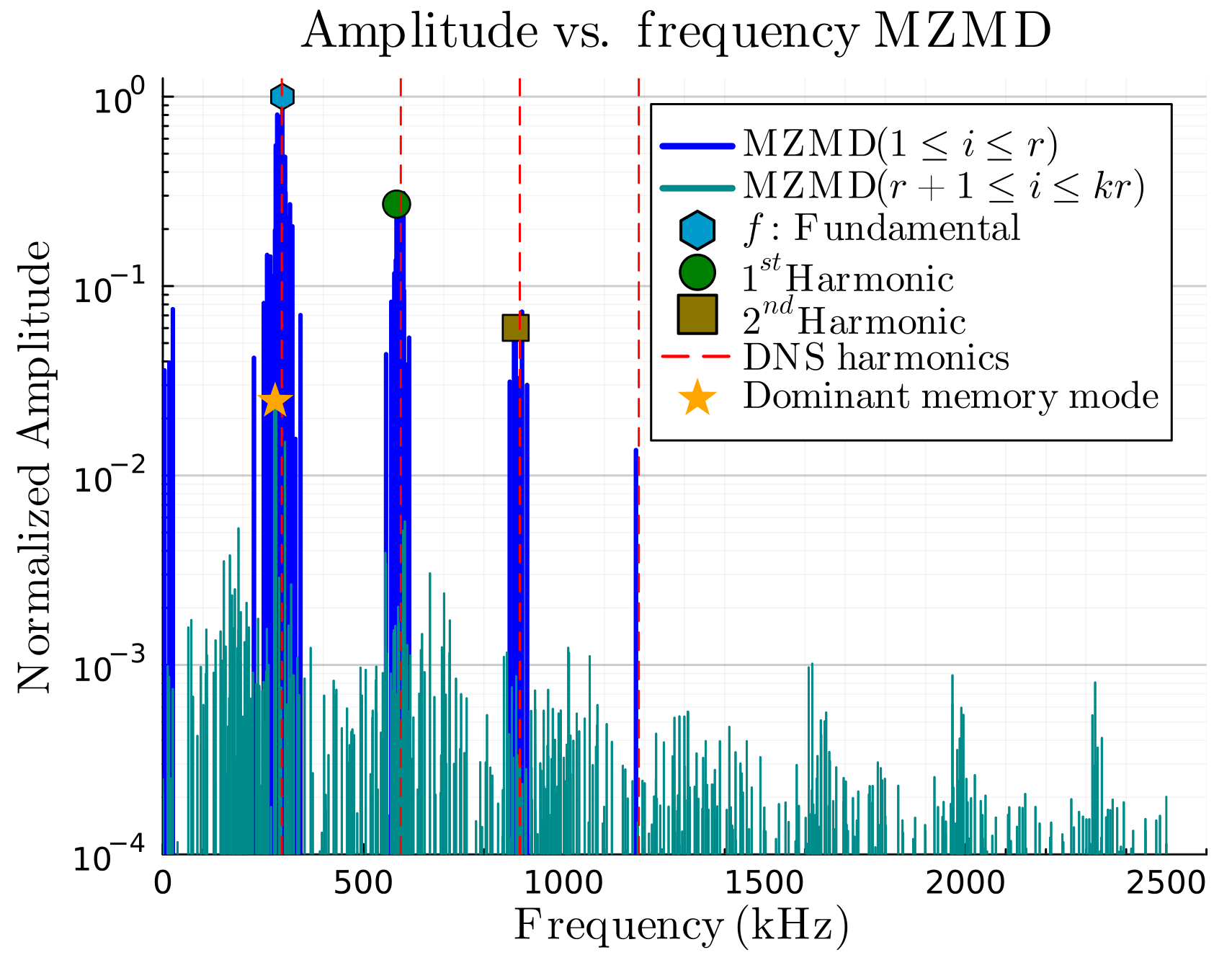}
\caption{}
\label{fig:mzmd_amp}
\end{subfigure}
\begin{subfigure}[]{0.3\textwidth}
\centering
\includegraphics[width=1\textwidth]{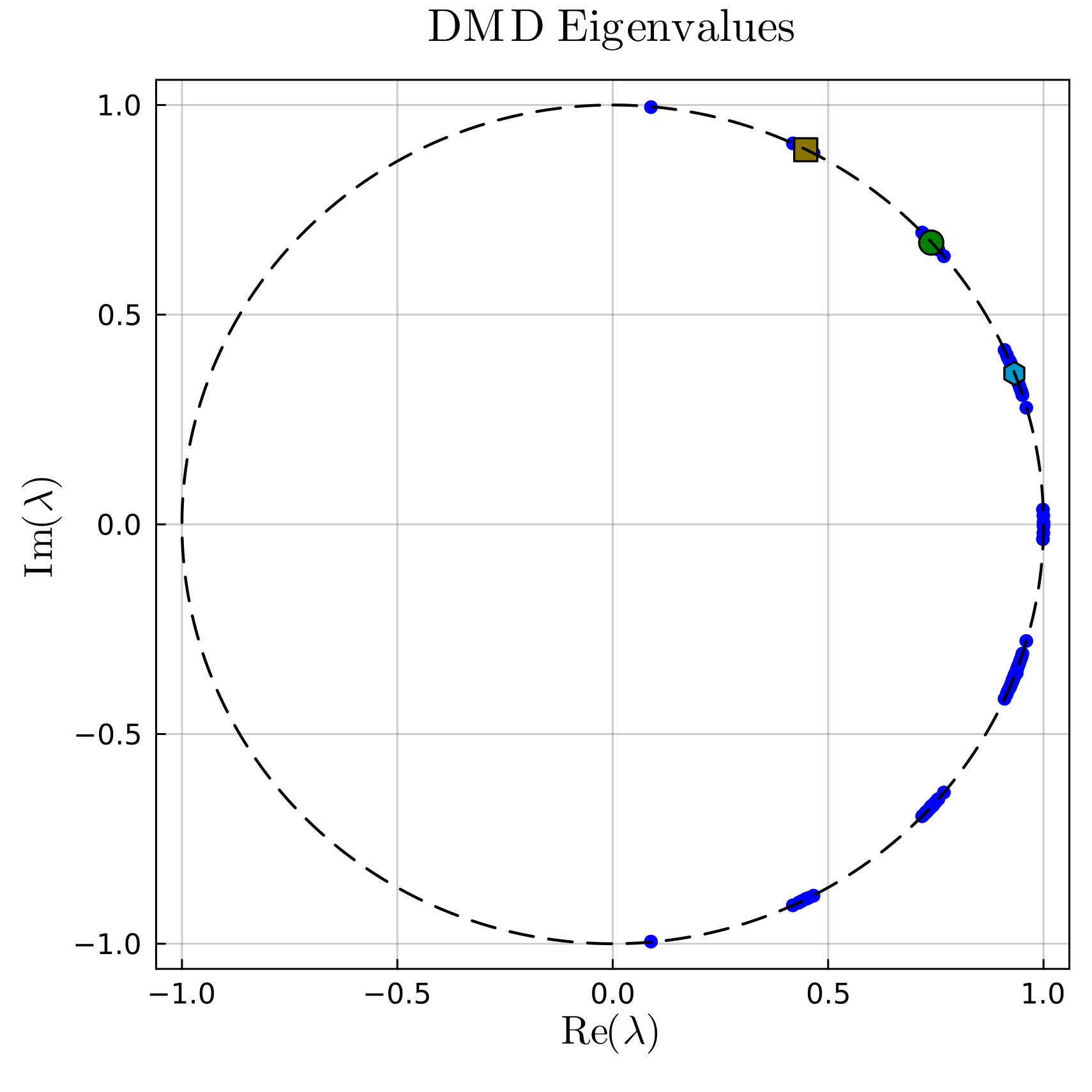}
\caption{}
\label{fig:dmd_ev}
\end{subfigure}
\begin{subfigure}[]{0.3\textwidth}
\centering
\includegraphics[width=1\textwidth]{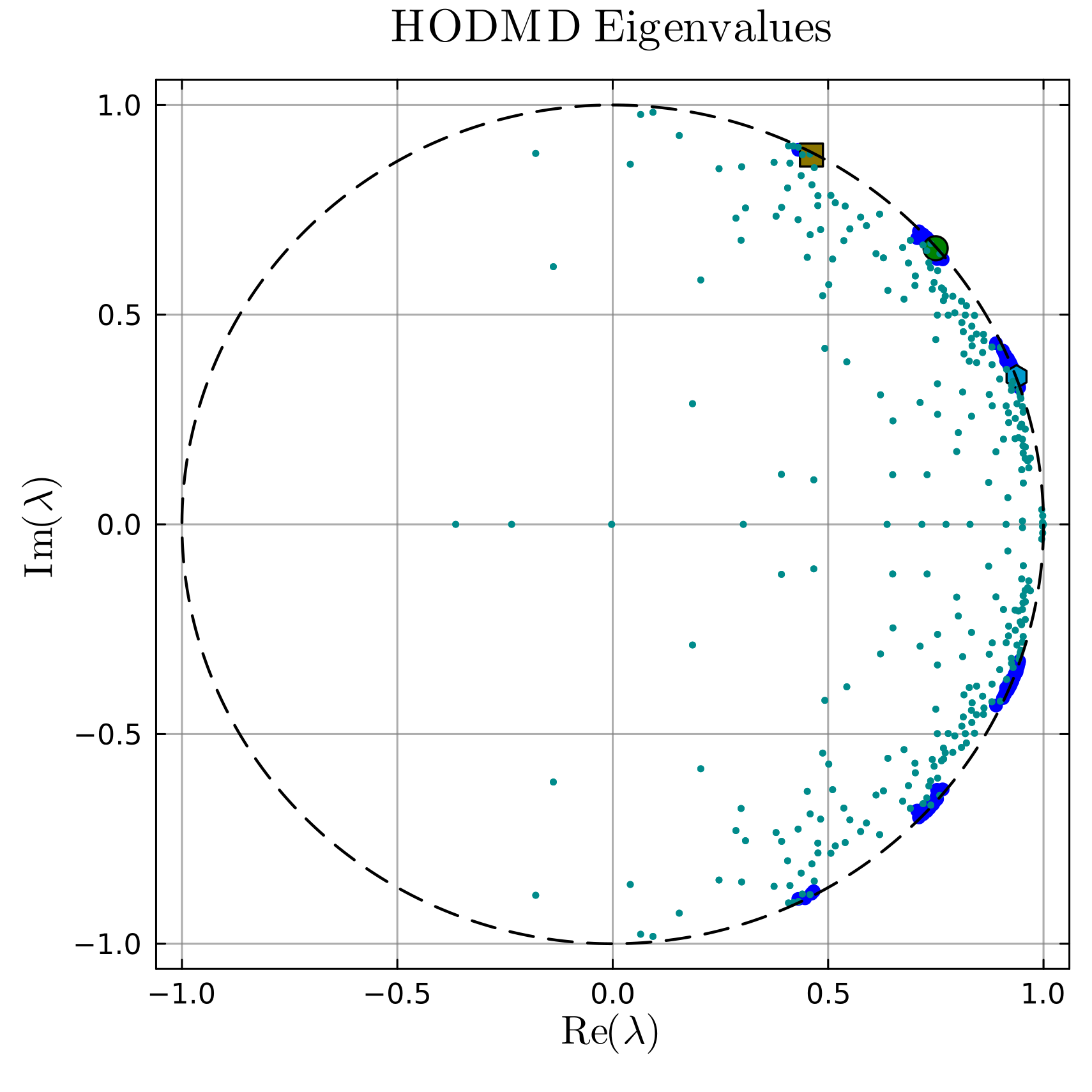}
\caption{}
\label{fig:hodmd_ev}
\end{subfigure}
\begin{subfigure}[]{0.3\textwidth}
\centering
\includegraphics[width=1\textwidth]{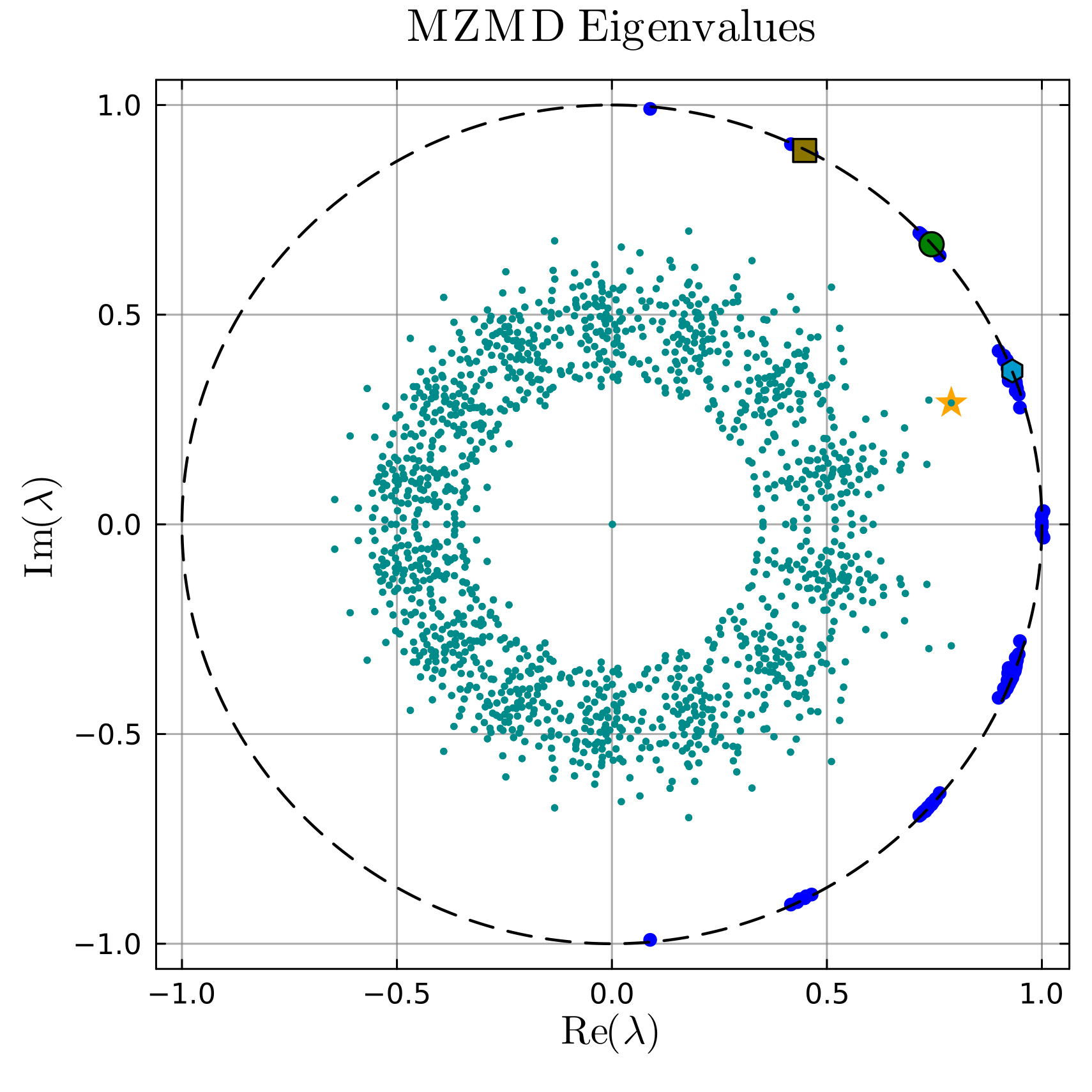}
\caption{}
\label{fig:mzmd_ev}
\end{subfigure}
% \begin{subfigure}[]{0.5\textwidth}
% \centering
% \includegraphics[width=1\textwidth]{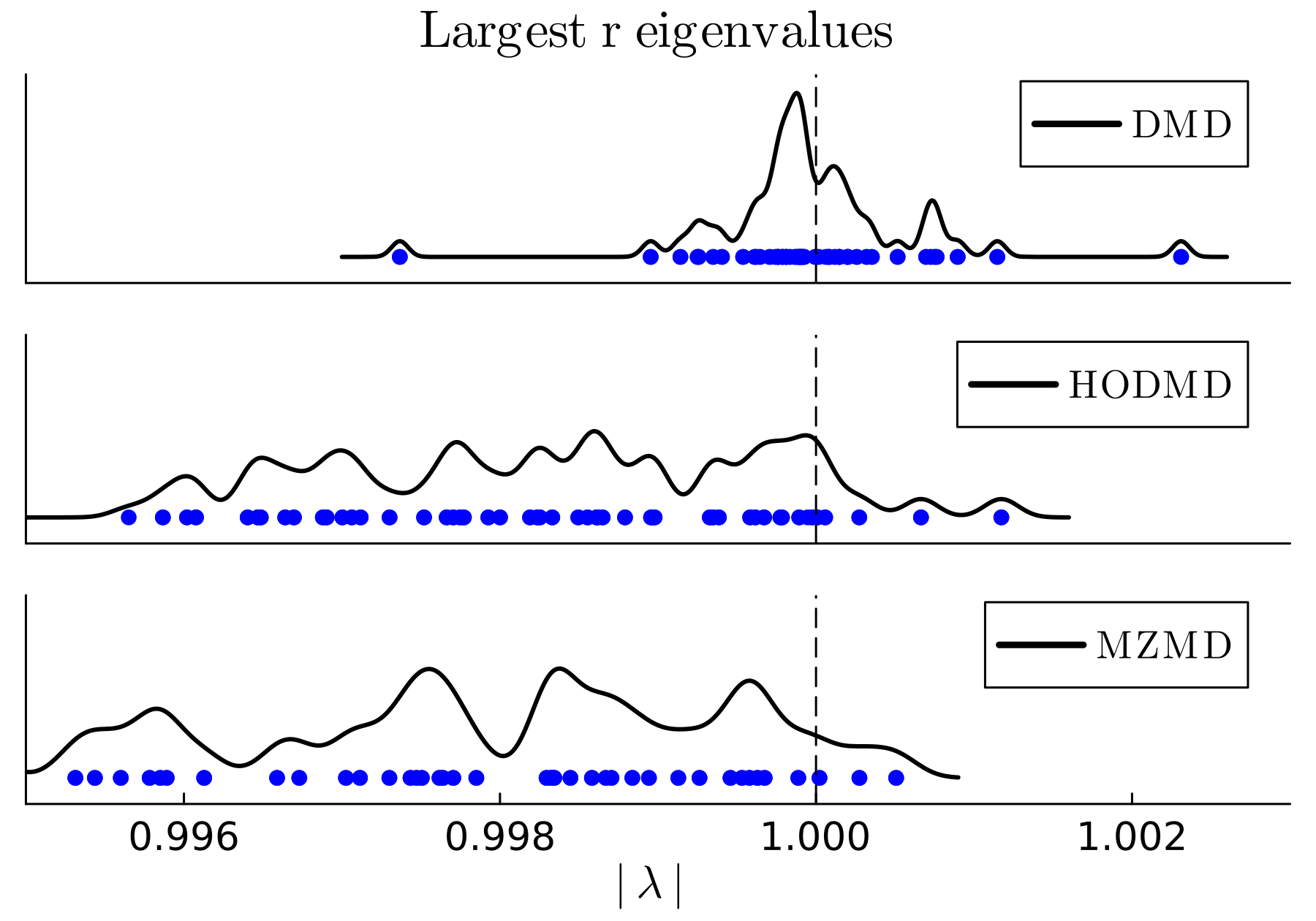}
% \end{subfigure}
\caption{Normalized amplitudes $a_i/\max(\bm a)$ (\subref{fig:dmd_amp}, \subref{fig:hodmd_amp}, \subref{fig:mzmd_amp}) and eigenvalues $\lambda$ (\subref{fig:dmd_ev}, \subref{fig:hodmd_ev}, \subref{fig:mzmd_ev}) of DMD, HODMD, and MZMD modes respectively, where the ``dominant'' modes associated with the fundamental frequency (and its higher harmonics) are uniquely labeled.  Parameters are selected as above; namely $r = 100$, $k=15$ and $d=4$. In the top row, the vertical red dashed lines represent the dominant harmonics as computed from DNS. For HODMD and MZMD, the $|\lambda|$ values are sorted largest to smallest, then the first $r$ modes are colored blue, and last $r+1$ to $rd$ ($rk$ respectively) modes are colored green. We isolate the dominant MZ memory mode as the mode which has the largest $|\lambda|$ value that is introduced by memory (highlighted by the gold star) which is later shown in Figure \ref{fig:phi_mz}. The eigenvalues determine the temporal evolution of the corresponding modes.  We observe that there are the same number of periodic eigenvalues between DMD, HODMD and MZMD, thus, in this flow time-delay embeddings and MZ memory serve to introduce transient modes, i.e. modes lying inside the unit circle. We also observe that time-delay embeddings and MZ memory produces small perturbations to the eigenvalues and amplitudes corresponding to DMD modes. }
\label{fig:evals_spectrum}
\end{figure}

\begin{figure}%[!H]
\centering
\begin{subfigure}[]{0.6\textwidth}
\centering
\includegraphics[width=1\textwidth]{figures/eigenvalue_stability.png}
\end{subfigure}
\caption{The modulus of the largest $r$ eigenvalues from each method, showing that the number of unstable modes decreases by adding time delay embeddings and MZ memory compared to DMD; where MZMD is seen to be the most stable, but also increases the spread of the modulus of eigenvalues.}
\label{fig:evals_spectrum_stability}
\end{figure}

\begin{figure}%[!H]
\centering
\begin{subfigure}[]{1\textwidth}
\centering
\includegraphics[width=1\textwidth]{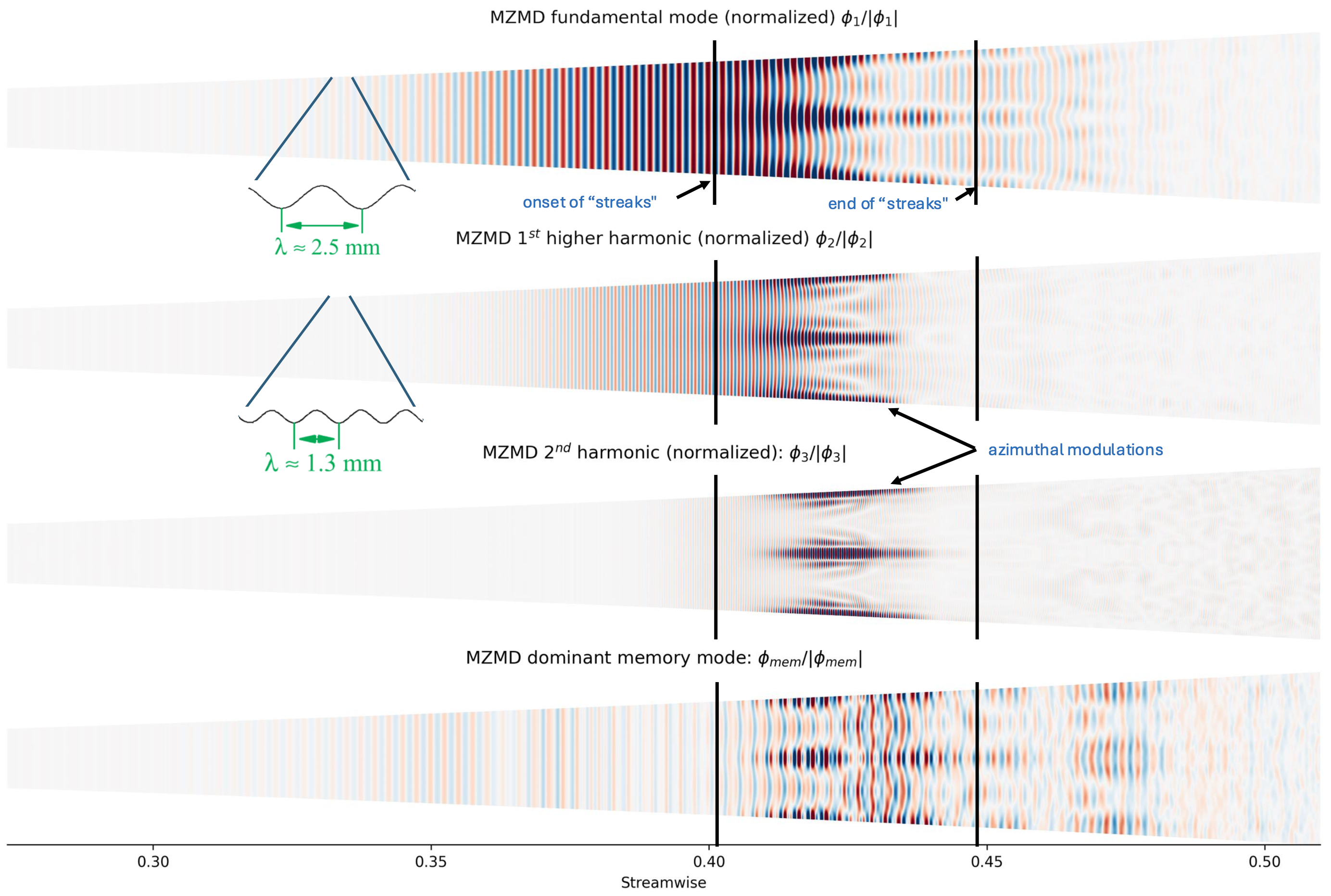}
\end{subfigure}
\caption{Dominant MZMD modes and the first dominant mode that memory introduces. Scale of the contours range from (-1,1)}
\label{fig:phi_mz}
\end{figure}

\begin{figure}[!htb]
\centering
\begin{subfigure}[]{1\textwidth}
\centering
\includegraphics[width=1\textwidth]{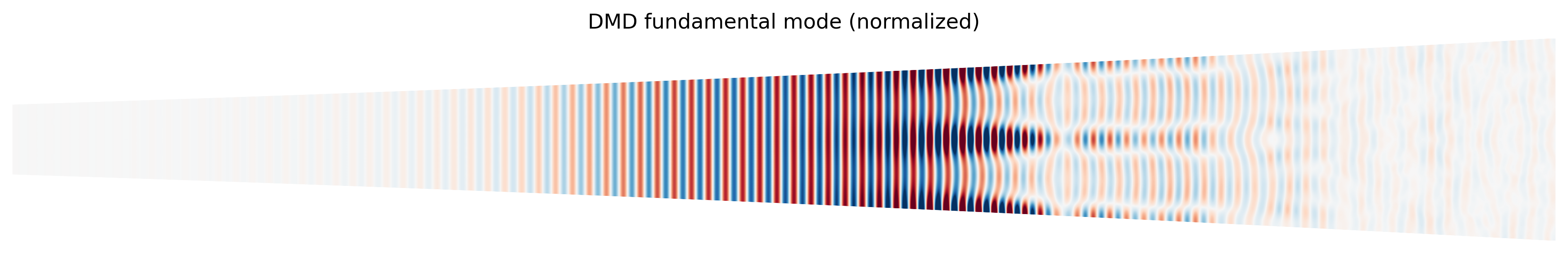}
\end{subfigure}
\centering
\begin{subfigure}[]{1\textwidth}
\centering
\includegraphics[width=1\textwidth]{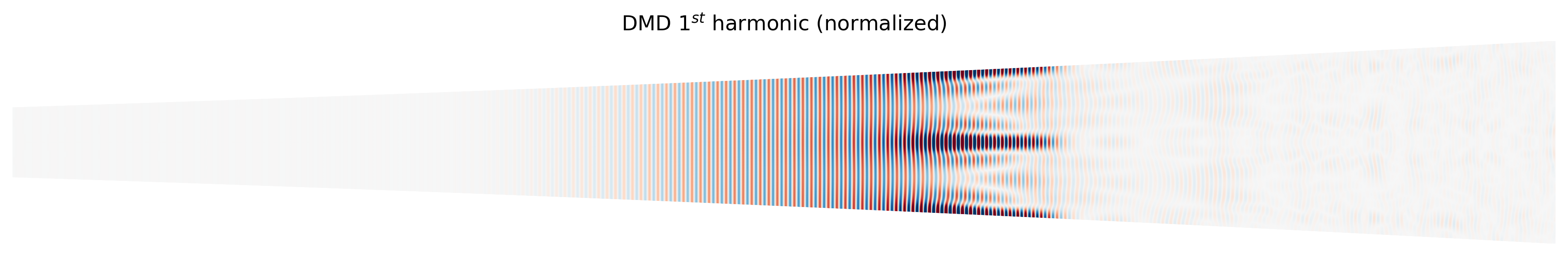}
\end{subfigure}
\begin{subfigure}[]{1\textwidth}
\centering
\includegraphics[width=1\textwidth]{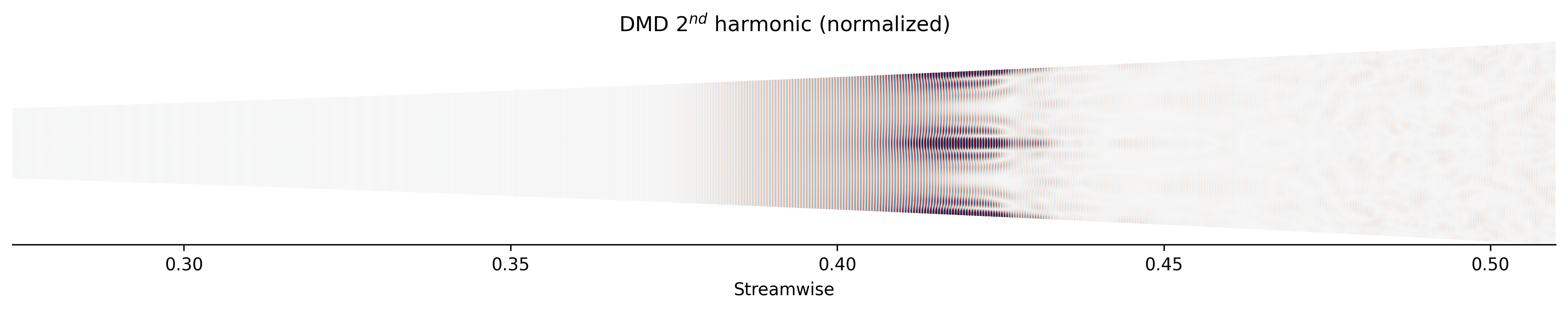}
\end{subfigure}
\caption{Dominant DMD modes; (top) fundamental, (middle) $1^{st}$ harmonic, (bottom) $2^{nd}$ harmonic.}
\label{fig:phi_dmd}
\end{figure}

\begin{figure}[!htb]
\centering
\begin{subfigure}[]{1\textwidth}
\centering
\includegraphics[width=1\textwidth]{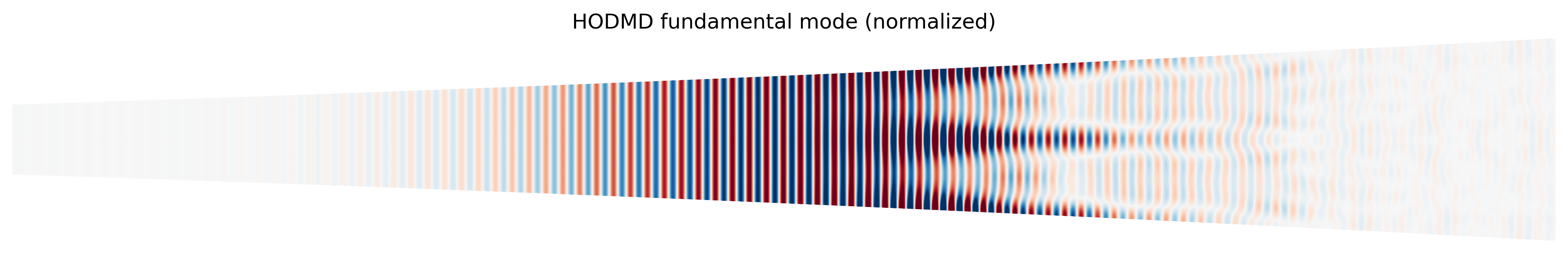}
\end{subfigure}
\centering
\begin{subfigure}[]{1\textwidth}
\centering
\includegraphics[width=1\textwidth]{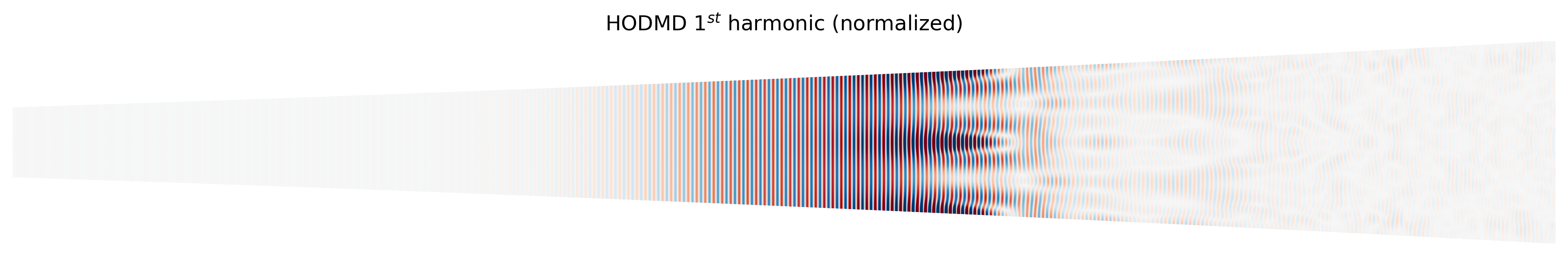}
\end{subfigure}
\begin{subfigure}[]{1\textwidth}
\centering
\includegraphics[width=1\textwidth]{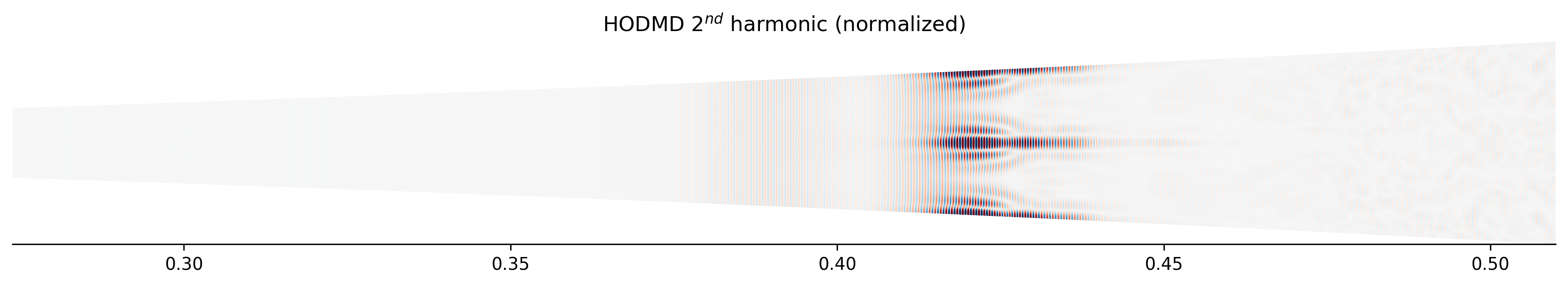}
\end{subfigure}
\caption{Dominant HODMD modes; (top) fundamental, (middle) $1^{st}$ harmonic, (bottom) $2^{nd}$ harmonic.}
\label{fig:phi_hodmd}
\end{figure}

%=================================== Conclusion ===========================

\section{Discussion and conclusion}
\label{sec:conclusion}
%summarize each section individually:

In this manuscript, we have introduced the Mori--Zwanzig Mode Decomposition (MZMD); a novel modal analysis technique that leverages the Mori--Zwanzig (MZ) formalism to derive data-driven approximate memory closures based on the residual of Dynamic Mode Decomposition (DMD). The proposed MZMD approach extracts large-scale spatiotemporal structures from high-dimensional nonlinear dynamical systems by explicitly incorporating nonlocal in time MZ memory kernels. These memory kernels quantify the influence of unresolved variables on the resolved modes, addressing a fundamental limitation of DMD when the selected state-space observables do not form a Koopman-invariant subspace. By explicitly modeling how the unresolved variables interact with the resolved variables, which predominantly accounts for the nonlinear dynamics missed by DMD, MZMD significantly enhances the ability to resolve strongly nonlinear dynamics in the flow compared to classical DMD. 

We find that MZMD provides distinct physical insights beyond classical DMD techniques; achieved by explicitly modeling memory effects arising from unresolved nonlinear interactions. Specifically, MZMD identifies transient memory modes that reveal previously hidden nonlinear interactions, aiding in resolving spatially localized events that drive critical flow features such as the hot-streak formation. Additionally, MZMD enhances frequency predictions of primary and harmonic modes, achieving closer agreement with numerical simulations, and improving stabilization of the resulting reduced order model. Furthermore, by approximating these nonlinear interactions through MZ memory, MZMD also improves the representation of energy cascades across a broader range of temporal scales. Finally, isolating the spatial regions influenced by individual memory terms reveals the cumulative impact of transient memory modes. Incrementally incorporating additional memory terms progressively influences the upstream flow features responsible the generation of the hot streaks, which substantially impact the nonlinear transition mechanisms.

At this point, it is helpful to highlight the differences between MZMD and HODMD. While both methods increase the spectral complexity relative to DMD, a key advantage of MZMD is in its computational efficiency relative to time-delay embedding methods such as Higher-Order DMD (HODMD). Unlike HODMD, which depends on embedding state-space histories, constructing large Hankel matrices, and performing two separate SVD truncations, MZMD directly incorporates the Mori--Zwanzig formalism through the Generalized Langevin Equation (GLE). Furthermore, MZMD involves only a single SVD application and enforces the Generalized Fluctuation--Dissipation (GFD) relation to recursively approximate the memory kernel. Consequently, MZMD avoids the computational complexity and potential for overfitting associated with large Hankel matrices employed by HODMD, making it more suitable for high-dimensional, strongly nonlinear systems. Furthermore, MZMD explicitly preserves the original companion matrix structure extending DMD, and ensures the Markovian term remains consistent to DMD while systematically adding memory terms. This recursive addition of memory terms via the GFD relation can not only improve numerical stability but also provides clear physical interpretability of the influence exerted by the unresolved dynamics.

Additional distinctions between HODMD and MZMD arise both in their theoretical foundations and the algorithms employed. The HODMD assumption of the presence of higher-order Koopman representation is primarily developed in the context of providing a modal decomposition technique so that the spectral complexity M can be greater than the spatial complexity N (as defined in Section \ref{sec:hodmd}). Furthermore, HODMD is functionally identical with utilizing time delay embeddings \citep{hodmd_2017}, operating on Hankel matrices formed by time-delayed observables. With MZMD, the higher-order representation is no longer an ansatz, but a consequence of the MZ formalism, naturally achieving a higher-order representation directly from the structure of the Generalized Langevin Equation (GLE). Functionally, HODMD employs time-delay embeddings and a one-shot optimization to approximate the block companion matrix, along with an additional SVD truncation of the Hankel matrix of delay embeddings (which is not required in MZMD). This additional SVD truncation can, in some cases, break the original model assumption as we demonstrated in the Appendix (the companion matrix no longer satisfies the original structure assumed). HODMD can also face challenges of overfitting when applied to high-dimensional complex systems, as demonstrated in the this work. 

Overall, MZMD approximates the modes and eigenvalues of the discrete-time Generalized Langevin equation (GLE) of state measurements by leveraging the MZ formalism and enforces the Generalized Fluctuation Dissipation (GFD) relation to construct an approximate memory closure model for DMD. Memory terms will almost always be non-trivial when applying MZMD to high dimensional nonlinear dynamical systems. Importantly, in MZMD, the first term remains equivalent to the DMD solution, even when memory effects are considered. Each MZ memory term is learned (via one shot optimization) recursively via the GFD relation. This enforces the companion structure by construction, ensuring that the inclusion of MZ memory terms is directly additive to DMD and does not alter the results of the Markovian term ($\bm \Omega_0$) derived from DMD. However, the modes of MZMD and DMD are different when memory is included. This is due to the coupling of the memory terms with the Markovian term as seen in the polynomial eigenvalue problem associated with the block companion matrix, causing perturbative effects of the modes and eigenvalues. Finally, it is important to mention that MZ memory and time-delay embeddings are not mutually exclusive, as demonstrated in \citet{lin22_nn_mz, woodward_aviation}.

To validate the proposed methodology, we applied DMD, HODMD and MZMD to two distinct Direct Numerical Simulation (DNS) datasets: a canonical two-dimensional cylinder flow (validation case presented in the Appendix) and a hypersonic laminar-to-turbulent boundary-layer transition over a flared cone at Mach 6. Our results clearly demonstrate the advantages of MZMD in capturing complex nonlinear structures, notably surpassing both DMD and HODMD in terms of prediction accuracy and numerical stability. Specifically, MZMD exhibited slower error growth rates and superior robustness, monotonically improving predictive performance as additional memory terms were incorporated. In the critical transition region where nonlinear effects (such as hot streak formation on the surface) dominate, MZMD significantly outperformed the other methods, highlighting the practical benefits of incorporating MZ memory kernels.

Despite the demonstrated advantages, none of the methods investigated (DMD, HODMD, MZMD) inherently guarantee long-term stability, as eigenvalues may still appear outside the unit circle in discrete-time dynamics. Nevertheless, both HODMD and MZMD substantially improved the stability of standard DMD, with MZMD providing the greatest stability in the hypersonic boundary-layer flow example.

Several promising research directions naturally follow from this work. The straightforward relationship established between DMD and MZMD facilitates integration of MZ memory kernels into existing DMD frameworks, such as DMD with control or online DMD \citep{dmd_book, online_dmd}. Additionally, future theoretical investigations might explore conditions under which the MZ kernels are guaranteed to stabilize the DMD eigenvalues, as well as more sophisticated modeling of the orthogonal dynamics within the MZ formalism (which is non-Gaussian). Introducing physical constraints and symmetries into the MZ operators through physics-informed machine learning \citep{woodward23_piml_sph, tian2023_lles} and regression-based projection methods \citep{lin22_nn_mz} represents another compelling direction. Overall, we have demonstrated that MZMD provides a robust, efficient, and interpretable generalization of DMD, opening new opportunities for modeling, prediction, and control of complex nonlinear dynamical systems across diverse scientific and engineering disciplines.

\textbf{Acknowledgments.} This work has been authored by employees of Triad National Security, LLC which operates Los Alamos National Laboratory (LANL) under Contract No. 89233218CNA000001 with the U.S. Department of Energy (DOE)/National Nuclear Security Administration. We acknowledge supports from LANL’s Laboratory Directed Research and development (LDRD) program, project number 20220104DR, %{\color{red} add MW's Agnew proposal if it has a number} 
and computational resources from LANL’s Institutional Computing (IC) program. This work was also partially supported by the U.S. Department of Energy, Office of Science, Office of Advanced Scientific Computing Research's Applied Mathematics Competitive Portfolios program.

\textbf{Declaration of interests.} The authors report no conflict of interest.

\bibliographystyle{abbrvnat}
\bibliography{references}

\appendix

\section{Algorithms}\label{sec:algs}

In this appendix, we provide an overview of the MZMD algorithm (Algorithm \ref{alg:mzmd}) and establish connections with the DMD and HODMD algorithms. In MZMD (Algorithm \ref{alg:mzmd}) we first obtain the snapshot data as is done in POD and DMD but we also need to include some past history for which the parameter $k$ is used. In each algorithm, we can interpret the SVD as a linear auto-encoder \citep{GoodBengCour16} for selecting observables from data to avoid the intractable computations of the full two time covariance matrix $\tilde{\bm C}_k = \bm X_k \cdot \bm X_1^T$. This is equivalent to performing a low rank approximation of the full $\tilde{\bm C}_k$ by projecting onto the POD modes (which is what is done in the standard DMD algorithm). This equivalence can be seen by replacing $\bm G_k$ with $\bm U_r^* \bm X_k $ when computing $\bm C_k$, which results in $\bm C_k = \bm U_r^* \bm X_k (\bm U_r^* \bm X_0)^T = \bm U_r^* \bm X_k \bm X_0^T \bm U_r  \approx \tilde{\bm C}_k$. By construction, MZMD adheres to and is formulated with the GFD relation (Eq.~\ref{eq:gfd}), a necessary condition to be consistent with the MZ formalism. 

\begin{algorithm}[H]
    \centering
    \caption{MZMD}
    \begin{algorithmic}[1]
        \State Select the number of memory terms $k$
        \State Given $T+1$ snapshots of data: $\bm X = [\bm x_0, ..., \bm x_{T}]$
        \State Compute truncated SVD $\bm X \approx \bm U_r \bm \Sigma_r \bm V_r^*$ 
        \State Project onto POD modes $\bm G = \bm U_r^* \bm X$ %($\bm g_i = \sum_j^r \left<\bm u_j, \bm x_i\right> \bm u_j $)
        
        \State Collect snapshots over $k$ time delays: $ \bm G_0 = [\bm g_0, \bm g_1, ..., \bm g_{T-k}]$, $ \bm G_1 = [\bm g_1, \bm g_2, ..., \bm g_{T-k+1}]$, ... $\bm G_{k} = [\bm g_{k}, \bm g_{k+1}, ..., \bm g_{T}]$ \\
        
        \vspace{0.3cm}
        $\bm C_0 = \bm G_0 \bm G_0^T$
        % \For{$i\gets 1, ...,k$}
        % \State $\bm C_i = \bm G_{i} \bm G_{0}^T$
        % \EndFor
        \State for $i\in \{1,...,k\}$
        \State \hspace{4mm} $\bm C_i = \bm G_{i} \bm G_{0}^T$
        
        \State $\bm \Omega_0 = \bm C_1 \bm C_0^{-1}$
        % \For{$i\gets 1, ...,k$}
        \State for $i\in \{1,...,k\}$
        \State \hspace{4mm} $\bm \Omega_i = \left[ \bm C_{i+1} - \sum_{l=0}^{i-1} \bm \Omega_l \bm C_{i-l} \right]  \bm C_0^{-1}$
        % \EndFor
        \\
            \State Form companion matrix $\bm C_g$, then compute eigendecomposition $\bm C_g \bm W = \bm W \bm \Lambda$\\
            \State \textbf{MZMD modes}: $\bm \phi_{i}^0 = \bm U_r \bm w^{0}_i$, where $\bm w^0_i = \bm P_0 \bm w_i$ and $\bm P_0 = [\bm I \hspace{2mm} \bm 0 \hspace{2mm} ... \hspace{2mm} \bm 0]$. \\
            \State Compute amplitudes: $\bm a  = \bm \Phi^{\dagger} [\bm x_0, \bm x_1, ..., \bm x_k]^T$ \\
            \State $\bm x_{n+1} \approx \sum_i^{rk} a_i \lambda_i^{n+1} \bm \phi_i^0$ (selected modes prediction)
    \end{algorithmic}
    \label{alg:mzmd}
\end{algorithm}

\begin{algorithm}[H]
    \centering
    \caption{DMD}%\label{algorithm1}
    \begin{algorithmic}[1]
        \State Given $T+1$ snapshots of data: $\bm X = [\bm x_0, ..., \bm x_{T}]$
        \State $\bm X \approx \bm U_r \bm \Sigma_r \bm V_r^*$ Truncated SVD
        \State Low rank projection $\bm G = \bm U_r^* \bm X$ 
        \State $ \bm G_0 = [\bm g_0, \bm g_1, ..., \bm g_{T-1}]$, $ \bm G_1 = [\bm g_1, \bm g_2, ..., \bm g_{T}]$ \\
        \vspace{0.3cm}
        $\bm C_0 = \bm G_0 \bm G_0^T$\\
        $\bm C_1 = \bm G_1  \bm G_{0}^T$
        \State $\bm K = \bm C_1  \bm C_0^{\dagger}$
            \State Compute eigendecomposition $\bm K \bm W = \bm W \bm \Lambda$
            \State \textbf{DMD modes}: $\bm \phi_{i} = \bm U_r \bm w_i, \quad \bm \Phi = [\bm \phi_1, ..., \bm \phi_r]$;
            \State Given a new initial condition $\bm x_0$, compute amplitudes: $\bm a = \bm \Phi^{\dagger} \bm x_0$
            \State $\bm x_{n+1} \approx \sum_i^{r} a_i \lambda_i^{n+1} \bm \phi_i$
    \end{algorithmic}
    \label{alg:dmd}
\end{algorithm}

\section{Numerical validation: 2D flow over cylinder}
\label{sec:2d_cylinder}

In this section, we validate the DMD, HODMD, and MZMD methodologies on a cononical 2D flow over cylinder at Reynolds number $Re=200$. This flow serves as a simple test-bed for validation, performing analysis and comparisons of DMD, HODMD and MZMD. The flow past a two-dimensional cylinder at low Reynolds numbers results in periodic structures in the wake. The wake forms a pattern of alternating vortices, commonly referred to as a von Kármán vortex street. This vortex street over a cylinder has been extensively studied \citep{Tritton_1959, Dennis_Chang_1970, LINNICK2005157, colonius2008}. In this work, we use results at $Re=200$ to investigate the effects of adding time-delay embeddings vs MZ memory. 

In figure \ref{fig:dns_cylinder} we show a snapshot of the vorticity for the $Re=200$ case, which was obtained by the use of the immersed boundary projection method developed by \citet{colonius2008}. We consider the statistically stationary flow reached after the initial transient that evolves from the unstable equilibrium point to the equilibrium point representing the von Kármán vortex street. The numerical simulations are performed with $dt=0.02$ (s). The simulations are then sampled at every $10^{th}$ time step. The wake shedding frequency is approximately $f_s \approx 0.197$ (HZ) for the $Re=200$ case \citep{LINNICK2005157, colonius2008}. $f_s$ is associated with the dominant coherent structure of the flow. In this study, we consider the first and second higher harmonics (i.e. $2f_s$ and $3f_s$), which are associated with the next two dominant modes (ranked by amplitude) found in each modal decomposition technique.  This flow exhibits a low rank structure as seen in the total variation contained in the first dominant POD modes shown in figure \ref{fig:dns_cylinder}. For the $Re=200$ case, we see that using more than 10 POD modes captures over $99\%$ of the variation in the vorticity field.

We first validate that DMD, HODMD, and MZMD can capture the dominant modes associated with the harmonics $\{f_s, 2f_s, 3f_s\}$ with the appropriate frequencies. This validation can be seen in figure \ref{fig:mzmd_hodmd_evals_cyl_re200}. The modes associated with these higher harmonics are shown in figure \ref{fig:mode_comparions_cyl_re200}. The influence of memory here is very small, and only slight changes can be observed in the modes. 

Next, we investigate what impact memory can have on improving over DMD. For this, we first show that in the case where DMD is truncated (with $r=5$), so that it does not capture the first higher harmonic (see figure \ref{fig:mzmd_hodmd_evals}), adding MZ memory (or time-delay embedding) can recover this missing dominant mode. This is associated with a dominant periodic mode occurring at the first higher harmonic (i.e. $2f_s$). Additionally, in figure \ref{fig:R_tilde_cylinder}, we see that adding memory (both MZ and time-delay embeddings) can improve upon DMD predictions, both for reconstruction and future state prediction (generalization). Each prediction is made over an ensemble of $40$ samples of $\bm z_0$ drawn independently from the test set representing a long statistically stationary flow.

Next, we demonstrate the differences in the algorithms of HODMD and MZMD.  HODMD and MZMD can result in different block companion matrices. For example, the second SVD that occurs in HODMD (recall the procedure described in section \ref{sec:hodmd}) can result in truncation errors that lead to breaking the original ansatz of HODMD, with the non-diagonal terms breaking the "higher-order" Koopman representation of HODMD. Figure \ref{fig:R_tilde_cylinder} illustrates how the assumed companion structure can be broken in HODMD if the second SVD (applied to the Hankel matrix) is truncated. This can happen, for example, when the number of time snapshots for fitting is less than $d*r_1$. In this case, the maximum value $r_2$ can take is $r_2 = T \leq d*r_1$, which forces $\hat{\bm G}$ to be a truncated SVD approximation. Subsequently, the HODMD algorithm could produce a coupled representation of past history which does not satisfy the original higher order Koopman representation \eqref{eq:hodmd_assumption}.  However, by construction, MZMD, which does not perform the second SVD, retains this structure. 

In figure \ref{fig:R_tilde_cylinder}, we also see that the memory terms that are obtained with HODMD do not decay and may even increase their contributions for the higher order terms. This behavior suggests a propensity for overfitting as more delay embeddings are still required. However, MZMD produces monotonic memory decay with the order of the memory term, which is consistent with the expectation that physical processes typically exhibit decaying cross-corelation in time. The relative contribution of the memory terms in MZMD decays to small values and flattens out around $k \approx 10$, which can be used as a memory length selection criterion. Furthermore, MZMD is much less expensive (as seen in figure \ref{fig:costs_convergence}), nearly an order of magnitude less expensive than HODMD, mainly due to the fact that HODMD performs an additional SVD. Furthermore, the learned companion matrices of MZMD converge faster with increasing number of samples compared to HODMD (Figure \ref{fig:convergence_ops}), indicating that less data may be required to build the MZMD model.

\begin{figure}[!htb]
\centering
\begin{subfigure}[]{0.55\textwidth}
\centering
\includegraphics[width=1\textwidth]{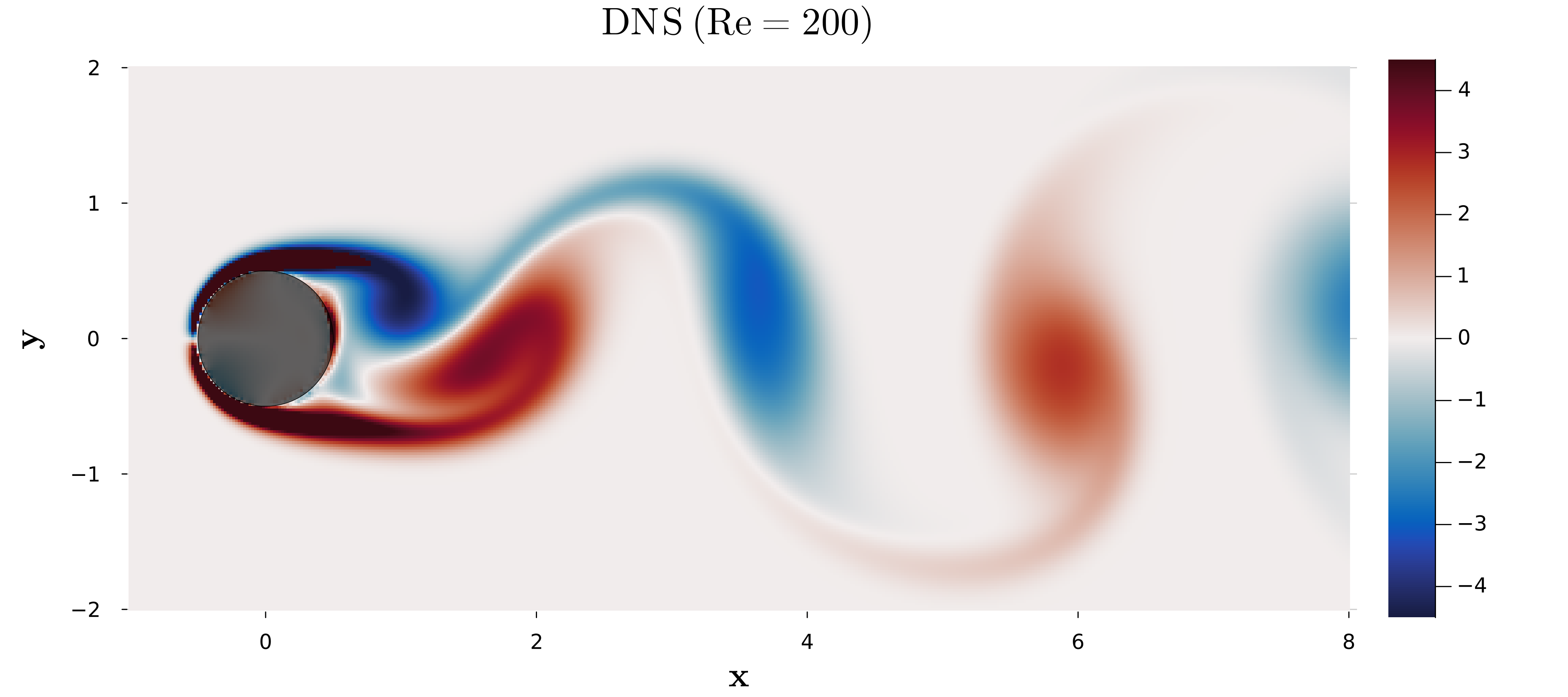}
\caption{}
\label{fig:dns_snap_cyl}
\end{subfigure}
\begin{subfigure}[]{0.44\textwidth}
\centering
\includegraphics[width=1\textwidth]{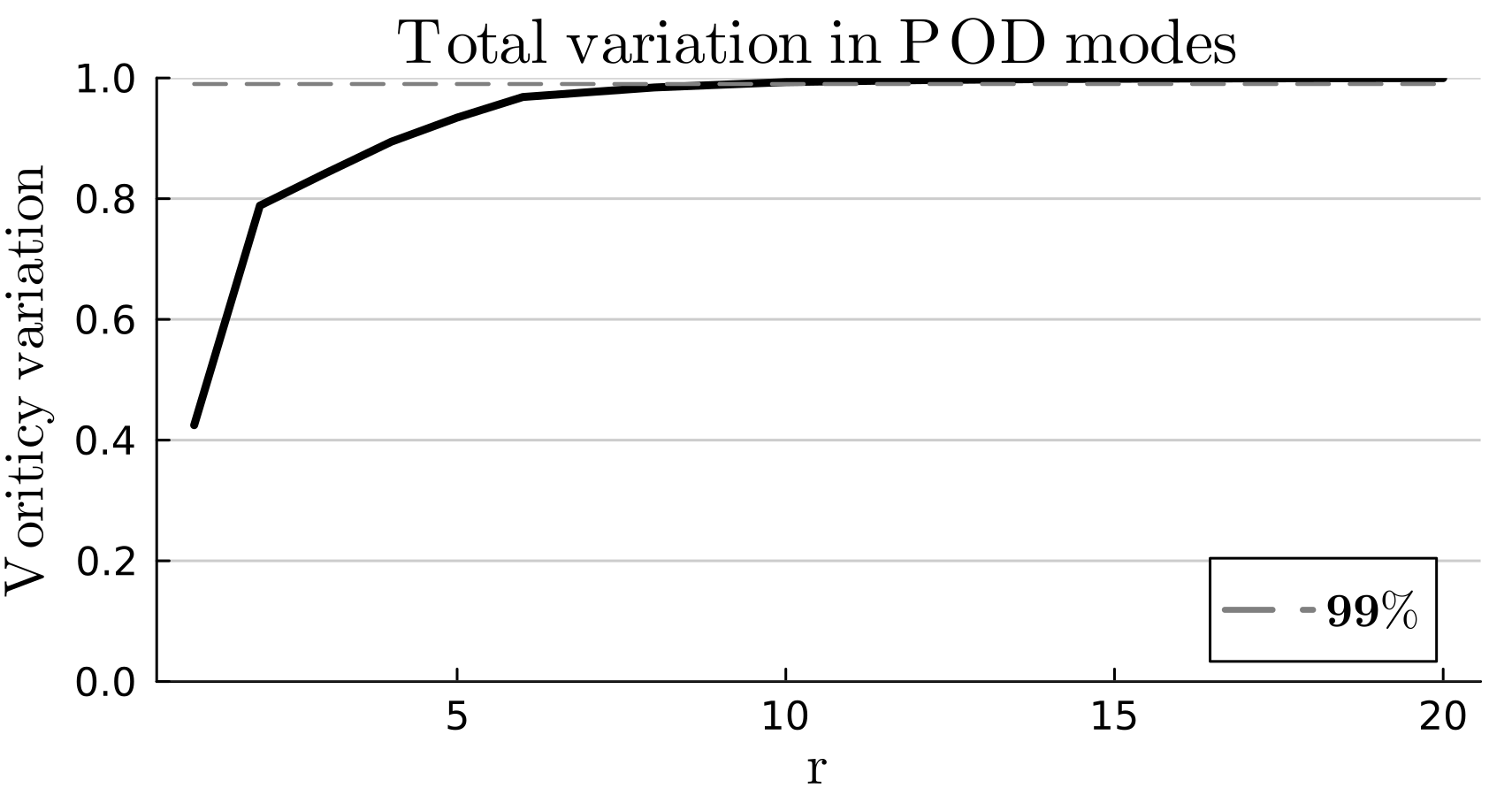}
\caption{}
\label{fig:pod_variation}
\end{subfigure}
\caption{(\subref{fig:dns_snap_cyl}) snapshot of 2D flow over cylinder with $Re=200$. (\subref{fig:pod_variation}) energy contained in the POD modes; demonstrating that the snapshot matrix contains low rank structure.} 
\label{fig:dns_cylinder}
\end{figure}

\begin{figure}[!htb]
\centering
\begin{subfigure}[]{0.3\textwidth}
\centering
\includegraphics[width=1\textwidth]{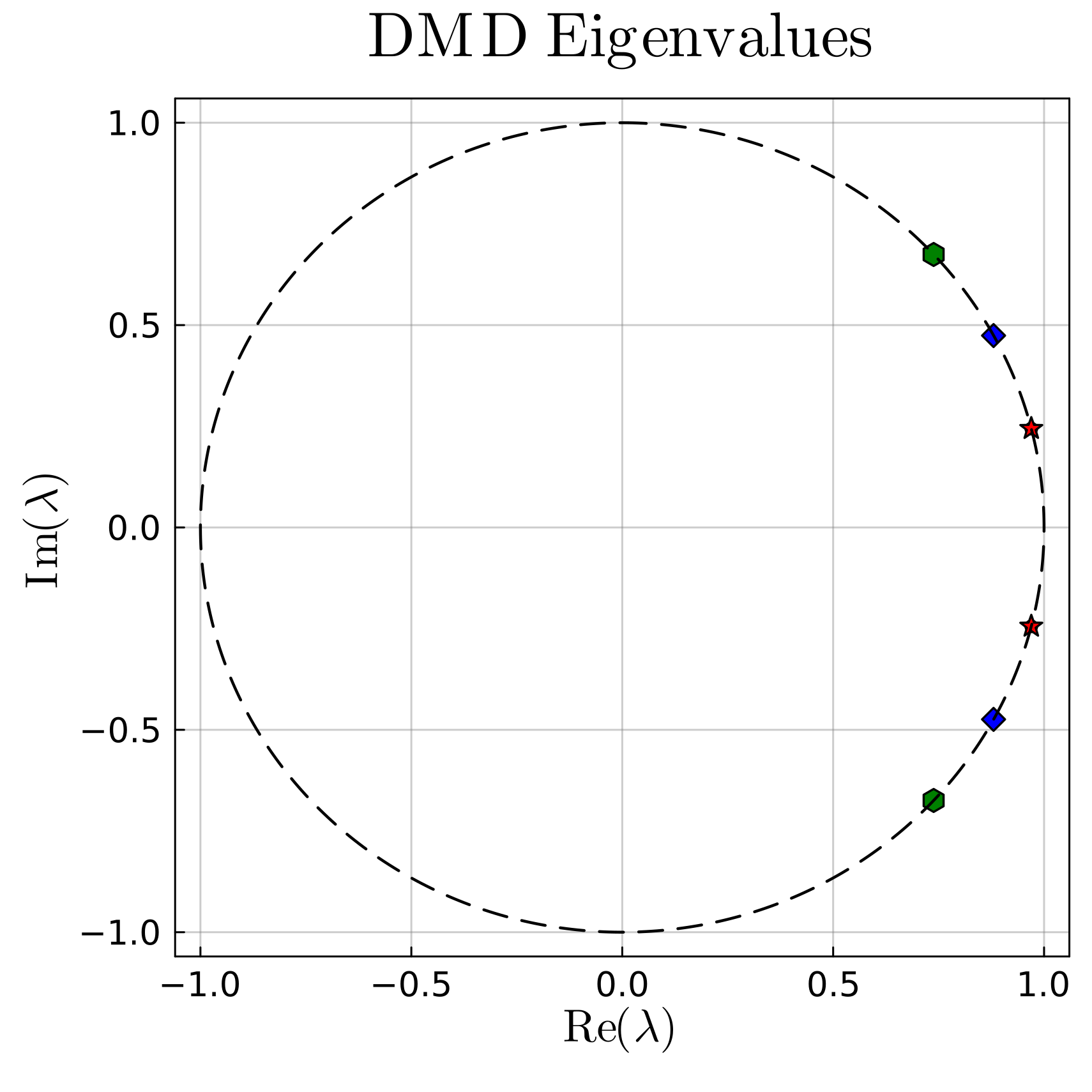}
\caption{}
\label{fig:dmd_eval_cyl}
\end{subfigure}
\begin{subfigure}[]{0.3\textwidth}
\centering
\includegraphics[width=1\textwidth]{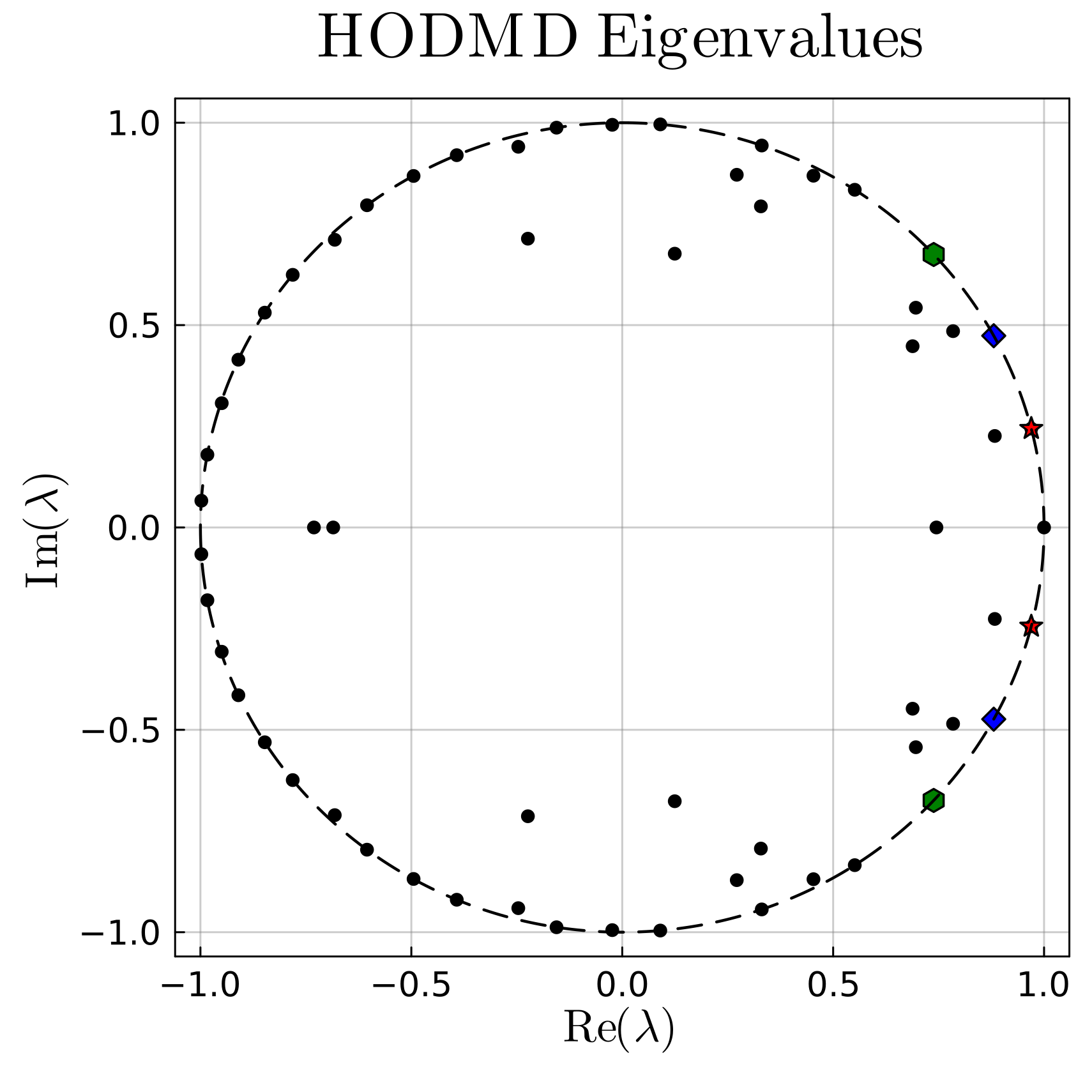}
\caption{}
\label{fig:hodmd_eval_cyl}
\end{subfigure}
\begin{subfigure}[]{0.3\textwidth}
\centering
\includegraphics[width=1\textwidth]{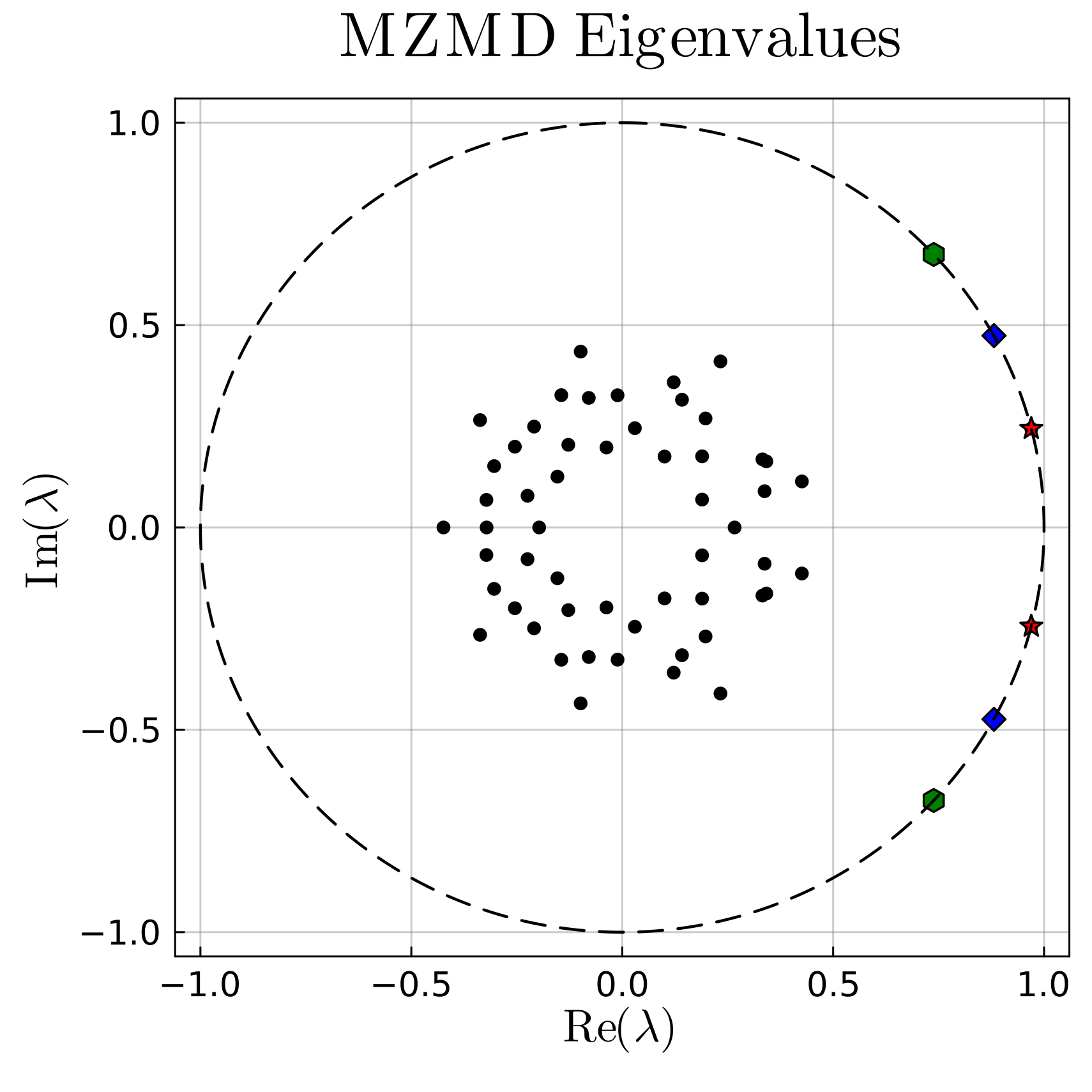}
\caption{}
\label{fig:mzmd_eval_cyl}
\end{subfigure}
\centering
\begin{subfigure}[]{0.3\textwidth}
\centering
\includegraphics[width=1\textwidth]{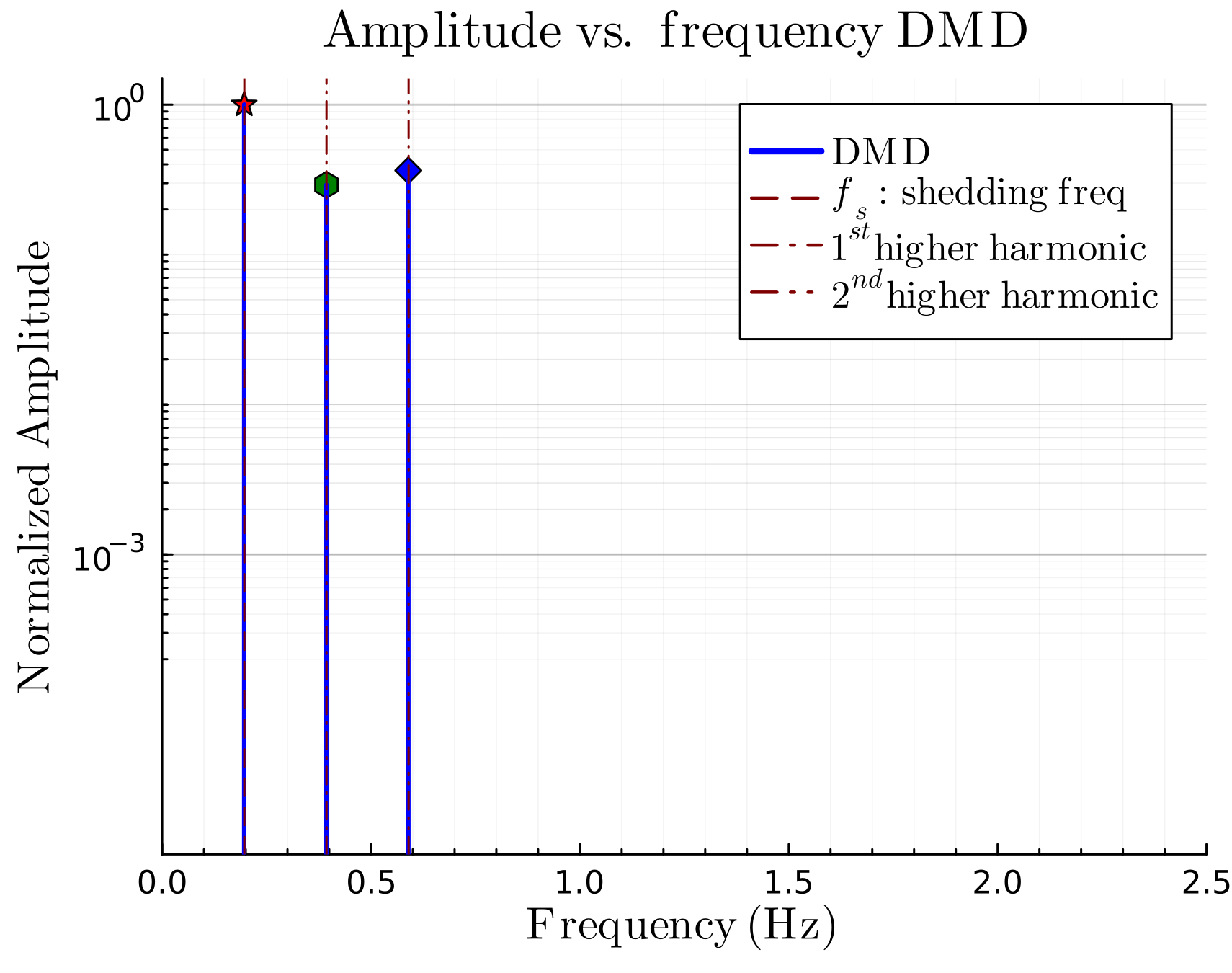}
\caption{}
\label{fig:dmd_amp_cyl}
\end{subfigure}
\centering
\begin{subfigure}[]{0.3\textwidth}
\centering
\includegraphics[width=1\textwidth]{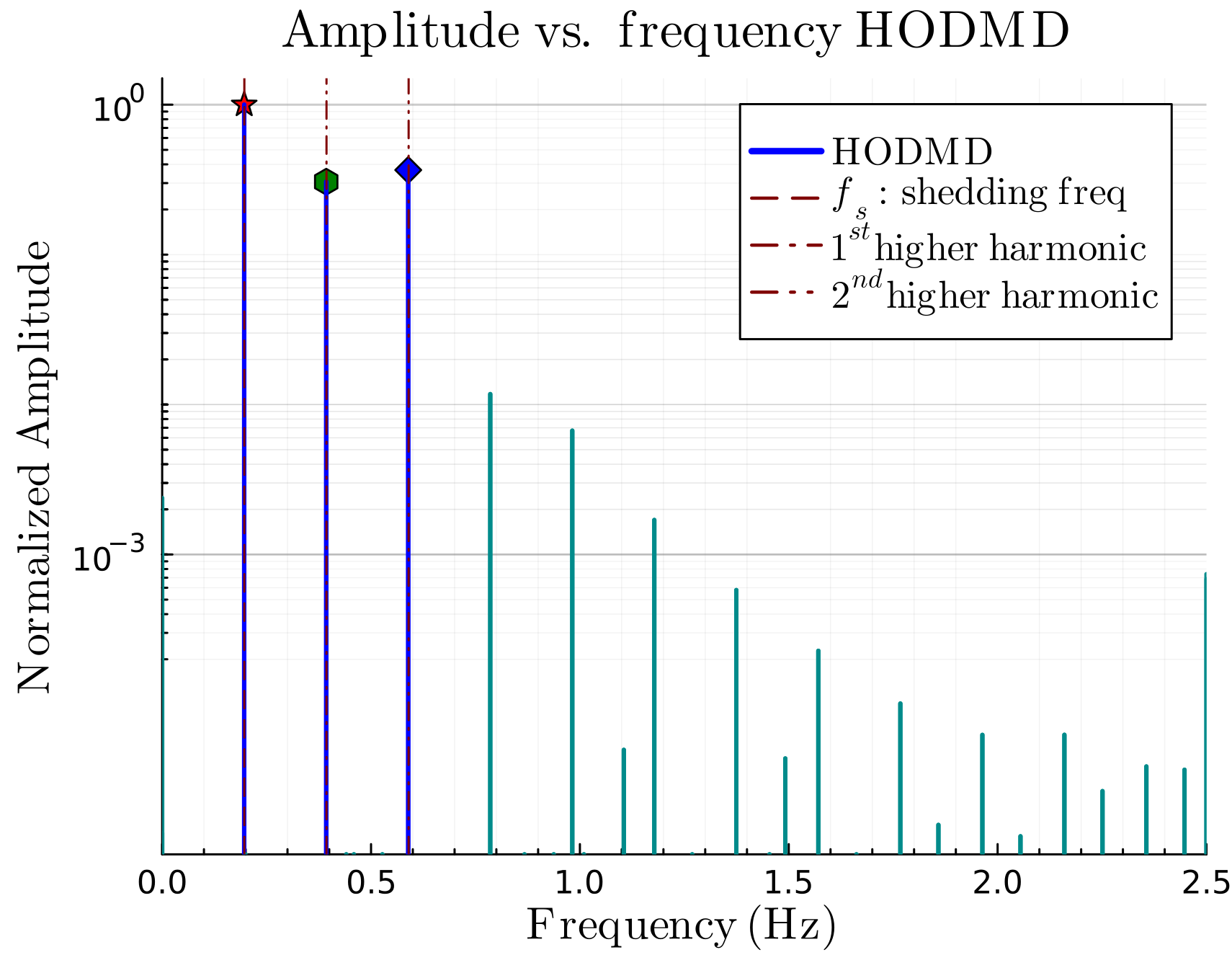}
\caption{}
\label{fig:hodmd_amp_cyl}
\end{subfigure}
\begin{subfigure}[]{0.3\textwidth}
\centering
\includegraphics[width=1\textwidth]{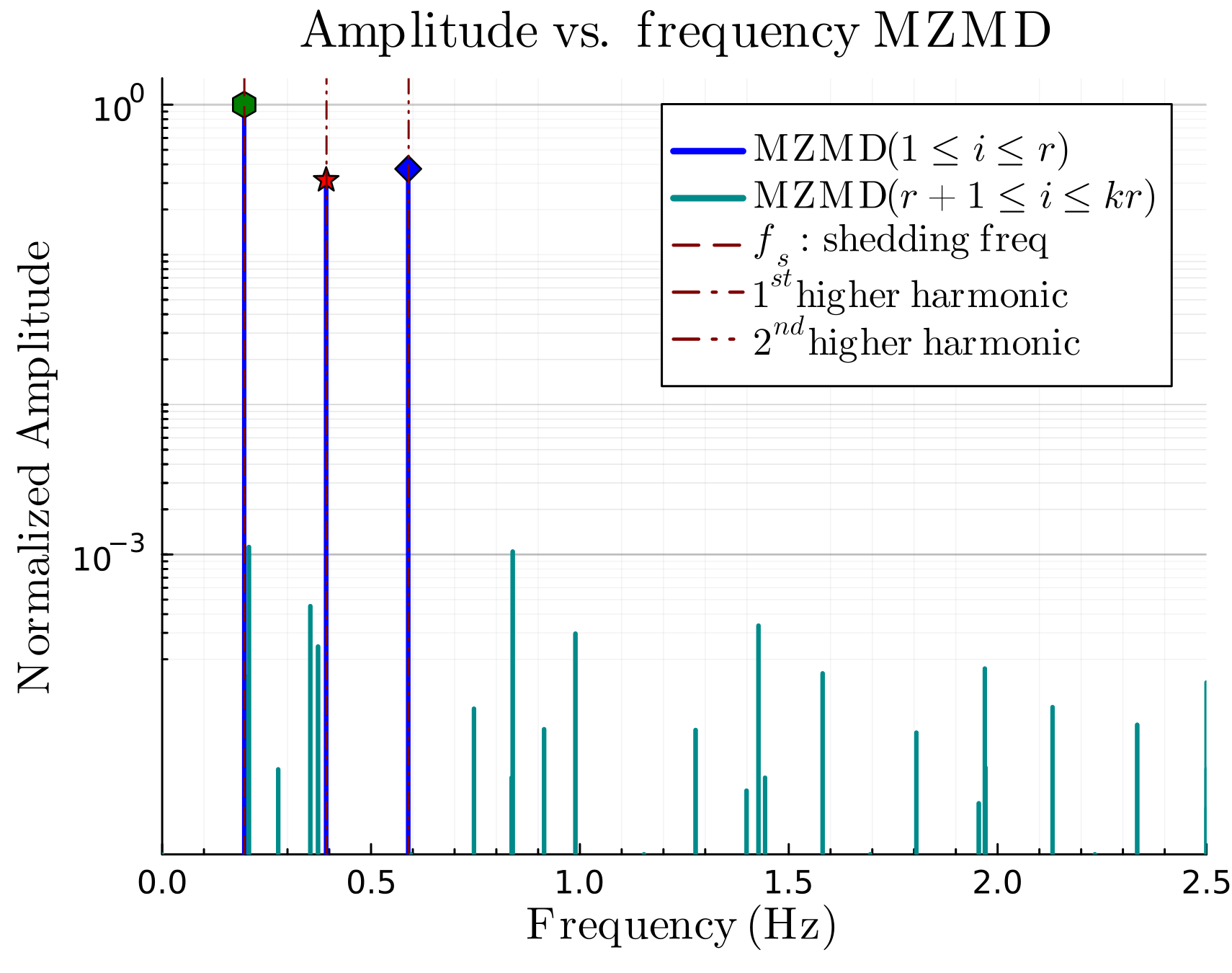}
\caption{}
\label{fig:mzmd_amp_cyl}
\end{subfigure}
\caption{Numerical validation for the $Re=200$ case, fixing $r=6$, with 10 memory terms. Amplitudes (\subref{fig:dmd_amp_cyl}, \subref{fig:hodmd_amp_cyl}, \subref{fig:mzmd_amp_cyl}) and eigenvalues (\subref{fig:dmd_eval_cyl}, \subref{fig:hodmd_eval_cyl}, \subref{fig:mzmd_eval_cyl}) of DMD, HODMD, and MZMD modes respectively, where the mode associated with the fundamental frequency $f_s \approx 0.197$ along with its first 2 higher harmonics are uniquely labeled. Each method captures the dominant mode associated with the shedding frequency $f_s$ and higher harmonics. Both MZMD and HODMD act to increase the spectral complexity over DMD for the same $r$, as seen by the introduction of new higher frequency modes not captured by DMD. Time-delay embedding terms for HODMD can capture the periodic modes associated with the higher harmonics. MZMD captures higher harmonics with low amplitudes decaying modes.} 
\label{fig:mzmd_hodmd_evals_cyl_re200}
\end{figure}

\begin{figure}[!htb]
\centering
\begin{subfigure}[]{0.3\textwidth}
\centering
\includegraphics[width=1\textwidth]{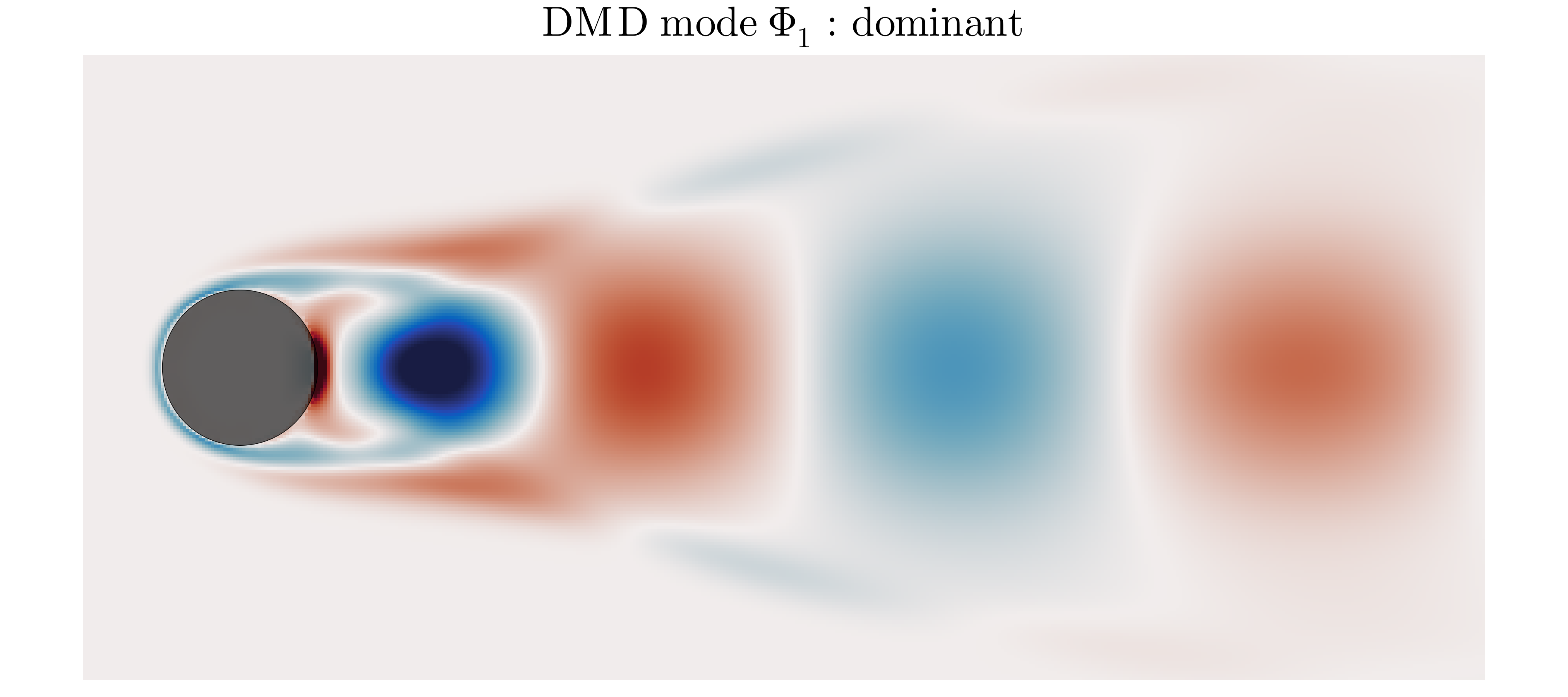}
\end{subfigure}
\begin{subfigure}[]{0.3\textwidth}
\centering
\includegraphics[width=1\textwidth]{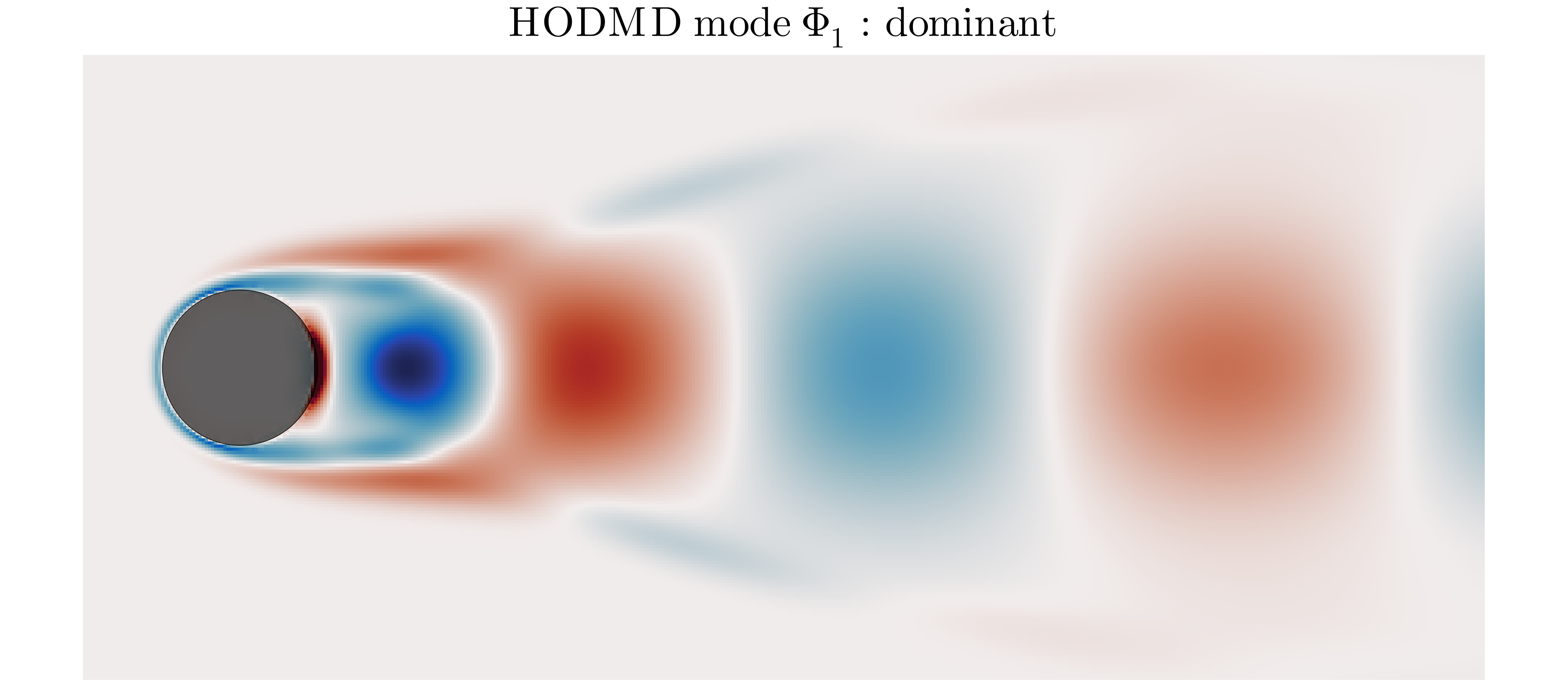}
\end{subfigure}
\begin{subfigure}[]{0.3\textwidth}
\centering
\includegraphics[width=1\textwidth]{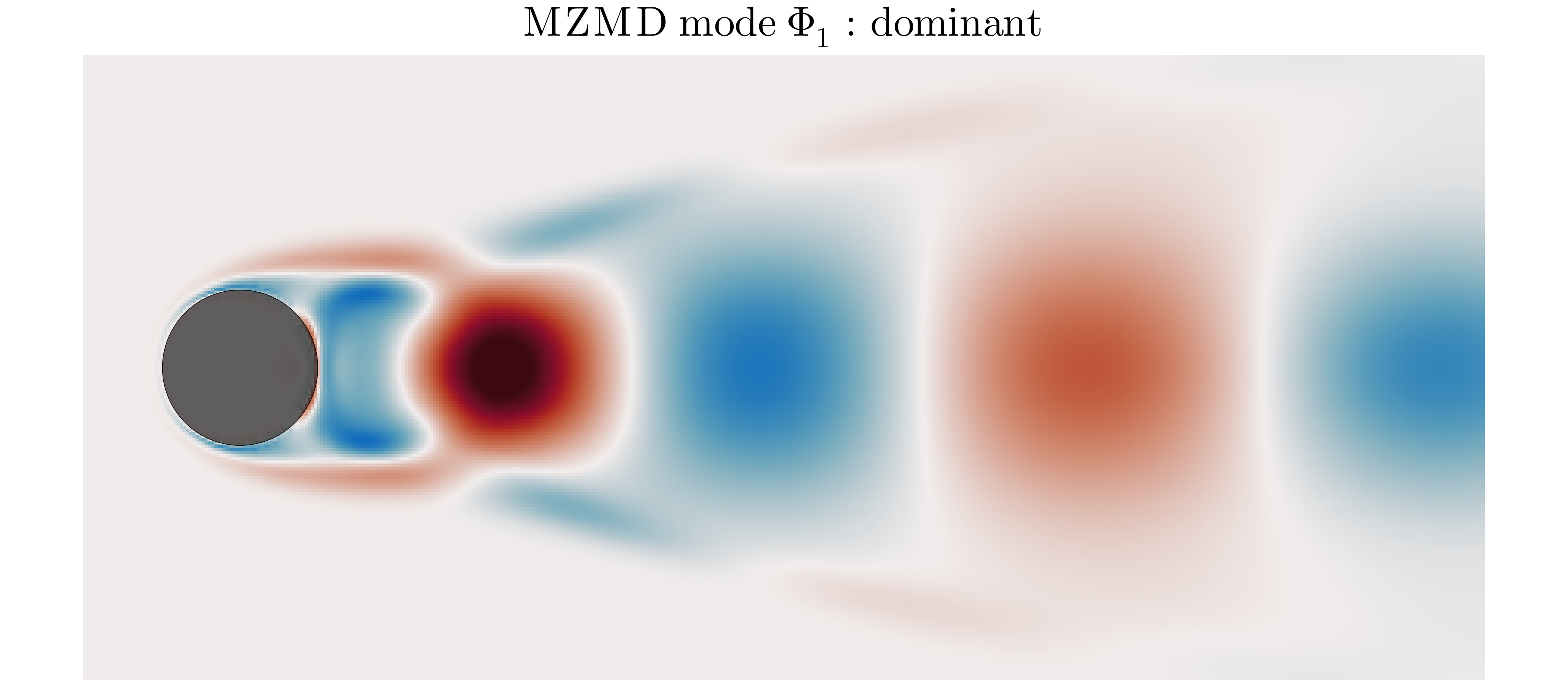}
\end{subfigure}

\begin{subfigure}[]{0.3\textwidth}
\centering
\includegraphics[width=1\textwidth]{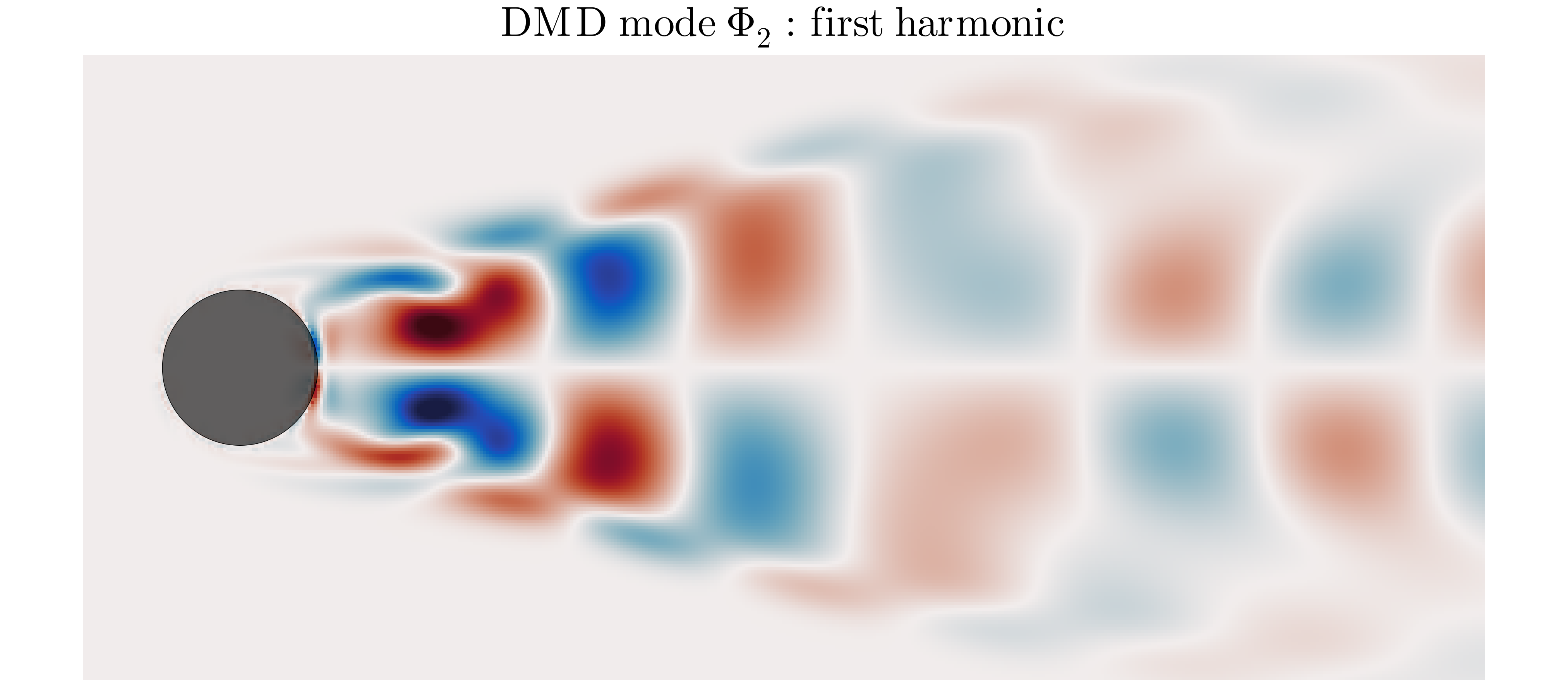}
% \caption{DMD eigenvalues}
\end{subfigure}
\begin{subfigure}[]{0.3\textwidth}
\centering
\includegraphics[width=1\textwidth]{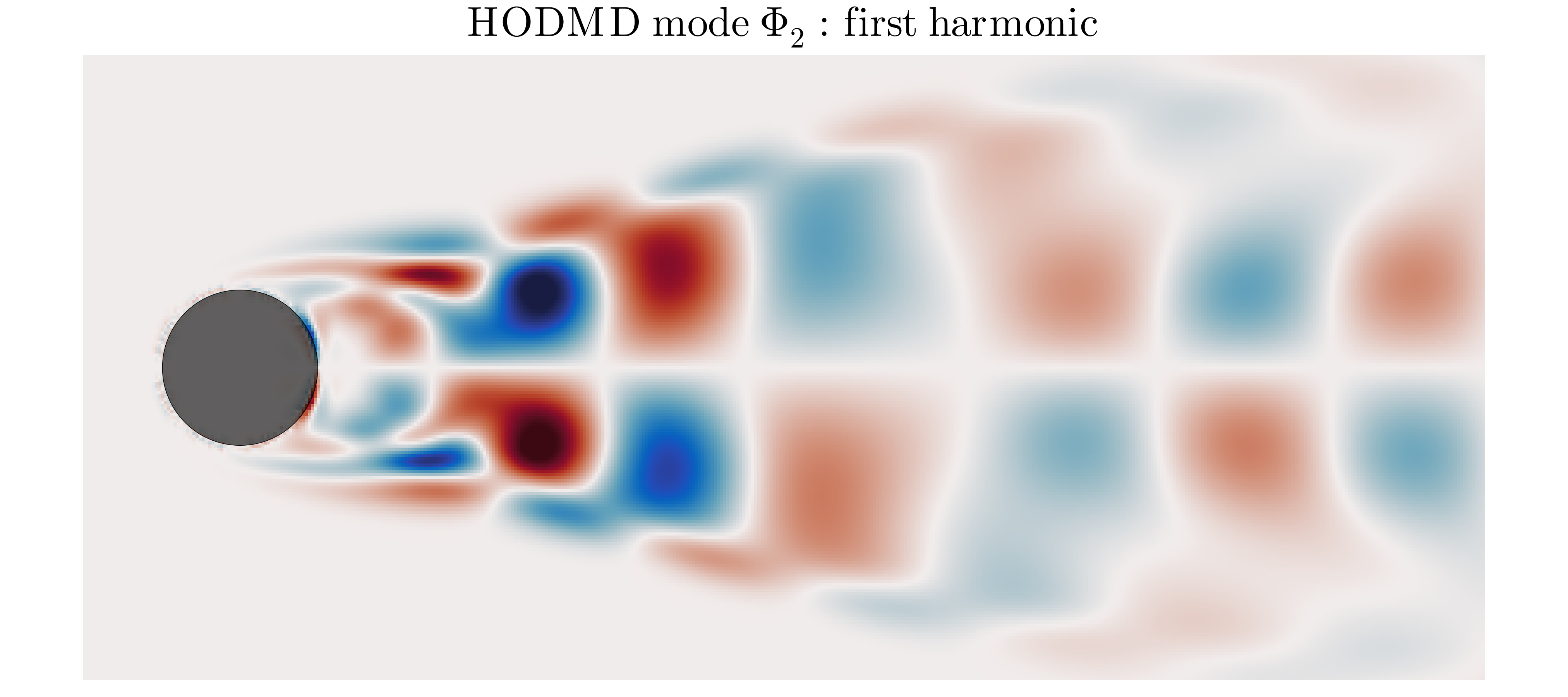}
% \caption{HODMD eigenvalues}
\end{subfigure}
\begin{subfigure}[]{0.3\textwidth}
\centering
\includegraphics[width=1\textwidth]{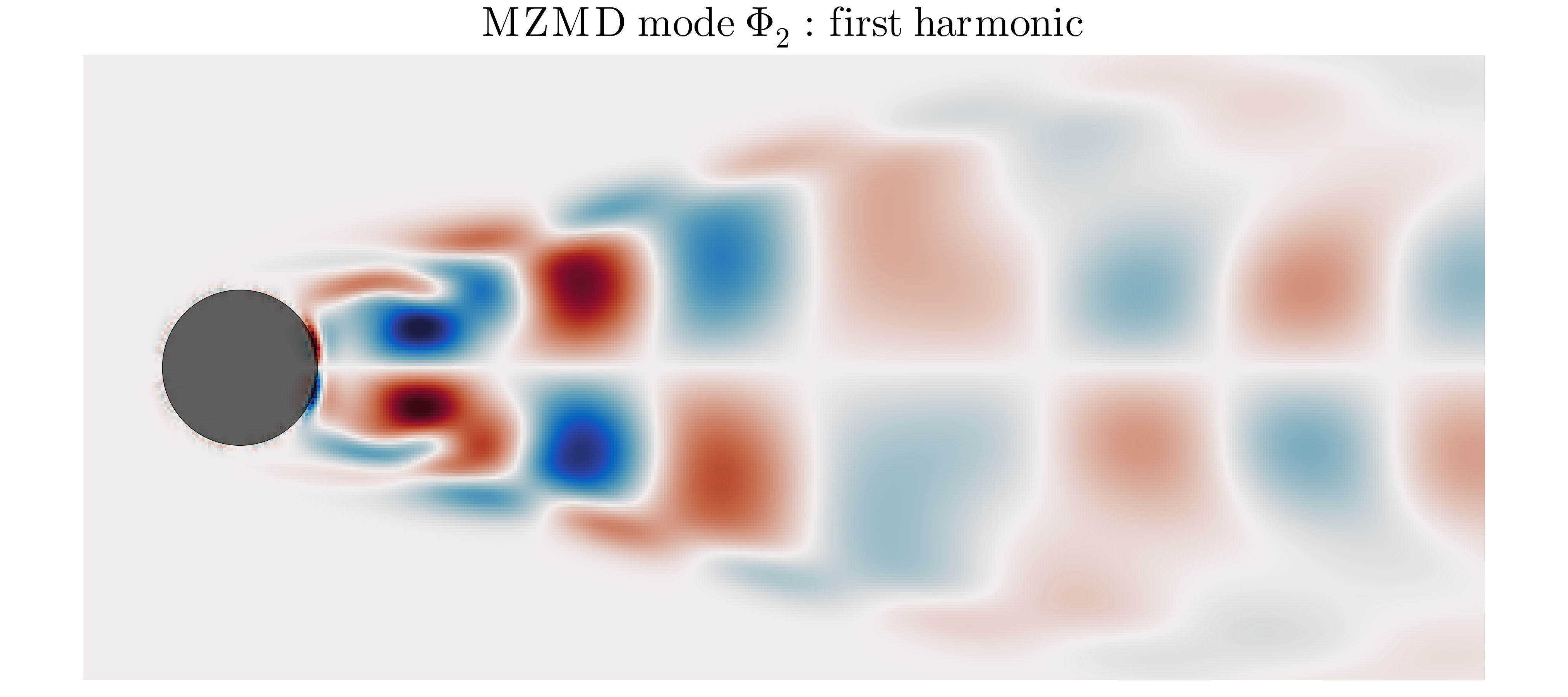}
% \caption{MZMD eigenvalues}
\end{subfigure}

\begin{subfigure}[]{0.3\textwidth}
\centering
\includegraphics[width=1\textwidth]{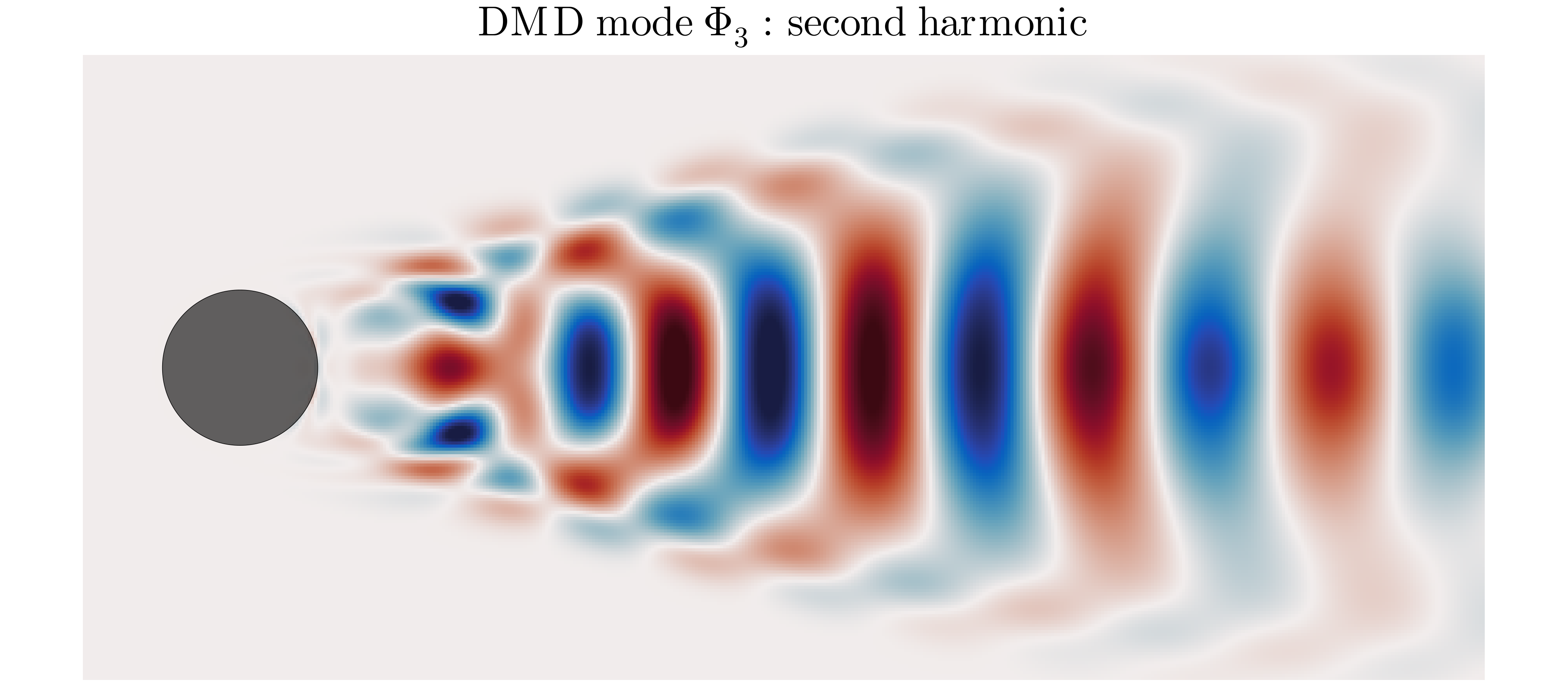}
\caption{DMD modes}
\label{fig:dmd_modes_cyl}
\end{subfigure}
\begin{subfigure}[]{0.3\textwidth}
\centering
\includegraphics[width=1\textwidth]{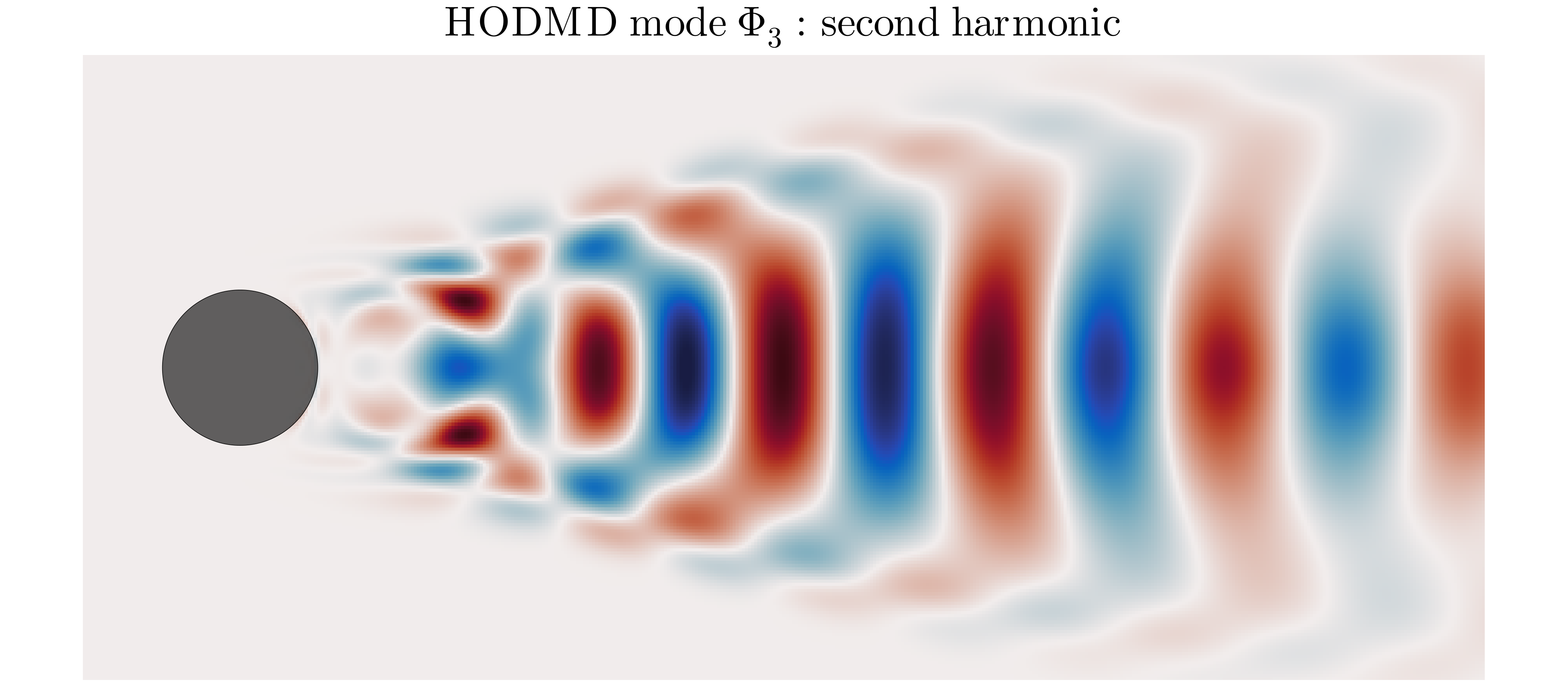}
\caption{HODMD modes}
\label{fig:hodmd_modes_cyl}
\end{subfigure}
\begin{subfigure}[]{0.3\textwidth}
\centering
\includegraphics[width=1\textwidth]{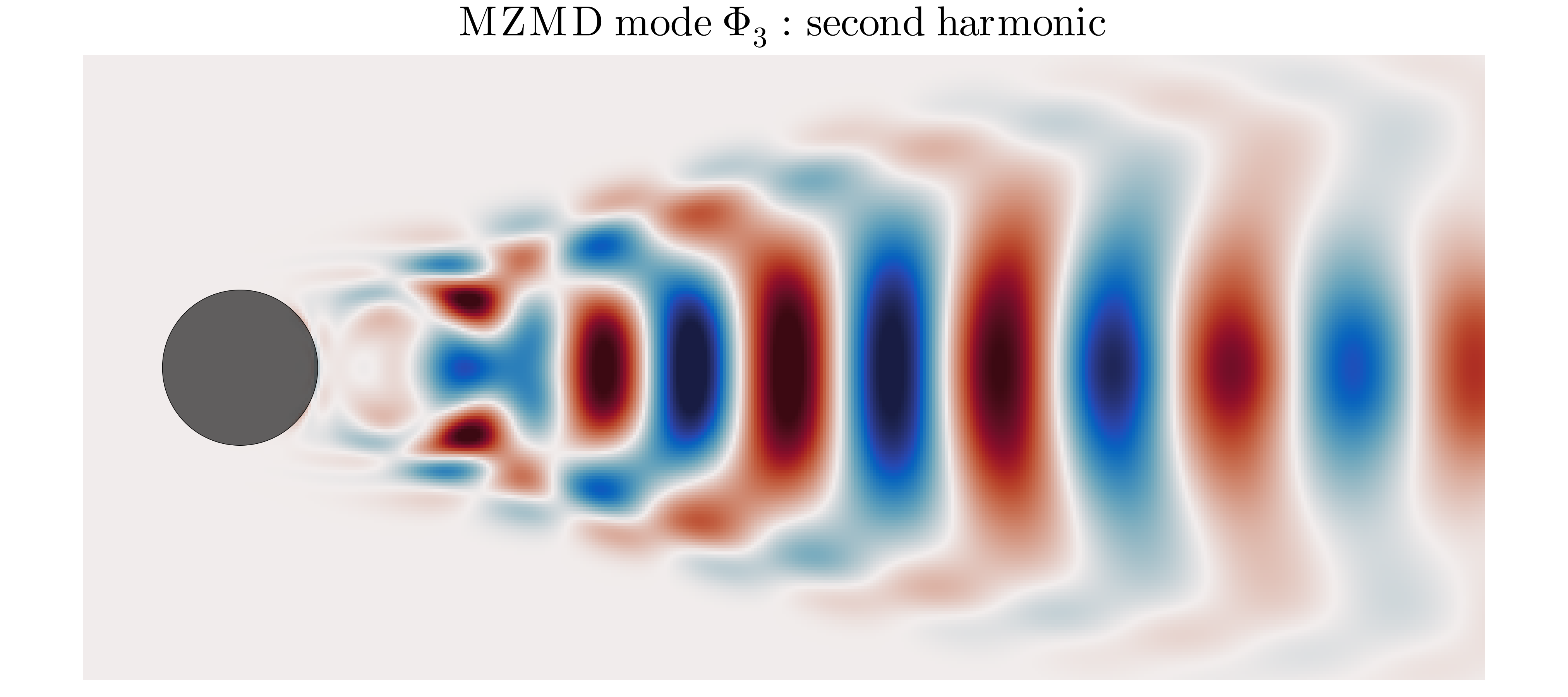}
\caption{MZMD modes}
\label{fig:mzmd_modes_cyl}
\end{subfigure}
\caption{$Re=200$. Comparing modes of (\subref{fig:dmd_modes_cyl}) DMD, (\subref{fig:hodmd_modes_cyl}) HODMD, (\subref{fig:mzmd_modes_cyl}) MZMD, with $r = 6$ and 10 memory terms. Contour levels are normalized between $(-1, 1)$. These dominant modes only differ slightly from DMD, demonstrating that adding memory (either MZ or time-delays) only introduces small perturbation to the modes obtained via DMD for this flow. This is to be expected, since DMD is sufficient to capture the dominant periodic modes seen in this oscillatory flow. } 
\label{fig:mode_comparions_cyl_re200}
\end{figure}

\begin{figure}[!htb]
\centering
\begin{subfigure}[]{0.25\textwidth}
\centering
\includegraphics[width=1\textwidth]{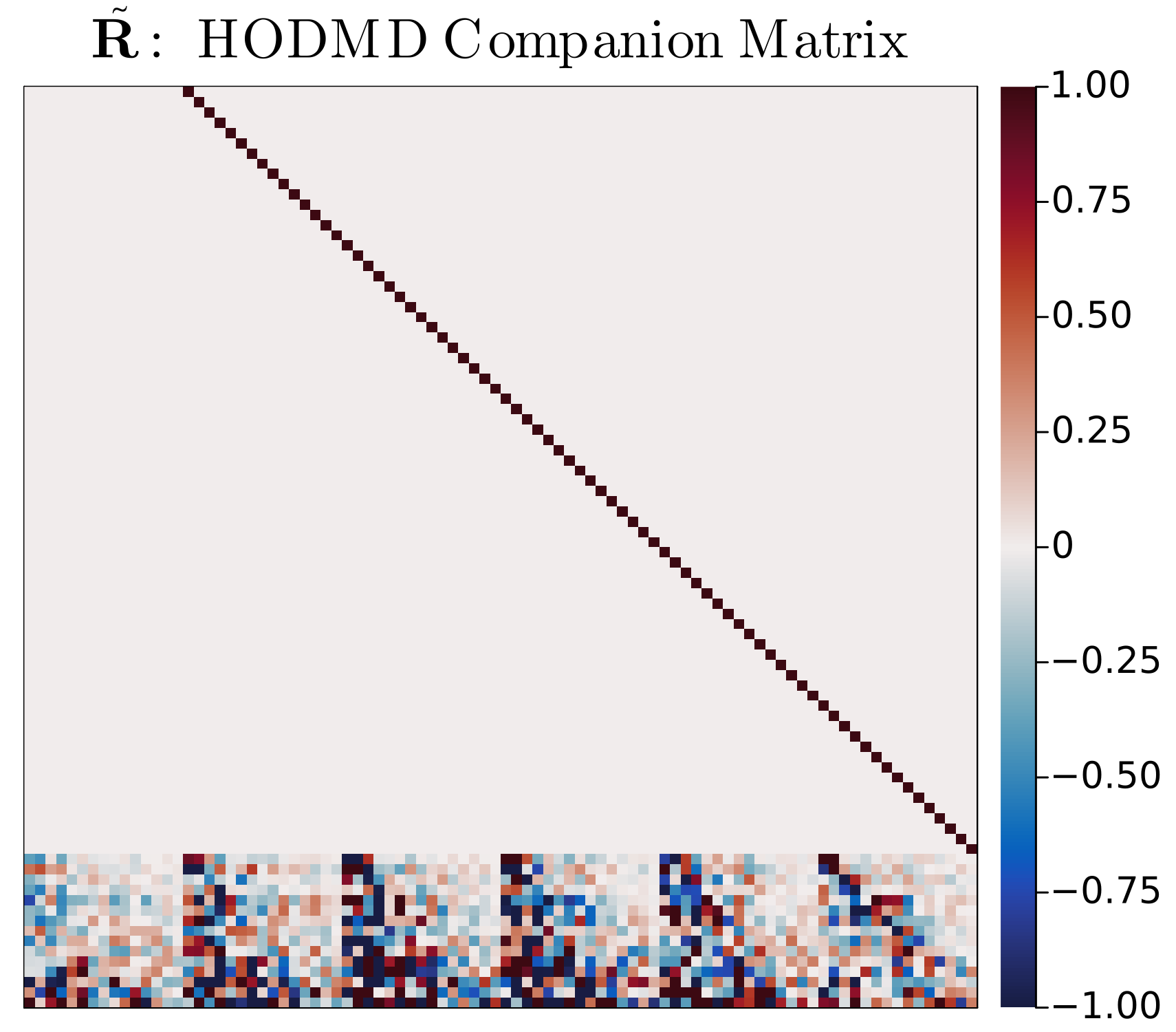}
\caption{Full second $SVD$}
\label{fig:rtil_full}
\end{subfigure}
\centering
\begin{subfigure}[]{0.25\textwidth}
\centering
\includegraphics[width=1\textwidth]{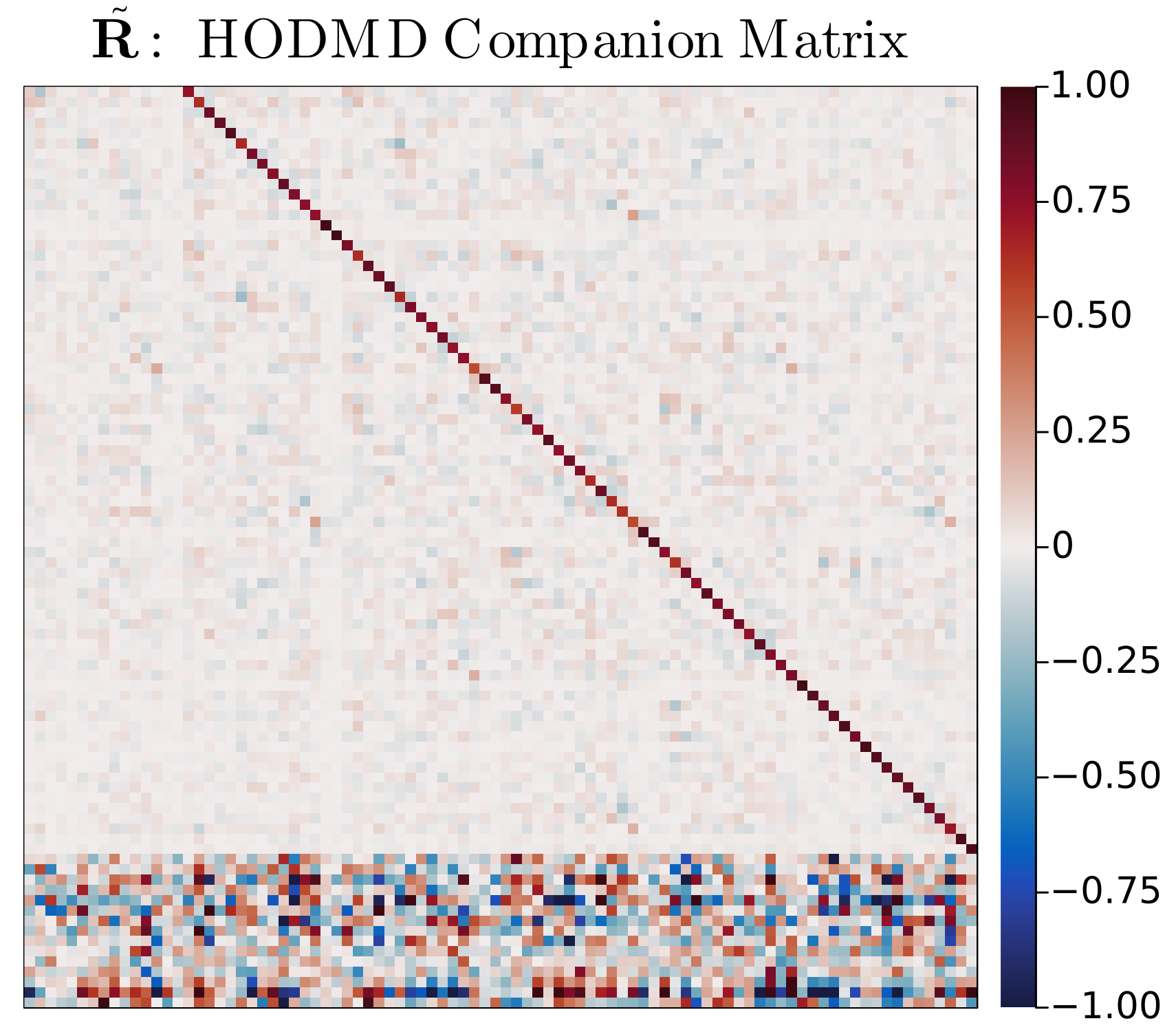}
\caption{Truncated second $SVD$}
\label{fig:rtil_trunc}
\end{subfigure}

\begin{subfigure}[]{0.3\textwidth}
\centering
\includegraphics[width=1\textwidth]{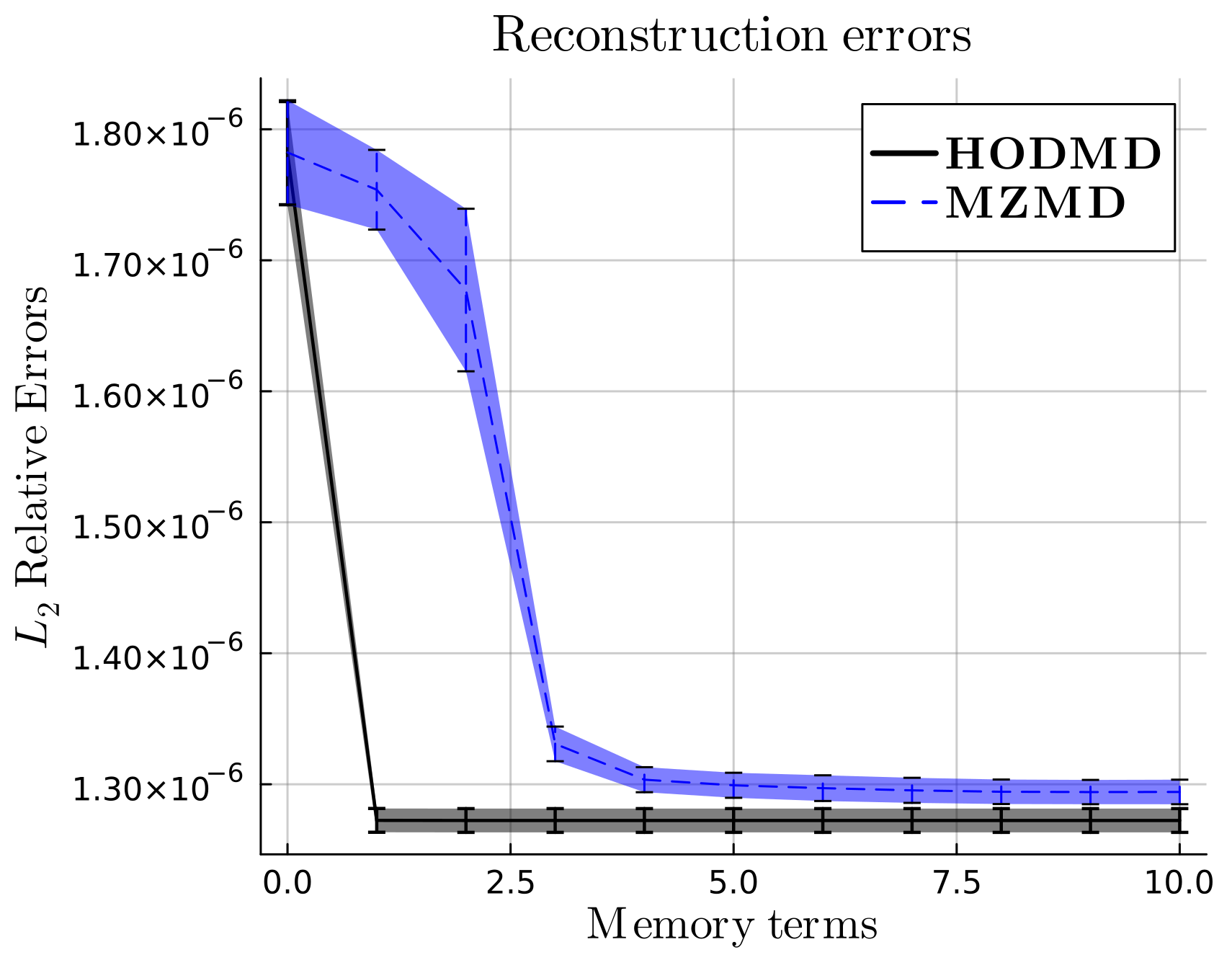}
\caption{}
\label{fig:recon_err_cyl}
\end{subfigure}
\begin{subfigure}[]{0.3\textwidth}
\centering
\includegraphics[width=1\textwidth]{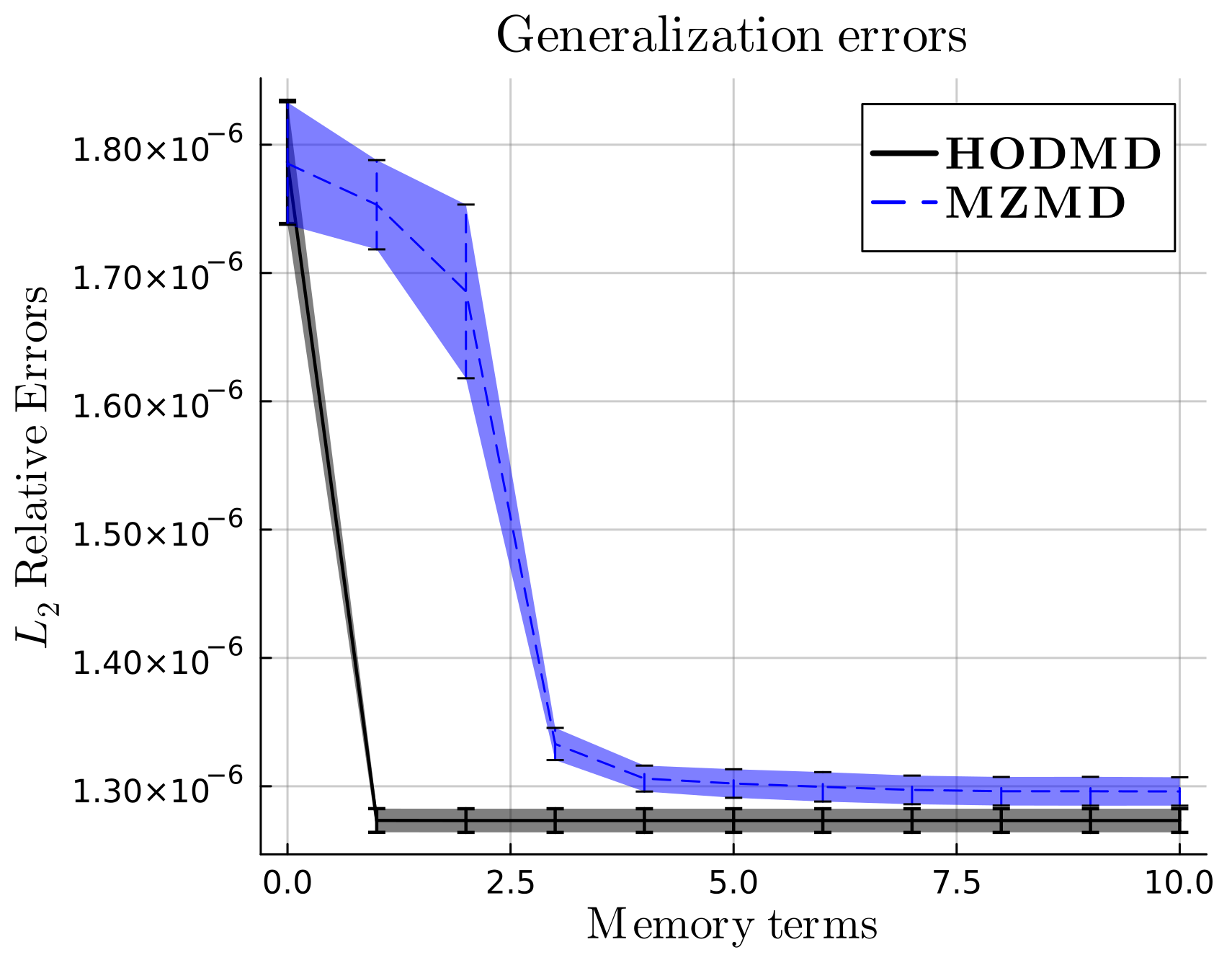}
\caption{}
\label{fig:gen_err_cyl}
\end{subfigure}
\centering
\begin{subfigure}[]{0.3\textwidth}
\centering
\includegraphics[width=1\textwidth]{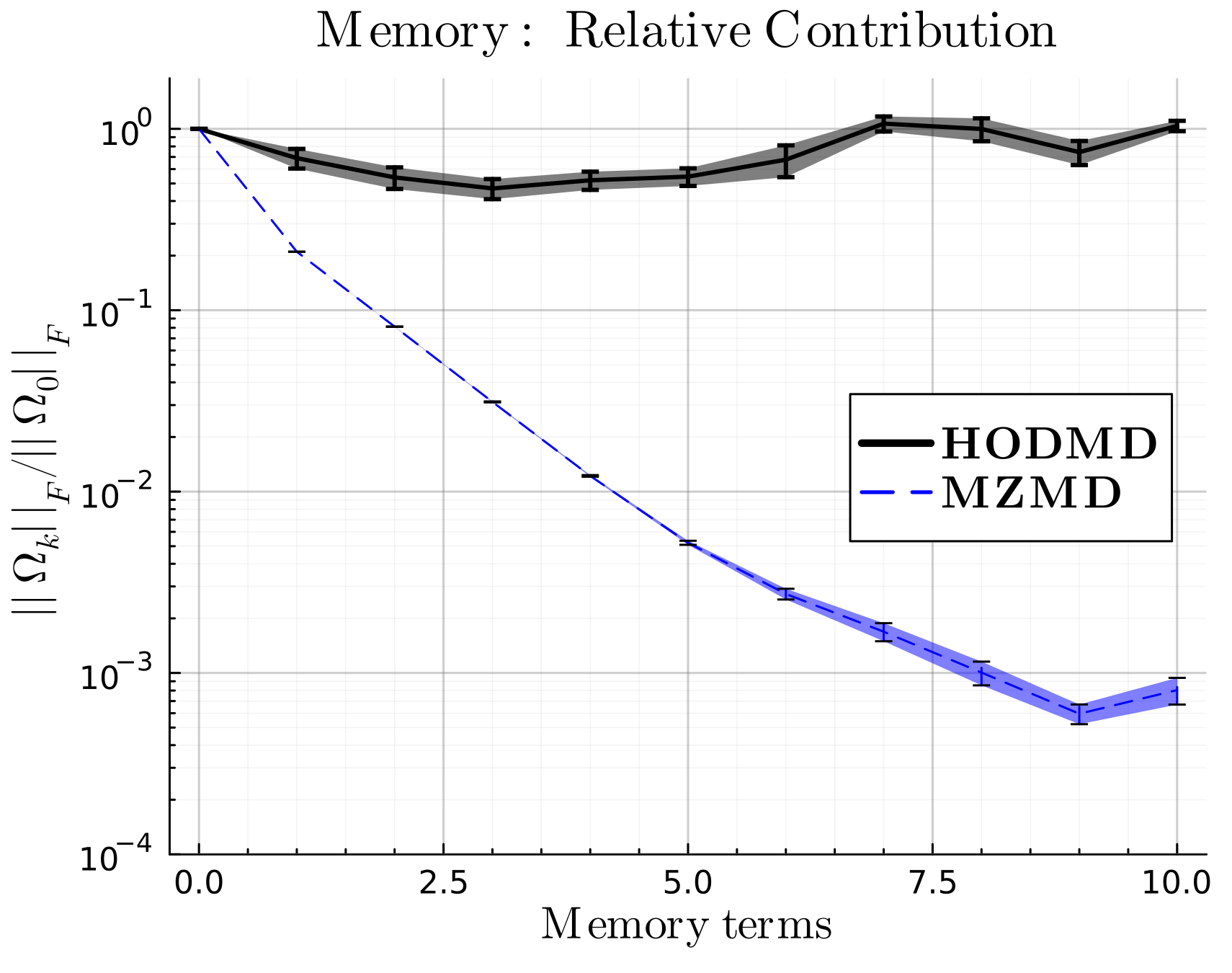}
\caption{}
\label{fig:mem_decay_cyl}
\end{subfigure}
\caption{(\subref{fig:rtil_full}, \subref{fig:rtil_trunc}) compares block companion matrix of HODMD obtained via full vs truncated second $SVD$, demonstrating that HODMD model assumption (\ref{eq:hodmd_assumption}) can be broken if the second SVD is truncated. Comparing reconstruction (\subref{fig:recon_err_cyl}) and generalization (\subref{fig:gen_err_cyl}) errors up to 3 advection time scales using $20$ samples initial conditions from DMD, HODMD, and MZMD with $r=15$. Both memory methods provide a slight improvement over DMD and converge with increased memory.  (\subref{fig:mem_decay_cyl}) compares the contributions from memory terms in HODMD and MZMD using 20 samples. The MZMD memory terms contributions decay with the order of the term, unlike HODMD, where the contributions of the higher order terms remain of order 1.} 
\label{fig:R_tilde_cylinder}
\end{figure}

\begin{figure}[!htb]
\begin{subfigure}[]{0.3\textwidth}
\centering
\includegraphics[width=1\textwidth]{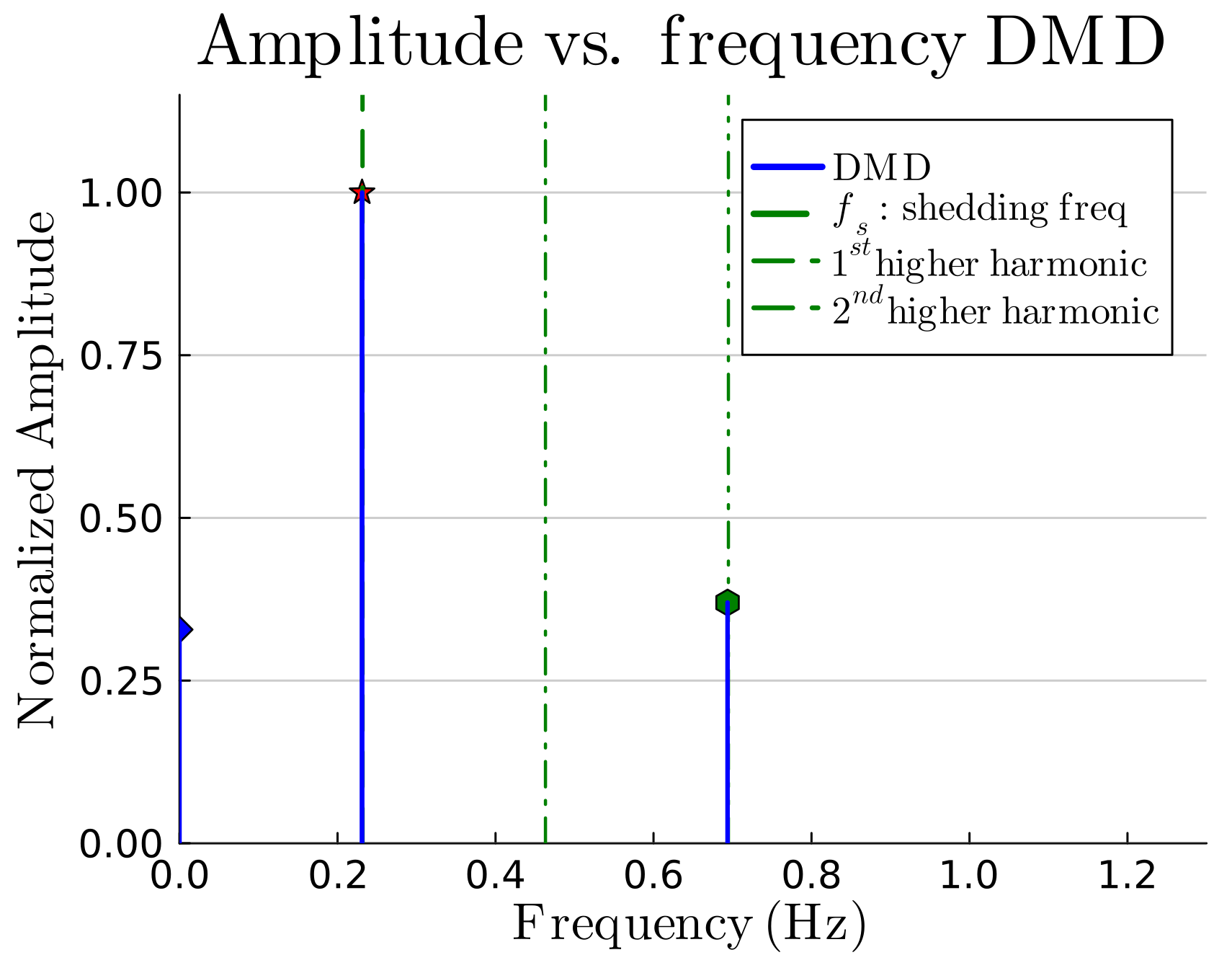}
\caption{}
\label{fig:dmd_amp_reduced}
\end{subfigure}
\centering
\begin{subfigure}[]{0.3\textwidth}
\centering
\includegraphics[width=1\textwidth]{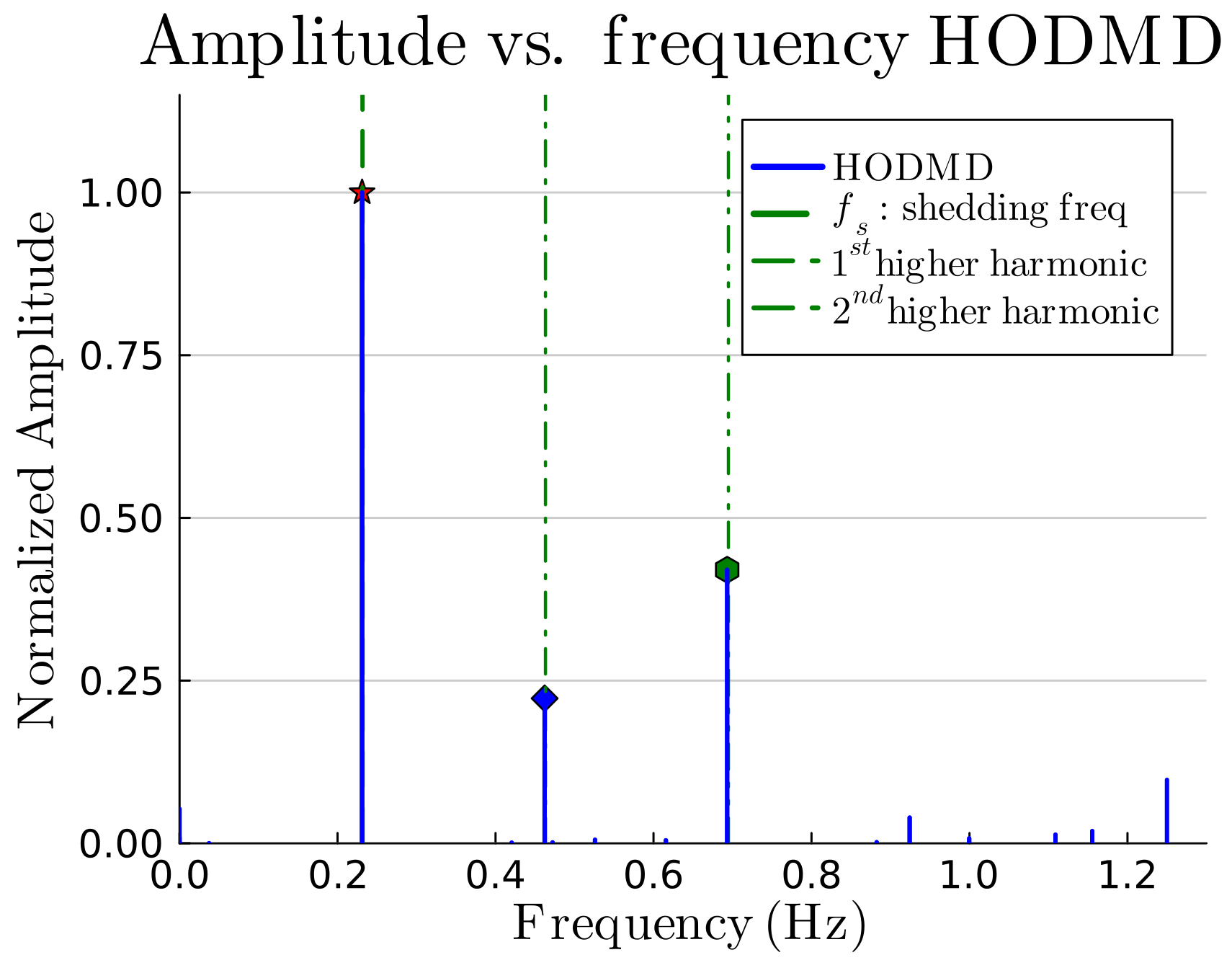}
\caption{}
\label{fig:hodmd_amp_reduced}
\end{subfigure}
\begin{subfigure}[]{0.3\textwidth}
\centering
\includegraphics[width=1\textwidth]{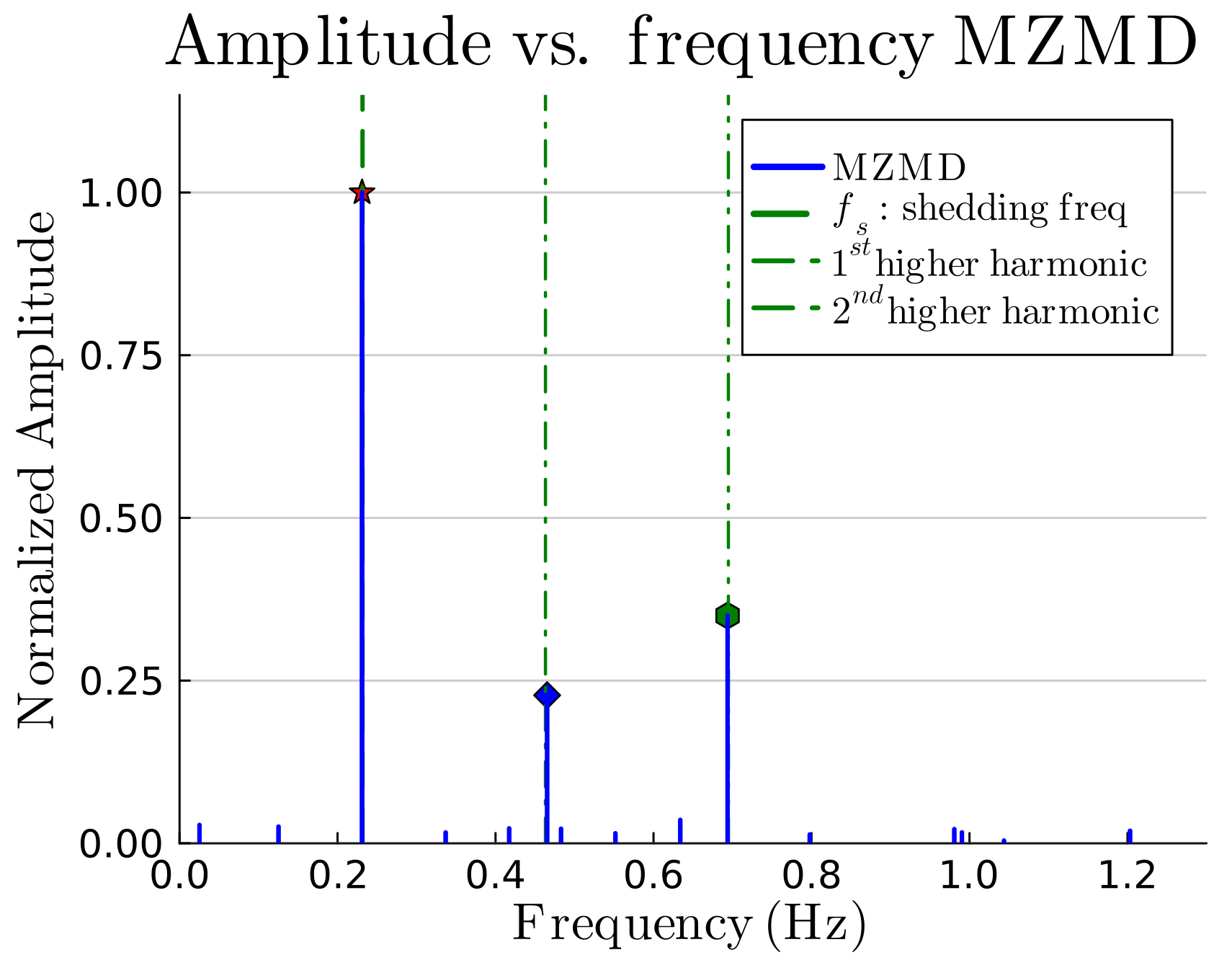}
\caption{}
\label{fig:mzmd_amp_reduced}
\end{subfigure}
\caption{Normalized amplitude vs frequency of DMD (\subref{fig:dmd_amp_reduced}), HODMD (\subref{fig:hodmd_amp_reduced}) and MZMD (\subref{fig:mzmd_amp_reduced}), fixing $r=5$, with 6 memory terms. Each captures the dominant shedding frequency, however, only HODMD and MZMD can capture both higher order harmonics. This demonstrates that adding memory can capture missing dominant higher harmonics which are not obtained from a truncated (reduced) DMD.} 
\label{fig:mzmd_hodmd_evals}
\end{figure}

\begin{figure}[!htb]
\centering
\begin{subfigure}[]{0.4\textwidth}
\centering
\includegraphics[width=1\textwidth]{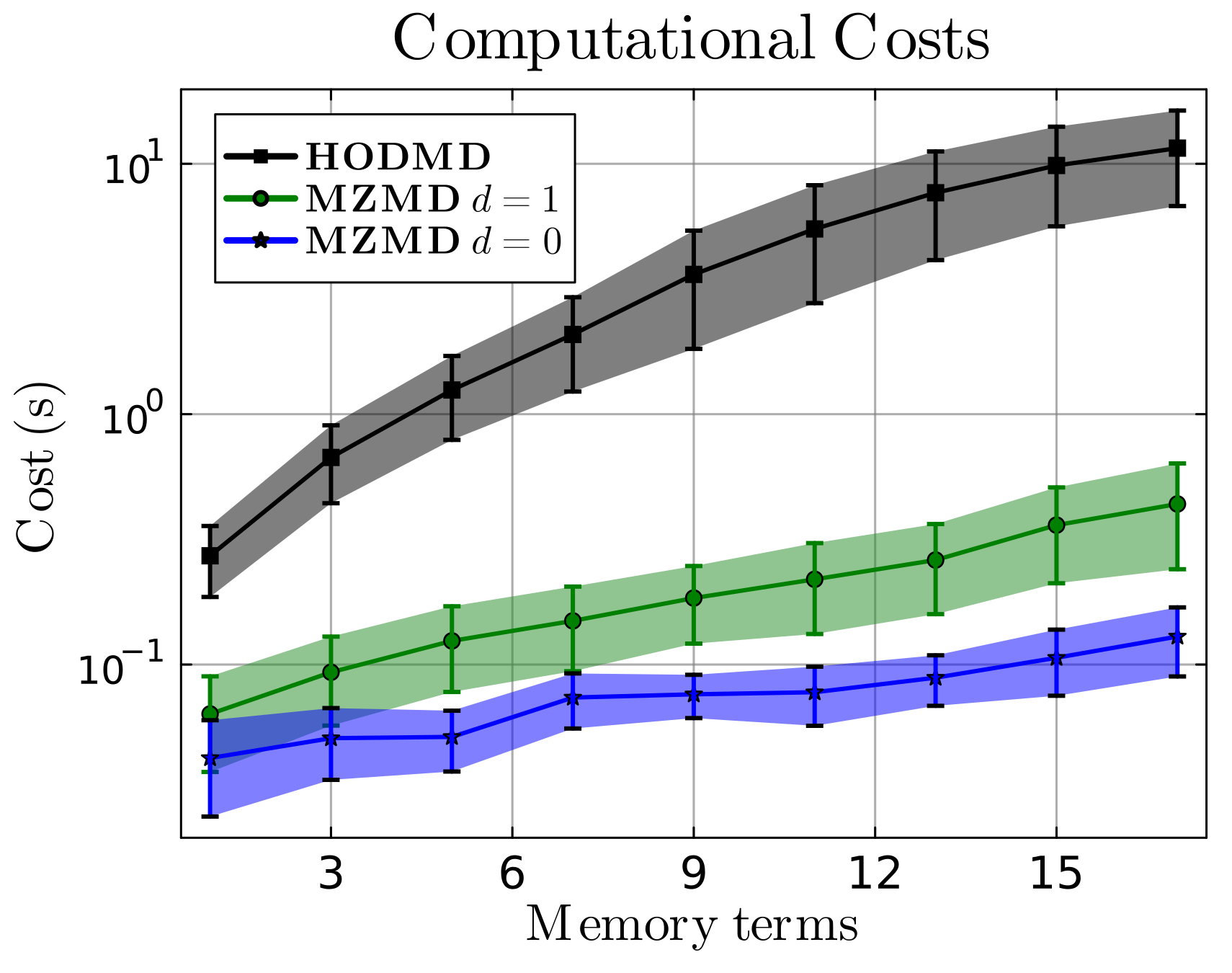}
\caption{Computational Costs}
\label{fig:costs}
\end{subfigure}
\begin{subfigure}[]{0.4\textwidth}
\centering
\includegraphics[width=1\textwidth]{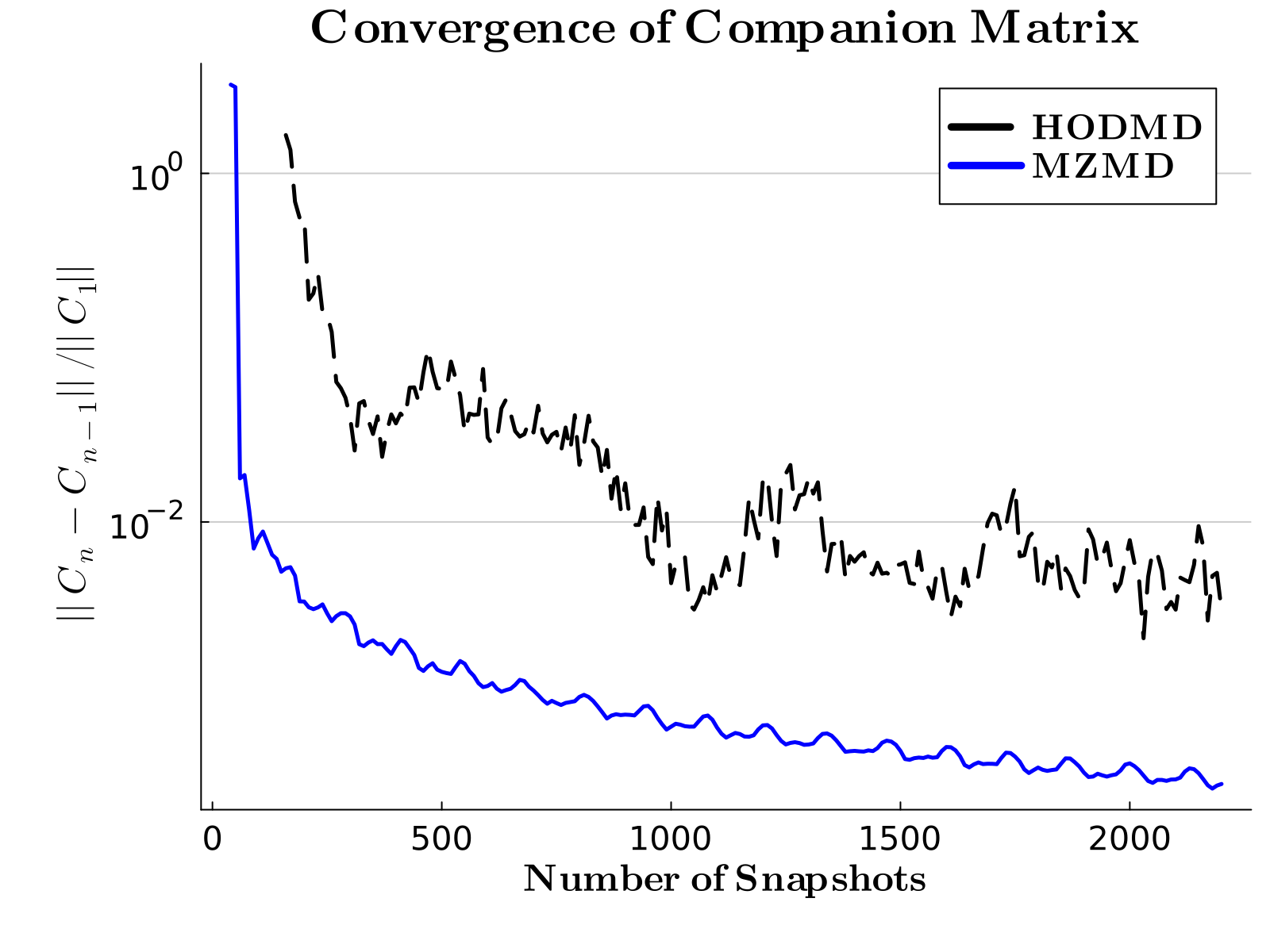}
\caption{Convergence of operators}
\label{fig:conv}
\end{subfigure}
\caption{(\subref{fig:costs}) The computational costs between MZMD (blue $\star$), MZMD with one time delay embedding (green $\circ$), and HODMD (black $\square$) over 30 independent samples of the 2D flow over a cylinder. (\subref{fig:conv}) Measuring the convergence of the learned operators with respect to the amount of training data used, showing that MZMD requires less training data for a similar level of convergence of the respective block companion matrix.} 
\label{fig:costs_convergence}
\end{figure}

\end{document}